\newcolumntype{L}[1]{>{\raggedright\arraybackslash}p{#1}}
\newcolumntype{C}[1]{>{\centering\arraybackslash}p{#1}}
\newcolumntype{R}[1]{>{\raggedleft\arraybackslash}p{#1}}
\long\def\comment#1{}
\newcommand{\nop}[1]{}
\newcommand{\figureCaptionMargin}{\vspace{-2ex}}
\newcommand{\figureBelowMargin}{\vspace{-2ex}}
\newtheorem{theorem}{\bf Theorem}[section]
\newtheorem{lemma}{\bf Lemma}[section]
\newtheorem{example}{\bf Example}
\theoremstyle{remark}
\theoremstyle{definition}
\newtheorem{definition}{\bf Definition}
\title{CoinMagic: A Differential Privacy Framework for Ring Signature Schemes}
\author{
	{Wangze Ni{\small $~^{\dagger}$}, Han Wu{\small $~^{*}$}, Peng Cheng{\small $~^{*}$}, Lei Chen{\small $~^{\dagger}$}, Xuemin Lin{\small $~^{\#}$}, 
	Lei Chen{\small $~^{\ddagger}$}, Xin Lai{\small $~^{\ddagger}$}, Xiao Zhang{\small $~^{\ddagger}$}
	} \\
	\fontsize{10}{10}\selectfont\itshape
	$~^{\dagger}$The Hong Kong University of Science and Technology, Hong Kong, China\\
	\fontsize{9}{9}\selectfont\ttfamily\upshape
	\{wniab, leichen\}@cse.ust.hk \\
	\fontsize{10}{10}\selectfont\itshape
	$~^{*}$East China Normal University, Shanghai, China\\
	\fontsize{9}{9}\selectfont\ttfamily\upshape
	han.wu@stu.ecnu.edu.cn, pcheng@sei.ecnu.edu.cn \\
	\fontsize{10}{10}\selectfont\itshape
	$~^{\#}$The University of New South Wales, Australia\\
	\fontsize{9}{9}\selectfont\ttfamily\upshape
	lxue@cse.unsw.edu.au\\
	\fontsize{10}{10}\selectfont\itshape
	$~^{\ddagger}$ Shenzhen Onething Technologies Co., Ltd., Shenzhen, China\\
	\fontsize{9}{9}\selectfont\ttfamily\upshape
	\{leichen, laixin, zhangxiao\}@onething.net \\
}
\begin{document}

	\maketitle

	\begin{abstract}

By allowing users to obscure their transactions via including ``mixins"  (chaff coins), ring signature schemes have been widely used to protect a sender's identity of a transaction in privacy-preserving blockchain systems, like Monero and Bytecoin. 
 However, recent works point out that the existing ring signature scheme is vulnerable to the ``chain-reaction" analysis (i.e., the spent coin in a given ring signature can be deduced through elimination). Especially, when the diversity of mixins is low, the spent coin will have a high risk to be detected. To overcome the weakness, the ring signature should be consisted of a set of mixins with high diversity and produce observations having ``similar'' distributions for any two coins. In this paper, we propose a notion, namely $\epsilon$-coin-indistinguishability ($\epsilon$-CI), to formally define the ``similar'' distribution guaranteed through a differential privacy scheme. Then, we formally define the CI-aware mixins selection problem with disjoint-superset constraint (CIA-MS-DS), which aims to find a mixin set that has maximal diversity and satisfies the constraints of $\epsilon$-CI and the budget. In CIA-MS-DS, each ring signature is either disjoint with or the superset of its preceding ring signatures. We prove that CIA-MS-DS is NP-hard and thus intractable. To solve the CIA-MS-DS problem, we propose two approximation algorithms, namely the Progressive Algorithm and the Game Theoretic Algorithm, with theoretic guarantees. Through extensive experiments on both real data sets and synthetic data sets, we demonstrate the efficiency and the effectiveness of our approaches.

\end{abstract}

	\section{Introduction}

Recently, with the success of cryptocurrencies (e.g., Bitcoin~\cite{nakamoto2008bitcoin}, Ethereum~\cite{wood2014ethereum}) and blockchain products (e.g., Thunderchain~\cite{linkchain}), blockchain technologies have attracted much attention from both academia (e.g., database community~\cite{xu2019vchain}~\cite{zhang2019gem}) and industry (e.g., supply chain management~\cite{korpela2017digital}, healthcare~\cite{azaria2016medrec} and bank~\cite{guo2016blockchain}). In general, blockchain is a secure data structure maintained by untrusted peers in a decentralized P2P network. It has many valuable features such as transparency, provenance, fault tolerance, and authenticity.

\begin{figure}[t!]\vspace{-1ex}
	\centering
	\includegraphics[scale=0.35]{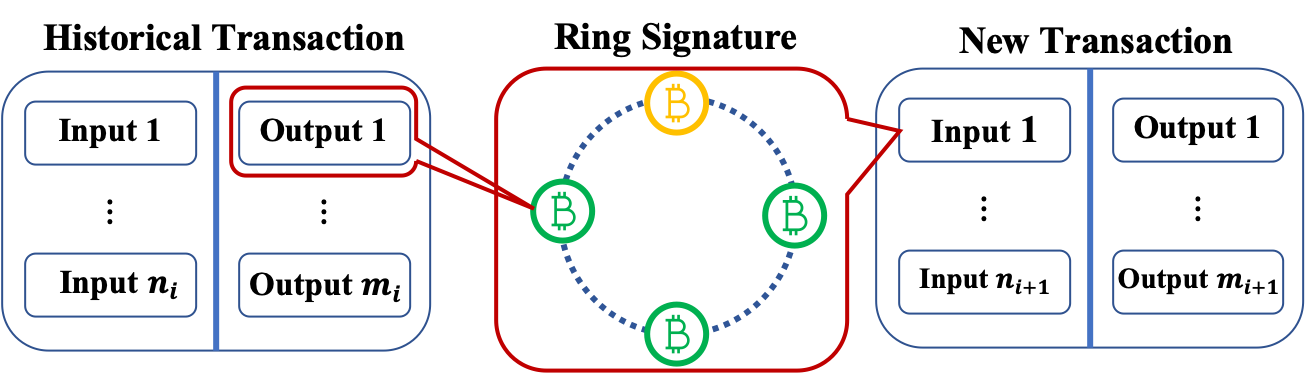}\vspace{-3ex}
	\caption{\small The Unspent Transaction Output (UTXO) model}
	\label{fig:coins}\vspace{-5ex}
\end{figure}

\begin{figure*}[t]\vspace{-4ex}
	\subfigure[][{The permutation with $\tau_1$}]{
		\scalebox{0.45}[0.4]{\includegraphics{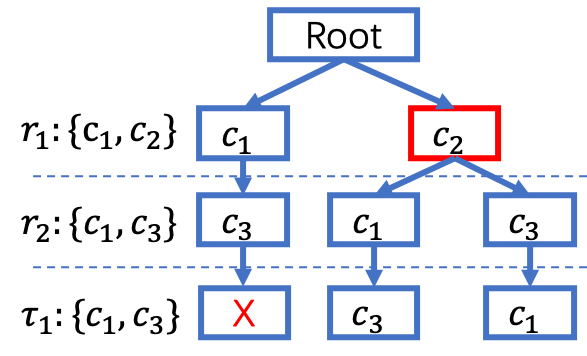}}
		\label{fig:solution1}}
	\subfigure[][{The permutation with $\tau_2$}]{
		\scalebox{0.4}[0.4]{\includegraphics{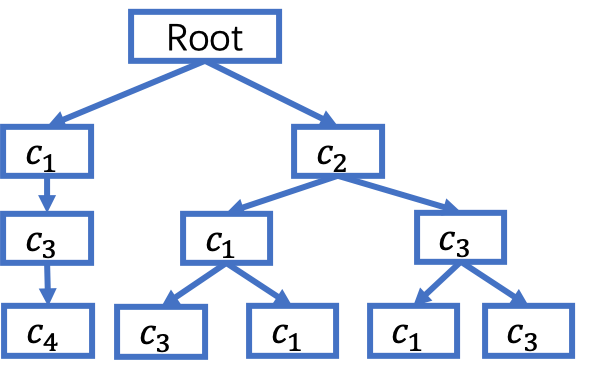}}
		\label{fig:solution2}}
	\subfigure[][{The permutation with $\tau_3$}]{
		\scalebox{0.35}[0.4]{\includegraphics{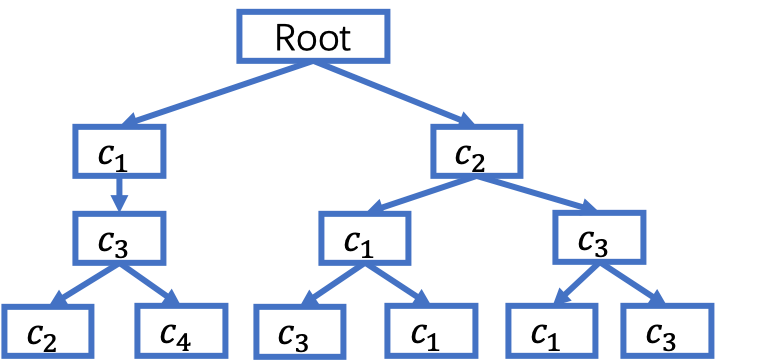}}
		\label{fig:solution3}}
	\subfigure[][{The permutation with $\tau_4$}]{
		\scalebox{0.45}[0.4]{\includegraphics{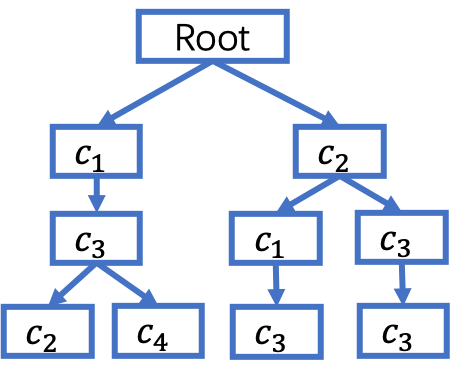}}
		\label{fig:solution4}}
	\vspace{-2.5ex}
	\caption{Motivation Example.}\vspace{-4.5ex}
	\label{fig:motivation}
\end{figure*}

However, transparency property of blockchain can initiate privacy problem, that is, in many real-world applications where users want to keep the transaction information by themselves.
For instance, in a trading system, a trader may hope to conceal the information of her/his trade partner who received (sent) money from (to) her/him. Furthermore, the transaction information can be easily linked to a variety of other information that an individual usually wishes to protect. For example, in a blockchain-based ride-hailing system, like MVL~\cite{MVL}, there are massive users' location data. By collecting and processing transaction data, it is possible to infer the user's personal information such as addresses of home~\cite{andres2012geo}.

To protect privacy, researchers have proposed some privacy definitions~\cite{okamoto1991universal}~\cite{van2013cryptonote} and privacy-preserving methods~\cite{van2013cryptonote}~\cite{sun2017ringct}~\cite{yuen2019ringct}. In~\cite{okamoto1991universal}, T. Okamoto and K. Ohta introduced that ``privacy'' must be one criterion of ideal electronic cash, which means  ``the relationship between the user and his purchases must be untraceable by anyone". Van Saberhagen proposed that ``Untraceability'' must be satisfied for a fully anonymous electronic cash model~\cite{van2013cryptonote}, which refers to ``for each incoming transaction all possible senders are equiprobable". To satisfy the untraceability, \textit{ring signature} (RS) schemes were widely implemented in famous privacy-preserving blockchain systems, like Monero~\cite{Monero} and Bytecoin~\cite{Bytecoin}. Users can utilize RS schemes to obscure transactions by including chaff coins, called ``\textbf{mixins}", along with coins they will spend. As shown in Figure~\ref{fig:coins}, in UTXO model blockchains, there are multiple inputs and outputs in a transaction. Each input is represented by a RS, and each output is a coin. Each RS contains a coin which is spent in the transaction, marked in yellow color, and many other coins as mixins marked in green color. Each coin can only be spent in a RS. The diversity of a RS's mixins is measured by the number of historical transactions outputting them. Since the RS scheme's efficiency (generation and verification time) affects its verification time and transaction fee, researchers are motivated to improve its efficiency from $\mathbb{O}(n)$ in~\cite{van2013cryptonote} to $\mathbb{O}(\log n)$ in~\cite{yuen2019ringct}, where $n$ is the number of mixins in each RS. 

In contrast to extensive researches in the RS scheme's efficiency, the researches of its effectiveness on preserving privacy remain rather scarce. Currently, the effectiveness of a RS is roughly estimated as the number of mixins, like many k-anonymity methods~\cite{sweeney2002k}~\cite{li2007t}, and the current RS scheme randomly picks mixins~\cite{Monero}. 
The current RS scheme has two pivotal shortcomings: \textbf{a)} it is vulnerable to ``chain-reaction'' analysis~\cite{moser2018empirical}; \textbf{b)} it lacks of consideration on the diversity of mixins~\cite{chervinski2019floodxmr}. 

For the first weakness of the current RS scheme, through the ``chain-reaction'' analysis, it is possible to infer spent coins in RSs by leveraging the traffic flow and eliminating the mixins which must have been spent in other RSs. A RS $r$ can be the significant affected by the related RSs which have overlapped coins with $r$. We illustrate the ``chain-reaction'' analysis in Example \ref{ex:analysis}. For simplicity, in the remaining paper, we consider a RS as the union of mixins and the spent coin. \vspace{-1.5ex}

\begin{example}[``Chain-Reaction'' Analysis]
	\label{ex:analysis}
	There are four RSs, $r_1$ = $\{c_1, c_2, c_3\}$, $r_2$ = $\{c_1, c_2\}$, $r_3$ = $\{c_1, c_2\}$ and $r_4$ = $\{c_1, c_2, c_3, c_4$, $c_5, c_6, c_7\}$, where $c_1\sim c_7$ are 7 coins. Then, for $c_1$ and $c_2$, $r_2$ produces \textbf{observations} to witnesses with high similarity, since the probability of each coin being spent in $r_2$ is the same. The observation of a coin in a RS is the probability of the coin being spent in the RS, when the coin is spent. But for $c_2$ and $c_3$, $r_1$ produces observations with low similarity, since it is deduced that $c_3$ is the spent coin in $r_1$. Since $r_2$ and $r_3$ only contain $c_1$ and $c_2$, $c_1$ or $c_2$ is either spent in $r_2$ or $r_3$ (although detailed matches are unknown), and $c_3$ must be the spent coin in $r_1$.
	Besides, $r_4$ has 3 useless mixins (i.e., $c_1, c_2, c_3$), which increases $r_4$'s transaction fee. 
\end{example}\vspace{-1.5ex}

As the second weakness of the current RS scheme, it does not consider the impact of the diversity of mixins~\cite{chervinski2019floodxmr}. The diversity of a RS's mixins is measured by the number of historical transactions outputting them. When the diversity of mixins is low, the RS's effectiveness is low. In Example~\ref{ex:analysis}, if $c_5$ $\sim$ $c_7$ are outputted by the same historical transaction $t$ ($c_5$ $\sim$ $c_7$ were spent in $t$) and $c_4$ is the spent coin in $r_4$, then the owner of the historical transaction $t$, who is not the user that spends coin $c_4$, can deduce that $c_4$ is the spent coin in $r_4$ (as $c_1$ $\sim$ $c_3$ and $c_5\sim c_7$ are already detected).

To avoid the ``chain-reaction'' analysis, we need to consider the related RSs' impact and make RSs produce observations with ``similar'' distribution for any two coins. To overcome the second defect, we need to find a mixin set with high diversity. Since the transaction fee is proportional to the number of mixins, the users usually want to restrain the number of mixins within limited budget. Thus, we need to pick a set of mixins with high diversity and effectively resist ``chain-reaction'' analysis based attacks under the constraint of the budget.
To tackle this problem, two challenges need to be addressed: (1) for any two coins, how to measure the ``similarity level" of the observations; and (2) how to pick a desired set of mixins to maximize their diversity under the constraint of budget. 

In this paper, we propose a novel differential privacy concept, namely \underline{$\epsilon$}-\underline{c}oin-\underline{i}ndistinguishability ($\epsilon$-CI) to measure a RS's effectiveness. A RS's sender has $\epsilon$-privacy if any two coins in the RS can produce observations with ``similar" distributions, where the ``level of similarity" depends on $\epsilon$. The smaller $\epsilon$ is, the higher the privacy is. 
In the sequel, we illustrate the problem in a motivation example.\vspace{-1ex}

\begin{example}[The Coin-Indistinguishability-Aware Mixins Selection Problem]
	\label{ex:CIA-MS}
	Suppose there are four coins ($c_1, c_2, c_3, c_4$) and two RSs, $r_1 = \{c_1, \underline{c_2}\}$ and $r_2 = \{\underline{c_1}, c_3\}$. The spent coins in RSs are underlined. Among four coins, $c_1$ and $c_4$ are the same historical transaction's outputs while $c_2$ and $c_3$ are outputs of another two historical transactions. The budget is 3. Assume the required CI is very relaxed and only requires that the spent coin in a RS cannot be inferred. Now we want to generate a RS to spend $c_3$. As shown in Figure \ref{fig:motivation}, the permutation of possible spent coins under given RSs can be presented with a permutation tree, where the nodes in each level indicate the possible spent coins for the corresponding RSs.
	
	The first solution is using RS $\tau_1 = \{c_1, c_3\}$.  As shown in Fig. \ref{fig:solution1}, although $\tau_1$ cannot be inferred, it makes  witnesses  easily deduce that $c_2$ is the spent coin in $r_1$. For simplify, in the rest of this paper, we will only present the valid possible nodes of the permutation tree and ignore the corresponding RSs.
	
	The second solution is using RS $\tau_2 = \{c_1, c_3, c_4\}$. As shown in Fig. \ref{fig:solution2}, the CIs of $\tau_2$, $r_1$, and $r_2$ are preserved. However, its diversity is not large. Since $c_1$ and $c_4$ are the outputs of the same historical transaction, $\tau_3$'s diversity is only 2.
	
	The third solution is using RS $\tau_3 = \{c_1, c_2, c_3, c_4\}$. As shown in Fig. \ref{fig:solution3}, the CIs of $\tau_3$, $r_1$, and $r_2$ are preserved. Its diversity is large, which is 3. However, $\tau_3$ does not meet the budget constraint, since its cardinality is 4, which is larger than the budget.
	
	A good solution is using RS $\tau_4 = \{c_2, c_3, c_4\}$. As shown in Fig. \ref{fig:solution4}, the CIs of $\tau_4$, $r_1$, and $r_2$ are preserved. Besides, its diversity is large and it meets the budget constraint.
\end{example}\vspace{-1ex}

In this paper, we first formally define 
the \underline{c}oin-\underline{i}ndistinguishability-\underline{a}ware \underline{m}ixins \underline{s}election  with \underline{d}isjoint-\underline{s}uperset constraint (CIA-MS-DS) problem, which aims to find a mixin set that meets the required CI constraint, as well as the budget constraint, and has a maximal diversity. Moreover, each RS is either disjoint with or the superset of its preceding RSs. We prove the CIA-MS-DS problem is NP-hard through a reduction from the 01-knapsack problem~\cite{2013approximation} and we propose two approximation algorithms with theoretic guarantees. The Progressive Algorithm gradually narrows down the candidate mixin set one constraint by one constraint. The Game theoretic Algorithm models the problem as a game and let each mixin computes the right to be selected in the new RS.

To our knowledge, this is the first study to apply differential-privacy schemes on Blockchain systems and formally estimate a RS's effectiveness. Our work provides new insights into how to pick mixins in RS schemes. Specifically, we make the following contributions: \vspace{-1.5ex}
\begin{itemize}[leftmargin=*]
	\item We define the notion of $\epsilon$-coin-indistinguishability by applying differential-privacy schemes on Blockchain systems in Section～\ref{sec:CI}. \vspace{-4ex}
	\item We formally define the \underline{c}oin-\underline{i}ndistinguishability-\underline{a}ware \underline{m}ixins \underline{s}election with \underline{d}isjoint-\underline{s}uperset constraint (CIA-MS-DS) problem and give the proof of its hardness in Section \ref{sec:problemDefinition}.\vspace{-2ex}
	\item We propose two approximation algorithms, the Progressive Algorithm and the Game Theoretic Algorithm, with theoretic guarantees for the CIA-MS-DS problem in Section~\ref{sec:dp} and Section~\ref{sec:game} respectively.\vspace{-2ex}
	\item We conduct extensive experiments on both real and synthetic data sets and show efficiency as well as the effectiveness of our proposed solutions in Section \ref{sec:experimental}. \vspace{-0.5ex}
\end{itemize}

Besides, we propose a framework, CoinMagic, in Section~\ref{sec:framework}, discuss the related work in Section~\ref{sec:related} and conclude in Section~\ref{sec:conclusion}.
	\section{Coin-Indistinguishability}
\label{sec:CI}

In this section, we formalize the concept of coin-indistinguishability. As aforementioned, coin-indistinguishability is utilized to guarantee that it is hard to distinguish the spent coin in a RS. Our proposal is based on a generalization of differential privacy~\cite{dwork2011differential}. Our notion and technique abstract from the side information of the adversary, such as prior probabilistic knowledge about a RS's mixins. The advantages of the independence from the prior are that: first, the mechanism is designed for any prior probabilistic. Second, and even more important, it is also applicable when we do not have the information of the prior probabilistic of mixins~\cite{andres2012geo}. Because RSs can hide the spent coins, we even do not know whether the selected mixins have been spent or not.

\subsection{The Related RS Set and Mixin Universe}
As shown in Example~\ref{ex:CIA-MS}, a RS's effectiveness is impacted by some related RSs. We first formally define the mixin universe and the related RS set. Suppose $c_\tau$ is the coin that a user wants to spend in a new RS.

\begin{definition}
	(Mixin Universe) The mixin universe $\mathbb{C}_n = \{c_1, c_2,$ $\cdots, c_n\}$ is a set of coins that can be picked up as mixins in a new RS. A coin $c_i$ is the output of a transaction $t_i$.
\end{definition}

In blockchain systems, each coin is a transaction's output. In a transaction, there may be more than one output. In Example~\ref{ex:CIA-MS}, the mixin universe is $\mathbb{C} = \{c_1, c_2,$ $\cdots, c_4\}$.

\begin{definition}
	(Related Ring Signature Set) For a RS $r_f=\mathbb{C}_x\cup \{c_{\tau}\}$, the related RS set $\mathbb{R}_f = \{r_1$, $r_2$, $\cdots, r_m\}$ is a set of RSs with earlier spending timestamps than $r_f$ which contain the coin $c_\tau$ or any coins in $\mathbb{C}_x$.
\end{definition}

For instance, in Example~\ref{ex:CIA-MS}, the related RS set is $\{r_1, r_2\}$.
Since RSs in $\mathbb{R}_f$ may contain common coins with the new RS $r_f$, they may impact the effectiveness of $r_f$ on privacy reserving. For a set of RSs $\mathbb{R}_f$, let $I_f(r_i)$ indicate the position of RS $r_i$ in the ascendingly ordered list of RSs in $\mathbb{R}_f$ sorted according to their spending timestamps. 

\subsection{Probabilistic Model of MIXINS}

Since we do not know whether the mixins are spent or not, we introduce a probabilistic model here. Probabilities come into place in two ways. First, the adversary may have side information about the coins' expense, (e.g., knowing that some coins contained in a RS are not the spent coins since the adversary is the owner of these coins~\cite{chervinski2019floodxmr}). The adversary's side information can be modeled by a prior distribution $\pi(c,r)$ indicating the probability of coin $c$ being spent in  RS $r$.
Second, RS $r$ spending coin $c$ is also a probabilistic event. Since the spent coin is obscured by mixins, any coin in the RS $r$ is likely to be the spent coin.

Since RSs may not be disjoint and each coin can only be spent in a RS, given a set of RSs, there may be more than one possible \textit{spent coin permutation} over the given RS set. Here, we formally define a spent coin permutation as follow.

\begin{definition}[Spent Coin Permutation]
	\label{def:spent coin permutation}
Given a related RS set, $\mathbb{R}_f = \{r_1, r_2,$ $\cdots, r_m\}$, a spent coin permutation $\mathbb{P}$ is an ordered list of coins $[sc_1$, $sc_2,$ $\cdots,$ $sc_m]$, where $sc_i$ is a spent coin of a RS $r_j\in\mathbb{R}_f$ whose $I_f(j)$ is $i$ and $\forall sc_h, sc_g \in \mathbb{P}$, $sc_h \neq sc_g$.
\end{definition}

As shown in Fig. \ref{fig:motivation}, we represent these permutations by a tree structure for easier understanding. 

\begin{definition}[Spent Coin Permutation Tree]
	\label{def:spent coin permutation tree}
 Given a related RS set, $\mathbb{R}_f = \{r_1, r_2,$ $\cdots, r_m\}$, we can generate a spent coin permutation tree $\mathbb{T}_{f}$. Except for the root node, each node in $\mathbb{T}_{f}$ is presented by $\kappa=\langle d_\kappa, \mathbb{P}_\kappa, sc_\kappa \rangle$, where $d_\kappa$ is the node's depth in $\mathbb{T}_{f}$ (root's depth is 0), $\mathbb{P}_\kappa$ is a spent coin permutation $[sc^\kappa_1, sc^\kappa_2$, $\cdots, sc^\kappa_{d_\kappa-1}]$, where each $sc^\kappa_i$ is the spent coin of a RS $r_j\in\mathbb{R}_f$ whose $I_f(j) = i$, and $sc_\kappa$ is the spent coin in this node (i.e., $sc_\kappa=sc^\kappa_{d_\kappa}$). For node $\kappa$, $sc_\kappa \notin \mathbb{P}_\kappa$. For any two nodes, $\kappa$ and $\kappa'$, $\mathbb{P}_{\kappa} \cup sc_{\kappa} \neq \mathbb{P}_{\kappa'} \cup sc_{\kappa'}$, and if  $\mathbb{P}_{\kappa'} = \mathbb{P}_{\kappa} \cup sc_{\kappa}$, $\kappa$ is the father node of $\kappa'$.
\end{definition}

We can estimate the probability of each coin-and-signature pair by constructing its spent coin permutation tree. 
We denote $\mathbb{N}_{i,j}$ as the set of nodes in $\mathbb{T}_i$ whose depth is $I_i(j)$. Let $\mathbb{N}_{i,j}^{k,t}$ be the set of nodes in $\mathbb{T}_i$ whose depth is $I_i(j)$ and $k^{th}$ element in $\mathbb{P}_{\kappa}$ is coin $c_t$.

We can calculate the probability of coin $c_t$ being spent in the RS $r_k$ when the permutation tree is $\mathbb{T}_i$ by:\vspace{-2ex}
\begin{equation}
\label{eq:pr_f(c,r)}
Pr_i(c_t, r_k) = \frac{|\mathbb{N}_{i,i}^{k,t}|}{|\mathbb{N}_{i,i}|}
\end{equation}

Thus, the probability of $c_t$ having been spent in $\mathbb{R}_{i}$ can be calculated as:\vspace{-2ex}
\begin{equation}
\label{eq:pr_f(c)}
Pr_i(c_t) = \frac{\sum_{k=1}^{i}|\mathbb{N}_{i,i}^{k,t}|}{|\mathbb{N}_{i,i}|}
\end{equation} 

\begin{figure}[t]
	\subfigure[][{$\mathbb{T}_2$}]{
		\scalebox{0.27}[0.27]{\includegraphics{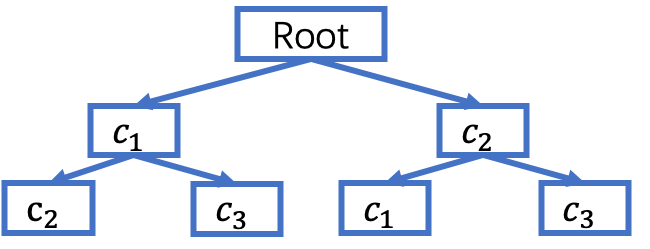}}
		\label{fig:pt_2}}
	\subfigure[][{$\mathbb{T}_3$}]{
		\scalebox{0.27}[0.27]{\includegraphics{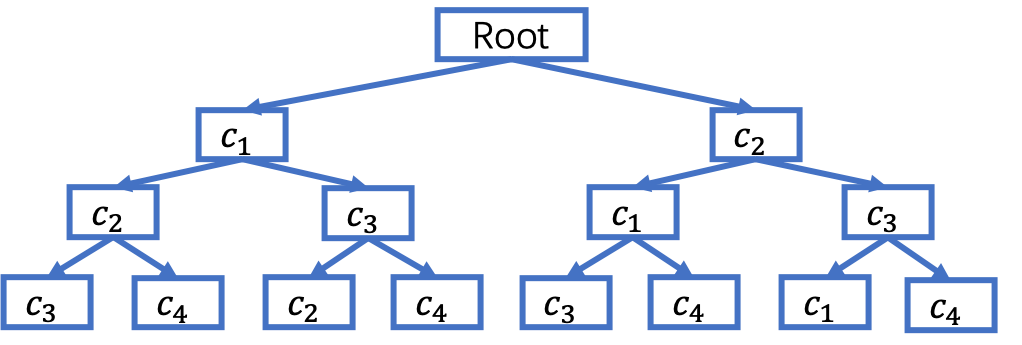}}
		\label{fig:pt_3}}
	\vspace{-3ex}
	\caption{Example \ref{ex:PT}.} \vspace{-4ex}
	\label{fig:PT}
\end{figure}

\begin{example}
	\label{ex:PT}
	There are three RSs, $r_1 = \{c_1, c_2\}$, $r_2 = \{c_1, c_2, c_3\}$, $r_3 = \{c_1, c_2, c_3, c_4\}$. The Figure \ref{fig:pt_2} shows the permutation tree $\mathbb{T}_2$ and Figure \ref{fig:pt_3} shows the permutation tree $\mathbb{T}_3$. Then, $|\mathbb{N}_{2,2}| = 4$, $|\mathbb{N}_{3,3}| = 8$, $|\mathbb{N}_{2,2}^{2,3}| = 2$, 
	$Pr_2(c_3, r_2) = \frac{1}{2}$, and $Pr_3(c_3, r_2) = \frac{1}{2}$.
\end{example}

\subsection{Coin-Indistinguishability}


We propose a formal definition of the coin-indistingushability to restrict the information leakage from observations. In other words, any adversary cannot obtain extra information from RSs w.r.t. the pre-defined privacy reserving level.

\begin{definition}
	\label{def:CI}
	(\underline{$\epsilon$}-\underline{C}oin-\underline{I}ndistinguishability ($\epsilon$-CI)) Given a related RS set, $\mathbb{R}_f = \{r_1, r_2, \cdots, r_m\}$, a RS $r_k \in \mathbb{R}_f$ satisfies $\underline{\epsilon}$-\underline{c}oin-\underline{i}ndistinguishability ($\epsilon$-CI) if for any two coins $c_i, c_j$ which are both contained in the RS $r_k$ (i.e., $c_i, c_j \in r_k$): 
	\begin{equation*}
	Pr_f(r_k|c_i)\leq e^{\epsilon}\cdot Pr_f(r_k|c_j)
	\end{equation*}
\end{definition}

$Pr_f(r_k|c_i)=\frac{Pr_f(r_k, c_i)}{Pr_f(c_i)}$ is the probability that when $c_i$ is spent, it is spent in $r_k$. For instance, in Example \ref{ex:PT}, $r_3$ satisfies $\frac{\ln 8}{3}$-CI. If $r_k$ satisfies $\epsilon$-CI, for any $\bar{\epsilon} \geq \epsilon$, $r_k$ also satisfy $\bar{\epsilon}$-CI. 
\subsection{Bounded Information Leakage}

We prove a characterization that quantifies over all priors to explain how the prior affects the privacy guarantees.

\begin{theorem}
	An upper bound of the posterior distribution of coin $c_i$ being spent in a RS $r_k \in \mathbb{R}_f$ which satisfies $\epsilon$-CI can be obtained by
	$Pr_f(c_i|r_k) \leq e^{\epsilon} \cdot \frac{\pi(c_i, r_k)}{\sum_{c_j \in r_k} \pi(c_j, r_k)}$, where $\pi(c_i, r_k)$ is the prior distribution modeled by the adversary's side information, indicating the probability of coin $c_i$ being spent in the RS $r_k$.
\end{theorem}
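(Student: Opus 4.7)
The plan is to apply Bayes' theorem to rewrite the posterior $Pr_f(c_i \mid r_k)$ in terms of the conditional observations $Pr_f(r_k \mid c_j)$ that are directly controlled by the $\epsilon$-CI definition, and then to use the indistinguishability bound on the denominator to eliminate the observation factors, leaving only the prior ratio. This is the standard template for converting a differential-privacy-style bound (which constrains likelihood ratios) into a bound on posterior beliefs quantified over all priors.

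Concretely, I would first write Bayes' rule in the form
\begin{equation*}
Pr_f(c_i \mid r_k) \;=\; \frac{Pr_f(r_k \mid c_i)\,\pi(c_i, r_k)}{\sum_{c_j \in r_k} Pr_f(r_k \mid c_j)\,\pi(c_j, r_k)},
\end{equation*}
where the sum in the denominator ranges over the candidate spent coins in $r_k$ (coins outside $r_k$ contribute zero likelihood and so can be dropped). Then, applying Definition~\ref{def:CI} in the reverse direction, I have $Pr_f(r_k \mid c_j) \ge e^{-\epsilon} Pr_f(r_k \mid c_i)$ for every $c_j \in r_k$. Substituting this into the denominator gives
\begin{equation*}
\sum_{c_j \in r_k} Pr_f(r_k \mid c_j)\,\pi(c_j, r_k) \;\ge\; e^{-\epsilon}\, Pr_f(r_k \mid c_i) \sum_{c_j \in r_k} \pi(c_j, r_k),
\end{equation*}
and the factor $Pr_f(r_k \mid c_i)$ then cancels the numerator, yielding the stated inequality $Pr_f(c_i \mid r_k) \le e^{\epsilon} \cdot \pi(c_i, r_k) / \sum_{c_j \in r_k} \pi(c_j, r_k)$.

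The main obstacle I anticipate is not the algebra, which is routine, but rather making the probabilistic setup fully precise: one has to justify that Bayes' rule can be applied with $\pi(\cdot, r_k)$ playing the role of a prior over which coin is spent in $r_k$, and that the sum over $c_j \in r_k$ truly exhausts all possibilities (i.e.\ coins outside $r_k$ have zero posterior since the observation $r_k$ is inconsistent with them). Once that bookkeeping is stated, the CI definition plugs in directly and there is no remaining analytic difficulty. I would also note the qualitative interpretation to close the argument: the bound shows that, regardless of the adversary's prior, the posterior can deviate from the prior-restricted-to-$r_k$ by at most a multiplicative factor $e^{\epsilon}$, which is precisely the guarantee one expects from an $\epsilon$-differential-privacy-style notion.
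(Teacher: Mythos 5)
Your proposal is correct and follows essentially the same route as the paper's own proof: both apply Bayes' rule with $\pi(\cdot,r_k)$ as the prior, lower-bound each likelihood in the denominator by $e^{-\epsilon}Pr_f(r_k\mid c_i)$ via the $\epsilon$-CI definition, and cancel the common factor to obtain the stated bound. Your write-up is merely more explicit about the cancellation and the bookkeeping of which coins contribute to the denominator.
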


\begin{proof}
	We can calculate the posterior distribution of coin $c_i$ being spent in $r_k$ by 
	$Pr_f(c_i|r_k) = \frac{\pi(c_i, r_k) \cdot Pr_f(r_k|c_i)}{\sum_{c_j \in r_k} \pi(c_j, r_k) \cdot Pr_f(r_k|c_j)}. $
	Since $r_k$ satisfies $\epsilon$-CI, we have 
	$Pr_f(c_i|r_k)  = \frac{\pi(c_i, r_k) \cdot Pr_f(r_k|c_i)}{\sum_{c_j \in r_k} \pi(c_j, r_k) \cdot Pr_f(r_k|c_j)}$ $\leq \frac{\pi(c_i, r_k)}{e^{-\epsilon}\sum_{c_j \in r_k} \pi(c_j, r_k)} = e^{\epsilon} \cdot \frac{\pi(c_i, r_k)}{\sum_{c_j \in r_k} \pi(c_j, r_k)}$.
\end{proof}\vspace{-1ex}

The upper bound of posterior probability implies that no matter what prior information the adversary has, $\epsilon$-CI constrains the multiplicative distance between the posterior distribution and prior distribution within $e^\epsilon$, and thus limits the posterior information gain of the adversary.
\vspace{-2ex}
	\section{Problem Definition}
\label{sec:problemDefinition}

In this section, we first formally introduce some notions and then formulate the \underline{C}I-\underline{a}ware \underline{m}ixins \underline{s}election problem with \underline{d}isjoint-\underline{s}uperset constraint (CIA-MS-DS). Besides, we give the proof of properties of the CIA-MS-DS problem, which can help to solve the problem more efficiently. Furthermore, we give the proof of the NP-hardness of the CIA-MS-DS problem.\vspace{-1ex}

\subsection{Preliminaries}
\label{subsec:pre}

We first define the $\underline{\epsilon}$-CI-keeping RS as follow: \vspace{-1ex}
\begin{definition}
	\label{def:CI-keeping}
	(A $\underline{\epsilon}$-\underline{CI}-\underline{K}eeping \underline{R}ing \underline{S}ignature ($\epsilon$-CIK-RS)) Given a related RS set $\mathbb{R}_f$, a RS $r_{f}$ is a $\epsilon$-CI-keeping RS if $\forall i \leq f, \forall c, c' \in r_i, Pr_{f}(r_i|c) \leq Pr_{f}(r_i|c')\cdot e^\epsilon$.
\end{definition}\vspace{-1ex}

As aforementioned, subsequent RSs have impacts on the effectiveness of the previous RSs. To preserve the effectiveness existing RSs, a new RS should be a $\epsilon$-CIK-RS.

When each RS $r_i$ in a related RS set $\mathbb{R}_f$ is disjoint with any RS $r_j$ in $\mathbb{R}_i$ which is not the subset of $r_i$, the related RS set $\mathbb{R}_f$ is a \textbf{disjoint-superset related RS set}. It is practical for the real world applications since it does not divulge the users' privacy. Besides, it is helpful for calculating the conditional probabilities $Pr_f(r_k|c_i)$ for calculating the level of CI. We will introduce and give the proof os some properties of a disjoint-superset related RS set shortly in Subsection~\ref{subsec:properites}.\vspace{-1ex}

\begin{definition}
	\label{def:DS-RRSS}
	(Disjoint-Superset Related Ring Signature Set). For a related RS set, $\mathbb{R}_{f}$, it is a disjoint-superset related RS set if for any two RSs, they are disjoint or one RS is the superset of another RS.
\end{definition}\vspace{-1ex}

For instance, in Example~\ref{ex:PT} $\{r_1, r_2, r_3\}$ is a disjoint-superset related ring signature set.  However, in Example~\ref{ex:CIA-MS} $\{r_1, r_2\}$ is not a disjoint-superset related ring signature set, since $r_1 \cap r_2 = c_1$. 

By Definition~\ref{def:DS-RRSS}, given a disjoint-superset RS set $\mathbb{R}_f$, there are some special RSs in $\mathbb{R}_f$, namely \textit{super ring signature}, whose subsequent RSs are all disjoint with it. We formally define this kind of RSs as follows. \vspace{-1ex}

\begin{definition}
	\label{def:SRS}
	(Super Ring Signature) Given a related RS set $\mathbb{R}_{f}$, a RS $r_i \in \mathbb{R}_f$ is a super RS if for any $r_j \in \mathbb{R}_{f}$ whose $I_f(r_i) < I_f(r_j)$, $r_i \nsubseteq r_j$. 
\end{definition}\vspace{-1ex}

By Definition~\ref{def:DS-RRSS}, if a RS $r_i$ is a super RS, $\forall I_f(j) \geq I_f(i)$, $r_i \cap r_j =\emptyset$. For example, there are four RSs, $r_1 = \{c_1, c_2\}$, $r_2 = \{c_1, c_2, c_3\}$, $r_3 = \{c_1, c_2, c_3\}$, and $r_4 = \{c_4, c_5\}$. Assume $r_4$ is the RS whose timestamp is the latest and $I_4(r_1)<I_4(r_2)<I_4(r_3)<I_4(r_4)$. Then, $r_3$ and $r_4$ are super RSs. Since $I_4(r_1) < I_4(r_3)$ and $r_1\subset r_3$, $r_1$ is not a super RS. Similarly, $r_2$ is also not a super RS.

To more easily propose the problem and discuss its properties, we introduce some notations here. Given a related RS set $\mathbb{R}_f$, we denote $\mathbb{SRS}_f$ as the set of super RS, $\mathbb{SRS}_f = \{srs_1, srs_2$, $\cdots, srs_n\}$. For each super RS, $srs_i$, we define its diversity, $dive_i$, as the number of historical transactions outputting the coins in $srs_i$. We denote $ns_i$ as the number of RSs in the related RS set $\mathbb{R}_f$ which are subsets of $srs_i$. We denote the degree $d_i$ of $srs_i$ as the number of coins in $srs_i$ minus $ns_i$ (i.e., $d_i = |srs_i| - ns_i$). Besides, we define the maximal coin spent probability, $pr_{max}^i$, of $srs_i$ as the maximal value of the probability that a coin in $srs_i$ has been spent in the related RS set $\mathbb{R}_f$, i.e., $pr_{max}^i = \max \{Pr_f(c)|c \in srs_i\}$, where $Pr_f(c)$ is defined as the probability that the coin $c$ having been spent in $\mathbb{R}_f$ in Equation~\ref{eq:pr_f(c)}. Similarly, we define the minimal coin spent probability, $pr_{min}^i$, of $srs_i$ as the minimal value of the probability that a coin in $srs_i$ has been spent in the related RS set $\mathbb{R}_f$, i.e., $pr_{max}^i = \min \{Pr_f(c)|c \in srs_i\}$.

Among the mixin universe, there may be some coins that are not contained in any RSs. In Example~\ref{ex:CIA-MS}, when we try to generate a RS to spend $c_3$, $c_4$ has not been contained in any RSs.\vspace{-1ex}

\begin{definition}
	\label{def:fresh coins}
	(Fresh Coin Set) Given a related RS, $\mathbb{R}$, and a mixin universe $\mathbb{C}$, a fresh coin set $\mathbb{F} = \{fc_1, fc_2, \cdots, fc_n \}$ is a set of coins in $\mathbb{C}$ that have not been contained in any RSs in $\mathbb{R}$.
\end{definition}




\vspace{-3ex}
\subsection{The CI-aware Mixins Selection with Disjoint-superset Constraint Problem}

In this subsection, we formally define the CIA-MS-DS problem. \vspace{-3ex}

\begin{definition}
	(The \underline{CI}-\underline{a}ware \underline{m}ixins \underline{s}election with \underline{d}isjoint-\underline{s}uperset constraint (CIA-MS-DS) problem) Given a super RS set $\mathbb{SRS}$, a fresh coin set $\mathbb{F}$, the coin $c_\tau$ that will be spent, a required $\epsilon$, and a budget $B$, a user wants to pick up a set of mixins, combining $c_\tau$, to generate the new RS $r_{\tau}$, such that its diversity $dive_{r_{\tau}} = |\{t_i | c_i \in r_{\tau}\}|$ is maximized and the following constraints are satisfied:\vspace{-1.5ex}
	\begin{itemize}[leftmargin=*]
		\item \textbf{DS constraint} $r_{\tau}$ is composed of some super RSs in $\mathbb{SRS}$ and some fresh coins in $\mathbb{F}$;\vspace{-1.5ex}
		\item \textbf{Budget constraint} the number of coins in $r_{\tau}$ does not exceed the budget; and\vspace{-1.5ex}
		\item \textbf{$\epsilon$-CIK constraint} $r_{\tau}$ is a $\epsilon$-CIK-RS.\vspace{-1.5ex}
	\end{itemize}\label{def:problem}
\end{definition}\vspace{-2ex}

Since the transaction fee of a RS is proportional to the number of coins in it, to limit the transaction fee of the new RS $r_\tau$, the number of coins in $r_\tau$ should meet the budget constraint, i.e., $| \{c| c\in r_{\tau}\}| \leq B$. Besides, since the new RS $r_\tau$ should protect effectiveness of the existing RSs, it should be a $\epsilon$-CIK-RS. Besides, when the new RS $r_\tau$ is composed of some super RSs in $\mathbb{SRS}$ and some fresh coins in $\mathbb{F}$, $r_\tau$ is disjoint with any RS in $\mathbb{R}_f$ which is not the subset of $r_\tau$. This can help to quickly verify if the new RS $r_\tau$ is a $\epsilon$-CIK-RS by the properties of the CIA-MS-DS problem, which will be introduced shortly in the next subsection. In addition, it can help other users to quickly generate the new RSs, which makes the problem practical for real world applications. \vspace{-1ex}

\subsection{Properties of the CIA-MS-DS Problem}
\label{subsec:properites}
When a related RS set is a disjoint-superset related RS set, there are some important properties, which can help to calculate the attributes of each super RS (Theorem ~\ref{theorem:pr(r,a)01},~\ref{theorem:pr_m(r_m,a)01} and~\ref{theorem:pr(a)01}) and reduce the time complexity of verifying whether the generated RS $r_\tau$ is a $\epsilon$-CIK-RS (Theorem~\ref{theorem:pr(a)dandiao},~\ref{theorem:Pr(|)daoPr(c)} and ~\ref{theorem:epsin}).

We first prove in Theorem~\ref{theorem:pr(r,a)01} that when a related RS set is a disjoint-superset related RS set, the probability of coin $c$ being spent in the RS $r$ will keep stable. \vspace{-1ex}

\begin{theorem}
	\label{theorem:pr(r,a)01}
	Given a disjoint-superset related RS set, $\mathbb{R}_f = \{r_1, r_2$, $\cdots, r_m\}$, $\forall I_f(i)\in [2, m], \forall I_f(j) < I_f(i)$, $\forall c_t \in r_j$, we have $Pr_{i}(r_j, c_t)$ =$ Pr_{h}(r_j, c_t)$, where $I_f(h) = I_f(i) -1$.
\end{theorem}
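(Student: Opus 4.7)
The plan is to exploit the fact that $\mathbb{T}_i$ is obtained from $\mathbb{T}_h$ by appending one additional level corresponding to $r_i$, and to show that this appending preserves the ratio defining $Pr_\cdot(r_j,c_t)$. I would start by unpacking the claim through Eq.~(\ref{eq:pr_f(c,r)}): $Pr_h(r_j,c_t)=|\mathbb{N}_{h,h}^{j,t}|/|\mathbb{N}_{h,h}|$ and $Pr_i(r_j,c_t)=|\mathbb{N}_{i,i}^{j,t}|/|\mathbb{N}_{i,i}|$. Because the first $h$ levels of $\mathbb{T}_i$ coincide with $\mathbb{T}_h$, every depth-$h$ node $\kappa$ of $\mathbb{T}_i$ is in bijection with a depth-$h$ node of $\mathbb{T}_h$, and by Def.~\ref{def:spent coin permutation tree} it has exactly $b_\kappa=|r_i\setminus(\mathbb{P}_\kappa\cup\{sc_\kappa\})|$ children at depth $i$.

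The crux is to show that the branching factor $b_\kappa$ is a constant independent of $\kappa$. Here I would invoke the disjoint-superset structure: every earlier $r_{j'}$ with $j'<i$ is either disjoint from $r_i$ or a subset of $r_i$. In the disjoint case, $sc_{j'}^\kappa\in r_{j'}$ never lies in $r_i$; in the subset case, $sc_{j'}^\kappa\in r_{j'}\subseteq r_i$ always lies in $r_i$. Letting $U=\{j'<i : r_{j'}\subseteq r_i\}$, we obtain $|r_i\cap(\mathbb{P}_\kappa\cup\{sc_\kappa\})|=|U|$ for every depth-$h$ node $\kappa$, hence $b_\kappa=|r_i|-|U|\equiv b$. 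Summing this constant over the depth-$h$ ancestors gives $|\mathbb{N}_{i,i}|=b\cdot|\mathbb{N}_{h,h}|$, and since the $j$-th coordinate of any depth-$i$ node coincides with the $j$-th coordinate of its depth-$h$ ancestor (because $j\le h$), the same counting yields $|\mathbb{N}_{i,i}^{j,t}|=b\cdot|\mathbb{N}_{h,h}^{j,t}|$. Dividing the two identities, the factor $b$ cancels and gives $Pr_i(r_j,c_t)=Pr_h(r_j,c_t)$.

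The step I anticipate to be most delicate is ruling out the case $r_{j'}\supsetneq r_i$ for some earlier $j'$, since a strict earlier superset would make whether $sc_{j'}^\kappa$ lies in $r_i$ depend on the actual permutation, destroying the constancy of $b_\kappa$ and hence the whole argument. To exclude it I would appeal to the construction adopted throughout the CIA-MS-DS framework (Def.~\ref{def:problem}): every new RS is assembled out of the already fixed super RSs together with fresh coins, which by a straightforward induction on the spending order forces every subsequent RS to be either a superset of, or disjoint from, each earlier one. This upgrades the two-sided laminar condition of Def.~\ref{def:DS-RRSS} into the one-sided form that the proof above actually needs.
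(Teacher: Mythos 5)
Your proof is correct and follows essentially the same route as the paper's: both arguments rest on the observations that the first $I_f(h)$ levels of $\mathbb{T}_i$ coincide with those of $\mathbb{T}_h$, that every node at depth $I_f(h)$ has the same number of children --- your constant $b=|r_i|-|U|$ is exactly the paper's degree $d_i=|r_i|-ns_i$ --- and that this common factor cancels in the ratio defining $Pr_i(r_j,c_t)$. The only place you go beyond the paper is in explicitly ruling out an earlier RS being a strict superset of $r_i$; the paper's proof silently assumes this when it asserts $|\mathbb{P}_\kappa^1|=ns_i$, so your added justification via the construction is a refinement of, not a departure from, the same argument.
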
\vspace{-3ex}

\begin{proof}
	By Definition~\ref{def:spent coin permutation tree} and ~\ref{def:DS-RRSS}, $\forall i \in [1,m]$, for each node $\kappa$ in $\mathbb{N}_{i, i}$, we can partition $\mathbb{P}_\kappa$ as two sets $\mathbb{P}_\kappa^1$ and $\mathbb{P}_\kappa^2$, where $\mathbb{P}_\kappa^1 \subseteq r_i$, $\mathbb{P}_\kappa^2 \cap r_i = \emptyset$, and $\mathbb{P}_\kappa^1 \cup \mathbb{P}_\kappa^2 = \mathbb{P}_\kappa$. Thus, $|\mathbb{P}_\kappa^1| = ns_i$. Since each RS spents an unspent coin, $\forall i \in [1, m]$, $|r_i| > ns_i$. Thus, by Definition~\ref{def:spent coin permutation tree}, $\forall I_f(i) \in [2,m]$, $\mathbb{N}_{i,i} = \mathbb{N}_{i,h}$ and for each node $\kappa \in \mathbb{N}_{i, h}$, it has $d_i$ children. Thus, $\forall I_f(i) \in [2,m], \forall I_f(j)<I_f(i), \forall c_t \in r_j, \mathbb{N}_{i,h}^{j, t} = \mathbb{N}_{h,h}^{j,t}$.Thus, by Equation~\ref{eq:pr_f(c,r)}, $\forall I_f(i) \in [2,m]$, $\forall I_f(j)$ $< I_f(i), \forall c \in r_j$, $Pr_i(r_j, c_t) = \frac{|\mathbb{N}_{i,i}^{j,t}|}{|\mathbb{N}_{i,i}|} = \frac{|\mathbb{N}_{i,h}^{j,t}| \cdot d_i}{|\mathbb{N}_{i,h}|\cdot d_i}= $ $ \frac{|\mathbb{N}_{i,h}^{j,t}|}{|\mathbb{N}_{i,h}|} =Pr_{h}(r_j, c_t)$.
\end{proof}\vspace{-2ex}

Next, we prove in Theorem~\ref{theorem:pr_m(r_m,a)01} that when a related RS set is a disjoint-superset related RS set, the probability of a coin $c_t$ being spent in the latest RS can be calculated iteratively. \vspace{-1ex}

\begin{theorem}
	\label{theorem:pr_m(r_m,a)01}
	Given a disjoint-superset related RS set $\mathbb{R}_f = \{r_1, r_2$, $\cdots, r_m\}$, $\forall c_t \in r_f$, we have $Pr_f(r_f, c_t) = \frac{1 - Pr_{h}(c_t)}{d_f}$, where $I_f(h) = I_f(f)-1$. 
\end{theorem}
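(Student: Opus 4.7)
The plan is to reduce the statement to a simple counting argument on the permutation tree, leveraging both Theorem~\ref{theorem:pr(r,a)01} (which was just proved) and the combinatorial structure it implies. Starting from Equation~\ref{eq:pr_f(c,r)}, I would write $Pr_f(r_f,c_t) = |\mathbb{N}_{f,f}^{f,t}| / |\mathbb{N}_{f,f}|$ and then separately compute the numerator and denominator in terms of quantities living at depth $I_f(h)$.

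For the denominator, I would reuse the observation from the proof of Theorem~\ref{theorem:pr(r,a)01}: under the disjoint-superset assumption, for every node $\kappa \in \mathbb{N}_{f,h}$ the permutation $\mathbb{P}_\kappa$ contains exactly $ns_f$ coins of $r_f$ (those are precisely the spent coins of the earlier RSs that are subsets of $r_f$; the rest of the earlier RSs are disjoint from $r_f$ and contribute spent coins outside $r_f$). Hence $r_f$ has $|r_f|-ns_f = d_f$ unused coins at $\kappa$, so $\kappa$ has exactly $d_f$ children in $\mathbb{T}_f$, giving $|\mathbb{N}_{f,f}| = d_f \cdot |\mathbb{N}_{f,h}|$, and by the same theorem $|\mathbb{N}_{f,h}| = |\mathbb{N}_{h,h}|$.

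For the numerator, I would argue that a node $\kappa'\in\mathbb{N}_{f,f}^{f,t}$ is characterized by its parent $\kappa\in\mathbb{N}_{f,h}$ together with the choice $sc_{\kappa'}=c_t$; the choice is legal iff $c_t\notin\mathbb{P}_\kappa$. Thus $|\mathbb{N}_{f,f}^{f,t}|$ equals the number of nodes in $\mathbb{N}_{f,h}=\mathbb{N}_{h,h}$ in which $c_t$ does not appear in the permutation. By Equation~\ref{eq:pr_f(c)}, the nodes in $\mathbb{N}_{h,h}$ where $c_t$ does appear number $\sum_{k=1}^{h}|\mathbb{N}_{h,h}^{k,t}| = Pr_h(c_t)\cdot|\mathbb{N}_{h,h}|$, so the complementary count is $(1-Pr_h(c_t))\cdot|\mathbb{N}_{h,h}|$. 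Dividing numerator by denominator then gives $(1-Pr_h(c_t))/d_f$, as required.

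The only non-routine step is the children-count argument: one has to justify that every node at depth $I_f(h)$ has exactly $d_f$ children in $\mathbb{T}_f$, which fails without the disjoint-superset hypothesis because otherwise $|\mathbb{P}_\kappa\cap r_f|$ could vary across nodes. I would therefore spell out this invariant carefully, pointing back to Definition~\ref{def:DS-RRSS} to argue that every earlier RS is either disjoint from $r_f$ or a subset of $r_f$, which pins $|\mathbb{P}_\kappa\cap r_f|=ns_f$ uniformly; once that is established the rest is bookkeeping.
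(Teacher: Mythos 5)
Your proposal is correct and follows essentially the same route as the paper's own proof: both write $Pr_f(r_f,c_t)=|\mathbb{N}_{f,f}^{f,t}|/|\mathbb{N}_{f,f}|$, use the uniform $d_f$-children count from Theorem~\ref{theorem:pr(r,a)01} to get $|\mathbb{N}_{f,f}|=d_f\cdot|\mathbb{N}_{f,h}|$, and identify the numerator with the nodes of $\mathbb{N}_{f,h}=\mathbb{N}_{h,h}$ whose permutations omit $c_t$, i.e.\ $(1-Pr_h(c_t))\cdot|\mathbb{N}_{h,h}|$ via Equation~\ref{eq:pr_f(c)}. Your version merely spells out more explicitly why the disjoint-superset hypothesis pins $|\mathbb{P}_\kappa\cap r_f|=ns_f$ uniformly, which the paper leaves implicit by citing the preceding theorem.
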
\vspace{-3ex}

\begin{proof}
	As proved in Theorem~\ref{theorem:pr(r,a)01}, each node in $\mathbb{N}_{f, h}$ has $d_{f}$ children and $\forall c_k \in r_{h}, \mathbb{N}_{f,h}^{h, k} = \mathbb{N}_{h,h}^{h,k}$. Therefore, according to Equation~\ref{eq:pr_f(c,r)}, $\forall c_t \in r_f$, 
	we have $Pr_f(r_f, c_t) = \frac{|\mathbb{N}_{f,f}^{f,t}|}{|\mathbb{N}_{f,f}|} 
	=\frac{|\mathbb{N}_{f,h}|-\sum_{j=1}^{h}|\mathbb{N}_{f,h}^{j,t}|}{|\mathbb{N}_{f,h}|\cdot d_f} \\ 
	=\frac{\frac{|\mathbb{N}_{f,h}|}{|\mathbb{N}_{f,h}|}-\frac{\sum_{j=1}^{h}|\mathbb{N}_{f,h}^{j,t}|}{|\mathbb{N}_{f,h}|}}{d_f} = \frac{1- Pr_{h}(c_t)}{d_f}$.
\end{proof}\vspace{-2ex}

Then, we show 
that when a related RS set is a disjoint-superset related RS set, the probability of a coin $c$ having been spent in $\mathbb{R}_f$ can be calculated iteratively.\vspace{-1ex}

\begin{theorem}
	\label{theorem:pr(a)01}
	Given a disjoint-superset related RS set, {\scriptsize$\mathbb{R}_f = \{r_1, r_2$, $ \cdots, r_m\}$},  {\scriptsize$\forall i \in [1, m], \forall c$, $
		Pr_i(c) =
		\begin{cases}
		Pr_{j}(c) & \mbox{$c \notin r_i$} \\
		Pr_{j}(c) + \frac{1 - Pr_{j}(c)}{d_i} & \mbox{$c \in r_i$} 
		\end{cases},
		$}
	where $I_f(j) = I_f(i)-1$, $I_f(g) = 0$, and $Pr_g(c) = 0$. 
\end{theorem}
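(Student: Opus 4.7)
The plan is to prove this iterative formula by directly unfolding the definition of $Pr_i(c)$ from Equation~\ref{eq:pr_f(c)} and then splitting the contribution of the latest ring signature $r_i$ from the contributions of its predecessors. The first move is to write
\[
Pr_i(c_t) \;=\; \frac{\sum_{k=1}^{i}|\mathbb{N}_{i,i}^{k,t}|}{|\mathbb{N}_{i,i}|}
\;=\; \sum_{k=1}^{i-1} Pr_i(r_k, c_t) \;+\; Pr_i(r_i, c_t),
\]
and then treat the two cases $c \notin r_i$ and $c \in r_i$ separately. The boundary case $i=1$ (so $j=g$ with $Pr_g(c)=0$) will be handled uniformly: either $c \notin r_1$ and $Pr_1(c)=0$, or $c \in r_1$ and $Pr_1(c)=1/d_1$ directly from the definition of $\mathbb{T}_1$.

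For the case $c \notin r_i$, I would argue that the $k=i$ term is zero, since by Definition~\ref{def:spent coin permutation tree} any node $\kappa$ at depth $i$ has $sc_\kappa \in r_i$, so no permutation at depth $i$ can have $c_t$ in position $i$; hence $|\mathbb{N}_{i,i}^{i,t}|=0$. For $k < i$, I would apply the counting identity established inside the proof of Theorem~\ref{theorem:pr(r,a)01}, namely that $|\mathbb{N}_{i,i}| = d_i\cdot |\mathbb{N}_{i,h}|$ and $\mathbb{N}_{i,h}^{k,t} = \mathbb{N}_{h,h}^{k,t}$, giving $Pr_i(r_k,c_t) = Pr_h(r_k,c_t)$. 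Summing over $k=1,\dots,h$ recovers $Pr_h(c_t)=Pr_j(c_t)$, which is exactly the first branch of the claimed formula.

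For the case $c \in r_i$, I would again invoke Theorem~\ref{theorem:pr(r,a)01} to rewrite $\sum_{k=1}^{i-1} Pr_i(r_k, c_t) = \sum_{k=1}^{h} Pr_h(r_k, c_t) = Pr_h(c_t)$ (noting that any prior $r_k$ containing $c_t$ must by the disjoint-superset assumption be a subset of $r_i$, so the sum makes sense), and then apply Theorem~\ref{theorem:pr_m(r_m,a)01} to the latest RS to obtain $Pr_i(r_i, c_t) = (1 - Pr_h(c_t))/d_i$. Adding the two pieces yields $Pr_i(c_t) = Pr_j(c_t) + (1-Pr_j(c_t))/d_i$, matching the second branch.

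The main obstacle is not algebraic but bookkeeping: I have to make sure the disjoint-superset structure is used at the right place so that the sum $\sum_{k=1}^{i-1} Pr_i(r_k,c)$ really equals $Pr_h(c)$ for $c\in r_i$. The subtle point is that only subsets of $r_i$ among the earlier RSs can contribute to this sum (since $c\in r_i$ and any other earlier RS is disjoint from $r_i$ by Definition~\ref{def:DS-RRSS}), and conversely every earlier occurrence of $c$ in $\mathbb{R}_h$ is already accounted for in $Pr_h(c)$ because those RSs are also in $\mathbb{R}_i$. Once this correspondence is spelled out, the rest reduces to direct substitution of Theorems~\ref{theorem:pr(r,a)01} and~\ref{theorem:pr_m(r_m,a)01}.
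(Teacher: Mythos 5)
Your proposal is correct and follows essentially the same route as the paper's proof: both decompose $Pr_i(c_t)=\sum_{k}Pr_i(c_t,r_k)$ over the permutation tree and use the facts from Theorem~\ref{theorem:pr(r,a)01} that each node of $\mathbb{N}_{i,j}$ has $d_i$ children and that $\mathbb{N}_{i,j}^{k,t}=\mathbb{N}_{j,j}^{k,t}$. The only difference is presentational --- you invoke Theorem~\ref{theorem:pr_m(r_m,a)01} for the depth-$i$ term where the paper redoes that count inline, and you spell out the $c\notin r_i$ case and the $i=1$ base case that the paper dismisses with ``similarly.''
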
\vspace{-3ex}

\begin{proof}
	As proved, $\forall i \in [2,m], \forall h < i, \forall c_t \in r_h$, $\mathbb{N}_{i,j}^{h, t} = \mathbb{N}_{j,j}^{h,t}$ and each node in $\mathbb{N}_{i, j}$ has $d_i$ children. Thus, $\forall i \in [2,m], \forall c_t \in r_i$, 
	$Pr_i(c_t)  = \frac{\sum_{h=1}^{i}|\mathbb{N}_{i,i}^{h,t}|}{|\mathbb{N}_{i,i}|}  =\frac{(|\mathbb{N}_{i,j}|-\sum_{h=1}^{j}|\mathbb{N}_{i,j}^{h,t}|)+d_i\cdot\sum_{h=1}^{j}|\mathbb{N}_{i,j}^{h,t}|}{|\mathbb{N}_{i,j}|\cdot d_i}$
	$= \frac{\sum_{h=1}^{j}|\mathbb{N}_{i,j}^{h,t}|}{|\mathbb{N}_{i,j}|}+ \frac{1 - \frac{\sum_{h=1}^{j}|\mathbb{N}_{i,j}^{h,t}|}{|\mathbb{N}_{i,j}|}}{d_i}
	= Pr_{j}(c_t) + \frac{1-Pr_{j}(c_t)}{d_i}$. 
	Let $I_f(k) = 1$. By Equation~\ref{eq:pr_f(c)}, {\scriptsize$\forall c \in r_k, Pr_k(c)$ = $\frac{1}{|r_k|}$ = $Pr_g(c)+\frac{1-Pr_g(c)}{d_k}$}. Similarly, we can prove $\forall i \in [1,m], \forall c \notin r_i, Pr_i(c) = Pr_{j}(c)$. 
\end{proof}\vspace{-2ex}

Thus, given a disjoint-superset related RS set, $\mathbb{R}_f$, we can easily calculate the attributes of each super RS. Then, for any two coins $c, c'$ in the same RS $r_i$, we proved that if the probability of $c$ being spent is smaller than the probability of $c'$ being spent in $\mathbb{R}_i$, the probability of $c$ being spent must be smaller than the probability of $c'$ being spent in any $\mathbb{R}_j$, where $I_f(j) \geq I_f(i)$.\vspace{-1ex}

\begin{theorem}
	\label{theorem:pr(a)dandiao}
	Given a disjoint-superset related RS set, $\mathbb{R}_f = \{r_1, r_2$, $\cdots, r_m\}$, $\forall i \in [1, m], \forall I_f(j) \leq I_f(i), \forall c, c' \in r_j$, if $Pr_j(c) \leq Pr_j(c')$, it holds that $Pr_{i}(c) \leq Pr_{i}(c')$.
\end{theorem}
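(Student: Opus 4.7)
The plan is to induct on the offset $I_f(i) - I_f(j)$, using the iterative formula for $Pr_i(c)$ furnished by Theorem~\ref{theorem:pr(a)01}. The base case $I_f(i) = I_f(j)$ is immediate from the hypothesis. For the inductive step, suppose the inequality $Pr_k(c) \leq Pr_k(c')$ already holds at some intermediate index $k$ with $I_f(j) \leq I_f(k) < I_f(i)$, and let $k^+$ denote its successor, so that $I_f(k^+) = I_f(k) + 1$. By Theorem~\ref{theorem:pr(a)01}, each coin either keeps its probability (if outside $r_{k^+}$) or has it updated by the map $f_d(x) = x + (1-x)/d$ with $d = d_{k^+}$; rewriting $f_d(x) = (1 - 1/d)\,x + 1/d$, I see that $f_d$ is a non-decreasing affine function whenever $d \geq 1$, and that $f_d(x) \geq x$ for $x \in [0,1]$. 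Since $d_{k^+} = |r_{k^+}| - ns_{k^+} \geq 1$ (each RS spends an unspent coin, as used in the proof of Theorem~\ref{theorem:pr(r,a)01}), both properties are available.

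With $f_d$ non-decreasing, the induction reduces to a case analysis on the membership of $c$ and $c'$ in $r_{k^+}$. If both belong to $r_{k^+}$, applying $f_{d_{k^+}}$ to both sides preserves the inequality. If neither belongs, both probabilities are unchanged. If $c \notin r_{k^+}$ but $c' \in r_{k^+}$, then the ``larger'' side is boosted further and the ``smaller'' side is frozen: $Pr_{k^+}(c) = Pr_k(c) \leq Pr_k(c') \leq f_{d_{k^+}}(Pr_k(c')) = Pr_{k^+}(c')$. These three sub-cases are essentially book-keeping.

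The one case that I expect to be the genuine obstacle is $c \in r_{k^+}$ but $c' \notin r_{k^+}$: here the ``smaller'' probability is boosted by $(1 - Pr_k(c))/d_{k^+}$ while the ``larger'' one is frozen, so raw monotonicity of $f_d$ is not enough. My plan is to exploit the disjoint-superset structure to tame this case. Since $c, c' \in r_j$ and $c \in r_{k^+}$, the intersection $r_j \cap r_{k^+}$ is non-empty, so by Definition~\ref{def:DS-RRSS} either $r_j \subseteq r_{k^+}$ (which would immediately place $c'$ inside $r_{k^+}$, contradicting the case assumption) or $r_{k^+} \subsetneq r_j$. Under the remaining sub-case $r_{k^+} \subsetneq r_j$, I would strengthen the inductive invariant to assert monotonicity simultaneously across every pair of coins in $r_j$, and then use the counting identities $|\mathbb{N}_{k^+,k}| \cdot d_{k^+} = |\mathbb{N}_{k^+,k^+}|$ and $\mathbb{N}_{k^+,k}^{h,t} = \mathbb{N}_{k,k}^{h,t}$ that underpin Theorems~\ref{theorem:pr(r,a)01}--\ref{theorem:pr(a)01} to rewrite $Pr_{k^+}(c)$ as a weighted average over permutations in which $c$ becomes the spent coin of $r_{k^+}$; bounding this average by $Pr_k(c')$ via the strengthened invariant will close the case and therefore the induction.
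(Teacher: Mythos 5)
Your induction on the offset with the affine update map $f_d(x) = (1-\tfrac{1}{d})x + \tfrac{1}{d}$ is exactly the computation the paper performs: its proof first notes that for $c,c'\in r_j$ membership in any later $r_i$ is all-or-nothing, and then checks that $Pr_{i}(c)-Pr_{i}(c') = \bigl(1-\tfrac{1}{d_i}\bigr)\bigl[Pr_j(c)-Pr_j(c')\bigr] \le 0$, stepping one RS at a time. So your ``both in'' and ``neither in'' cases reproduce the paper's argument, and your ``$c\notin r_{k^+}$, $c'\in r_{k^+}$'' case is handled soundly (if redundantly).

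The gap is in how you dispose of the remaining case $c\in r_{k^+}$, $c'\notin r_{k^+}$. You correctly reduce it to the sub-case $r_{k^+}\subsetneq r_j$, but then leave only a plan (a strengthened invariant plus a weighted-average rewriting of $Pr_{k^+}(c)$) rather than an argument; and that plan would not go through as described, because the counting identities you intend to borrow from Theorems~\ref{theorem:pr(r,a)01} and~\ref{theorem:pr(a)01} (every node of $\mathbb{N}_{k^+,k}$ having exactly $d_{k^+}$ children, and $\mathbb{N}_{k^+,k}^{h,t}=\mathbb{N}_{k,k}^{h,t}$) are themselves derived under the assumption that each RS is disjoint from or a \emph{superset} of every earlier RS. That same assumption is what actually closes the case: $r_{k^+}$ has a later timestamp than $r_j$, so under the disjoint-superset structure as the paper uses it, $r_{k^+}$ cannot be a proper subset of $r_j$; since $c\in r_j\cap r_{k^+}\neq\emptyset$ rules out disjointness, we must have $r_j\subseteq r_{k^+}$, hence $c'\in r_{k^+}$, contradicting the case hypothesis. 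In other words, both mixed-membership cases are vacuous --- this is precisely the paper's one-line observation that ``if $c\in r_i$ then $c'\in r_i$'' --- and the theorem needs nothing beyond the monotonicity of $f_d$ that you already established.
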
\vspace{-3ex}

\begin{proof}
	By Definition~\ref{def:DS-RRSS}, $\forall i \in [1,m], \forall I_f(j) \leq I_f(i), \forall c, c' \in r_j$, if $c \in r_i$, $c' \in r_i$. Suppose $I_f(j) = I_f(i)-1$. Thus,
	$
		Pr_{i}(c) - Pr_{i}(c') 
		= Pr_{j}(c)+\frac{1-Pr_j(c)}{d_i} - Pr_{j}(c') - \frac{1-Pr_j(c')}{d_i}$
	$=  [Pr_{j}(c)- Pr_j(c')] - \frac{1}{d_i} \cdot [Pr_j(c) - Pr_j(c')].
	$
	Since $d_i \geq 1$ and $Pr_j(c) \leq Pr_j(c')$, $Pr_{i}(c) - Pr_{i}(c') \leq 0$.
\end{proof}\vspace{-2ex}

Besides, given a disjoint-superset related RS set $\mathbb{R}_f$, for any two coins $c, c'$ in the same RS $r_j$, we proved that if the conditional probability $Pr_i(r_j|c)$ is smaller than $Pr_i(r_j|c')$, we can find for any $\forall I_f(k) \in[I_f(j)-1,I_f(i)]$, the probability of $c$ being spent is larger than the probability of $c'$ being spent. \vspace{-1ex}

\begin{theorem}
	\label{theorem:Pr(|)daoPr(c)}
	Given a disjoint-superset related RS set, $\mathbb{R}_f= \{r_1, r_2$, $\cdots, r_m\}$, $\forall i \in [1, m], \forall I_f(j) \leq I_f(i), \forall c, c' \in r_j$, if $Pr_i(r_j|c) \leq Pr_i(r_j|c')$, $\forall I_f(k) \in [I_f(g),I_f(i)], Pr_{k}(c) \geq Pr_{k}(c')$, where $I_f(g) = I_f(j)-1$. 
\end{theorem}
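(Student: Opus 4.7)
The plan is to transform the conditional-probability inequality into a product inequality involving $Pr_g$ and $Pr_i$ values using Bayes' rule together with Theorems~\ref{theorem:pr(r,a)01} and~\ref{theorem:pr_m(r_m,a)01}, then argue by contradiction using the forward monotonicity of Theorem~\ref{theorem:pr(a)dandiao} to deduce $Pr_g(c)\ge Pr_g(c')$, and finally propagate this ordering across the full range $[I_f(g),I_f(i)]$.

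For the first step, by Bayes' rule $Pr_i(r_j|c)=Pr_i(r_j,c)/Pr_i(c)$. Iterating Theorem~\ref{theorem:pr(r,a)01} from index $i$ down to $j{+}1$ collapses the joint probability to $Pr_i(r_j,c)=Pr_j(r_j,c)$, and Theorem~\ref{theorem:pr_m(r_m,a)01} at index $j$ rewrites this as $(1-Pr_g(c))/d_j$ (recall $g=j-1$). The analogous identities hold for $c'\in r_j$, so after cross-multiplying by the positive denominators the hypothesis $Pr_i(r_j|c)\le Pr_i(r_j|c')$ becomes
\begin{equation*}
Pr_i(c')\bigl(1-Pr_g(c)\bigr)\le Pr_i(c)\bigl(1-Pr_g(c')\bigr). \qquad(\star)
\end{equation*}

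For the second step, suppose for contradiction $Pr_g(c)<Pr_g(c')$. The update rule in Theorem~\ref{theorem:pr(a)01} at step $j$ is a non-decreasing affine function of the previous probability, so $Pr_j(c)\le Pr_j(c')$; Theorem~\ref{theorem:pr(a)dandiao} then yields $Pr_i(c)\le Pr_i(c')$. Since $c,c'\in r_j$ both have $Pr_i(\cdot)>0$ and $1-Pr_g(c)>1-Pr_g(c')\ge 0$, we obtain $Pr_i(c')(1-Pr_g(c))>Pr_i(c')(1-Pr_g(c'))\ge Pr_i(c)(1-Pr_g(c'))$, contradicting $(\star)$. Hence $Pr_g(c)\ge Pr_g(c')$.

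To propagate the ordering over $[I_f(g),I_f(i)]$: the case $k=g$ has just been established; for $k=j$, the monotone update of Theorem~\ref{theorem:pr(a)01} gives $Pr_j(c)\ge Pr_j(c')$; and for $k\in(j,i]$, Theorem~\ref{theorem:pr(a)dandiao} (applied with the roles of $c$ and $c'$ exchanged) extends the inequality. The main obstacle is the contradiction step, since $(\star)$ mixes values at two different times $g$ and $i$; the only way to extract the desired ordering at time $g$ is to assume its negation and exploit forward monotonicity of Theorem~\ref{theorem:pr(a)dandiao} to turn that negation into a violation of $(\star)$ at time $i$.
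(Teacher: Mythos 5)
Your proof is correct and follows essentially the same route as the paper's: both collapse $Pr_i(r_j,c)$ to $(1-Pr_g(c))/d_j$ via Theorems~\ref{theorem:pr(r,a)01} and~\ref{theorem:pr_m(r_m,a)01} and then use the forward monotonicity of Theorem~\ref{theorem:pr(a)dandiao} to conclude that $Pr_i(r_j|\cdot)$ is antitone in $Pr_g(\cdot)$, before propagating the ordering over $[I_f(g),I_f(i)]$. Your explicit contradiction step via $(\star)$ merely spells out the contrapositive that the paper's ``increases when $Pr_g(c_t)$ decreases'' claim leaves implicit.
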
\vspace{-3ex}

\begin{proof}
	Denote $F(t,j,i)$ as a function to recursively calculate $Pr_i(c_t)$ from $Pr_j(c_t)$. As proved in Theorem~\ref{theorem:pr(a)dandiao}, $F(t,j,i)$ increases when $Pr_{j}(c_t)$ increases. 
	By Definition~\ref{def:DS-RRSS}, $\forall i \in [1,m], \forall I_f(j) \leq I_f(h)$, $\forall c, c' \in r_j$,  if $c \in r_i$, $c' \in r_i$. 
	By Theorem~\ref{theorem:pr(r,a)01} and ~\ref{theorem:pr(a)01}, $\forall i \in [1, m]$, $\forall I_f(j)\leq I_f(i)$, $\forall c_t \in r_j$, $Pr_i(r_j|c_t)$ = $\frac{Pr_i(r_j, c_t)}{Pr_i(c_t)}$ = $\frac{Pr_j(r_j,c_t)}{F(t, j, i)}$ = $\frac{\frac{1-Pr_{g}(c_t)}{d_j}}{F(t, g, i)}$. Thus, $Pr_i(r_j|c_t)$ increases when $Pr_{g}(c_t)$ decreases.
	Thus, $\forall i \in [1, m]$, $\forall I_f(j) \leq I_f(i), \forall c, c' \in r_j$, if $Pr_i(r_j|c) \leq Pr_i(r_j|c')$, $Pr_{g}(c) \geq Pr_{g}(c')$. Then by Theorem~\ref{theorem:pr(a)dandiao}, we have $\forall i \in [1, m], \forall I_f(j) \leq I_f(i), \forall c, c' \in r_j$, $\forall I_f(k) \in [I_f(g), I_f(i)], Pr_{k}(c) \geq Pr_{k}(c')$.
\end{proof}\vspace{-2ex}

Then we prove that, if the conditional probabilities $Pr_{i}(r_j|c_t)$ of two coins in a RS satisfy $\epsilon$-CI, the conditional probabilities of these two coins in the RS after adding some RSs also satisfy $\epsilon$-CI.\vspace{-1ex}

\begin{theorem}
	\label{theorem:epsin}
	Given a disjoint-superset related RS set, $\mathbb{R}_f = \{r_1, r_2$, $\cdots, r_m\}$, $\forall i \in [1, m], \forall I_f(j) \leq I_f(i), \forall c, c' \in r_j$, if $Pr_i(r_j|c) \geq Pr_i(r_j|c')$ and $\frac{Pr_i(r_j|c)}{Pr_i(r_j|c')} = \beta$, $\forall k \in [I_f(i), m]$, $Pr_{k}(r_j|c) \geq Pr_{k}(r_j|c')$ and $\frac{Pr_k(r_j|c)}{Pr_k(r_j|c')} = \beta'\leq\beta$. 
\end{theorem}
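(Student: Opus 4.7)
The plan is to exploit Theorem~\ref{theorem:pr(r,a)01}, which freezes the joint probability $Pr_k(r_j,c)$ at $Pr_j(r_j,c)$ for every $k\geq j$, so that the entire $k$-dependence of $Pr_k(r_j|c)/Pr_k(r_j|c')$ is concentrated in the marginals. Writing $Pr_k(r_j|c)=Pr_j(r_j,c)/Pr_k(c)$ (and similarly for $c'$) yields
\begin{equation*}
\frac{Pr_k(r_j|c)}{Pr_k(r_j|c')}\;=\;\frac{Pr_j(r_j,c)}{Pr_j(r_j,c')}\cdot\frac{Pr_k(c')}{Pr_k(c)},
\end{equation*}
whose first factor, by Theorem~\ref{theorem:pr_m(r_m,a)01}, equals the $k$-independent constant $(1-Pr_{j-1}(c))/(1-Pr_{j-1}(c'))$. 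The whole theorem therefore reduces to controlling how $Pr_k(c')/Pr_k(c)$ evolves.

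Next, I would extract the marginal ordering from the hypothesis. Applying Theorem~\ref{theorem:Pr(|)daoPr(c)} with the roles of $c$ and $c'$ swapped, $Pr_i(r_j|c)\geq Pr_i(r_j|c')$ implies $Pr_k(c)\leq Pr_k(c')$ for all $k\in[j-1,i]$, and Theorem~\ref{theorem:pr(a)dandiao} then propagates this ordering to every $k\geq j-1$. Both factors on the right-hand side of the displayed identity are therefore $\geq 1$, which immediately gives the first assertion $Pr_k(r_j|c)\geq Pr_k(r_j|c')$.

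For the bound $\beta'\leq\beta$ it then suffices to show $Pr_k(c')/Pr_k(c)\leq Pr_i(c')/Pr_i(c)$, and I would prove this by induction on $k-i$ using Theorem~\ref{theorem:pr(a)01}. The disjoint-superset hypothesis, interpreted as in the proof of Theorem~\ref{theorem:pr(r,a)01} so that later RSs are supersets of or disjoint from earlier ones, forces every $r_\ell$ with $\ell>j$ to contain both $c,c'$ or neither. In the disjoint case the marginals are unchanged; in the superset case both are updated by the common affine map $x\mapsto((d_\ell-1)x+1)/d_\ell$, and the elementary inequality $((d_\ell-1)a'+1)/((d_\ell-1)a+1)\leq a'/a$ for $0\leq a\leq a'$ (one line of cross-multiplication) shows the ratio weakly shrinks. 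The main obstacle is precisely this case analysis: one must be careful to rule out a ``later RS strictly inside $r_j$,'' since otherwise the affine update would be applied asymmetrically to $c$ and $c'$ and the monotone shrinkage argument would break. Once the case analysis is in hand, the induction combined with the constant first factor of the identity immediately delivers $\beta'\leq\beta$ and closes the proof.
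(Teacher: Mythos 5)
Your proposal is correct and follows essentially the same route as the paper's proof: both freeze the joint probability $Pr_k(r_j,c)$ via Theorem~\ref{theorem:pr(r,a)01}, extract the marginal ordering $Pr_k(c)\leq Pr_k(c')$ from Theorems~\ref{theorem:Pr(|)daoPr(c)} and~\ref{theorem:pr(a)dandiao}, and then show the ratio $Pr_k(c')/Pr_k(c)$ weakly shrinks under the affine update of Theorem~\ref{theorem:pr(a)01} by the same cross-multiplication inequality (the paper does one update step explicitly, splitting on $d-1\geq 1$ versus $d-1=0$, and handles the remaining steps and the disjoint case with a ``similarly''). Your explicit induction and your remark that the disjoint-superset structure forces each later RS to contain both of $c,c'$ or neither merely make precise what the paper leaves implicit.
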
\vspace{-3ex}
\begin{proof}
	By Theorem~\ref{theorem:Pr(|)daoPr(c)}, if $Pr_i(r_j|c) \geq Pr_i(r_j|c')$, $Pr_i(c) \leq Pr_i(c')$. Suppose $r_k$ is the RS with the lowest $I_f(k)$ where $I_f(k) \geq I_f(i)$ and $r_i \subseteq r_k$. By Theorem~\ref{theorem:pr(a)dandiao}, $Pr_k(c) \leq Pr_i(c')$ and $Pr_{k}(r_j|c) \geq Pr_{k}(r_j|c')$. 
	Then, by Theorem~\ref{theorem:pr(r,a)01} and ~\ref{theorem:pr(a)01}, $ 
	\beta' = \frac{Pr_{k}(r_j|c)}{Pr_{k}(r_j|c')}
	= \frac{Pr_i(r_j,c)}{Pr_i(c) + \frac{1-Pr_i(c)}{d_i}} \cdot \frac{Pr_i(c') + \frac{1-Pr_i(c')}{d_i}}{Pr_i(r_j,c')}
	=  \frac{Pr_i(r_j,c)}{Pr_i(r_j,c')} \cdot \frac{(d_i-1)Pr_i(c')+1}{(d_i-1)Pr_i(c)+1}$. 
	When $d_i-1 \geq 1$, $\beta' \leq \frac{Pr_i(r_j,c)}{Pr_i(r_j,c')} \cdot \frac{Pr_i(c')}{Pr_i(c)} = \beta$. When $d_i-1 =0$, 
	$\beta' = \frac{Pr_i(r_j,c)}{Pr_i(r_j,c')} \leq \frac{Pr_i(r_j,c)}{Pr_i(r_j,c')} \cdot \frac{Pr_i(c')}{Pr_i(c)} = \beta$. Similarly, we can prove $\forall k \in [I_f(i), m] \land r_i \nsubseteq r_k$, it still holds.
\end{proof}\vspace{-2ex}

Suppose the corresponding disjoint-superset related RS set of the given super RS set $\mathbb{SRS}$ is $\mathbb{R}_\tau$. Thus, if the new RS $r_\tau$ is composed of some super RSs in $\mathbb{SRS}$ and some fresh coins in $\mathbb{F}$, by Definition~\ref{def:SRS}, $r_\tau$ is a super RS in $\mathbb{R}_\tau \cup r_\tau$. Then by Theorem~\ref{theorem:epsin}, since RSs in $\mathbb{SRS}$ all satisfy the $\epsilon$-CI, if $r_\tau$ satisfies $\epsilon$-CI, it is a $\epsilon$-CIK-RS. In other words, to verify if $r_\tau$ is a $\epsilon$-CIK-RS, we just need to verify whether it satisfies $\epsilon$-CI, which reduces the time complexity form $\mathcal{O}(\sum_{i=1}^{|\mathbb{R}_\tau|+1}|r_i|^2)$ to $\mathcal{O}(|r_{\tau}|^2)$. Besides, suppose $pr_{max} = \max \{pr_m(c)| c \in r_{\tau}\}$ and $pr_{min} = \min \{pr_m(c)| c \in r_{\tau} \}$. Thus, by Definition~\ref{def:CI}, Theorem~\ref{theorem:pr(a)01} and Theorem~\ref{theorem:pr_m(r_m,a)01}, to check $r_{\tau}$ satisfies $\epsilon$-CI, we just need to verify whether $\frac{(1-pr_{max})\cdot e^\epsilon}{(d_{\tau}-1)\cdot pr_{max}+1} \geq \frac{(1-pr_{min})}{(d_{\tau}-1)\cdot pr_{min}+1}$, which further reduces the time complexity from $\mathcal{O}(|r_{\tau}|^2)$ to $\mathcal{O}(|r_{\tau}|)$, where $d_\tau$ is the degree of $r_\tau$ and is calculated as the number of coins in $r_\tau$ minus the number of its subsets in $\mathbb{R}_\tau$. \vspace{-3ex}


\subsection{Hardness of the CIA-MS-DS Problem}
In this subsection, we prove that the CIA-MS-DS problem is NP-hard by reducing 0-1 knapsack problem~\cite{2013approximation}.\vspace{-1ex}

\begin{theorem}
	\label{theorem:np}
	The CIA-MS-DS problem is NP-hard.
\end{theorem}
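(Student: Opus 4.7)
The plan is a polynomial-time reduction from the classical 0-1 knapsack problem, which is NP-hard. An instance of 0-1 knapsack provides items $1,\ldots,n$ with weights $w_i$, values $v_i$, and capacity $W$, and asks for a subset maximizing total value subject to total weight at most $W$. The correspondence to build treats each knapsack item as one super RS: the cardinality of $srs_i$ plays the role of weight (since it counts against the CIA-MS-DS budget) and $dive_{srs_i}$ plays the role of value (since CIA-MS-DS maximizes total diversity). Both quantities are additive over disjoint super RSs, which matches exactly how weight and value accumulate in a knapsack solution.

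For the construction, choose a scale $K = \max_i \lceil v_i / w_i \rceil$ so that $K \cdot w_i \geq v_i$ for every $i$. For each item $i$, introduce a super RS $srs_i$ containing $K \cdot w_i$ coins drawn from $v_i$ distinct historical transactions, with no transaction shared across different super RSs and no coin appearing in more than one super RS. This yields $|srs_i| = K \cdot w_i$ and $dive_{srs_i} = v_i$. The target coin $c_\tau$ is added as a fresh coin produced by its own unique transaction, the fresh coin set $\mathbb{F}$ is $\{c_\tau\}$, and the budget is set to $B = K \cdot W + 1$. Under this construction, a subset $S$ of super RSs combined with $c_\tau$ satisfies the budget constraint iff $\sum_{i \in S} w_i \leq W$, and the resulting RS has diversity $\sum_{i \in S} v_i + 1$. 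The DS constraint is automatic by the way the super RSs were built. Hence any optimal CIA-MS-DS solution in the constructed instance immediately recovers an optimal knapsack solution.

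The main obstacle will be neutralizing the $\epsilon$-CIK constraint so that CIA-MS-DS feasibility is governed purely by the budget. I would exploit the collapse, derived at the end of Subsection 3.3, from checking $\epsilon$-CIK on $r_\tau$ to the single comparison $\frac{(1-pr_{\max})e^\epsilon}{(d_\tau-1)pr_{\max}+1} \geq \frac{1-pr_{\min}}{(d_\tau-1)pr_{\min}+1}$. By engineering the historical transaction assignments so that the spent probabilities inside each constructed super RS are symmetric (every coin in $srs_i$ plays the same structural role), the ratio $pr_{\max}/pr_{\min}$ across any candidate $r_\tau$ is bounded by a quantity depending only on $K$, $W$, and $n$. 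Choosing $\epsilon$ to be a small polynomial in these quantities (for example, $\epsilon = \ln(KW+1)$) then makes the CI test pass for every feasible $S$. With the CI constraint rendered vacuous, the CIA-MS-DS instance exactly encodes the knapsack instance, so any algorithm for CIA-MS-DS solves 0-1 knapsack; since 0-1 knapsack is NP-hard, CIA-MS-DS is NP-hard.
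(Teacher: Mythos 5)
Your reduction is correct and follows essentially the same route as the paper's proof: both encode a 0-1 knapsack instance by turning each item into a super RS whose cardinality plays the role of weight and whose diversity plays the role of value, set the fresh-coin side to be trivial, and neutralize the $\epsilon$-CIK constraint by taking $\epsilon$ large enough that every combination is feasible. Your version is somewhat more careful --- the scaling factor $K$ guarantees $dive_{srs_i}\leq |srs_i|$, a consistency issue the paper's construction (which simultaneously assigns each super RS an independent size and diversity while declaring all coins to come from distinct transactions) glosses over --- but this is a refinement of the same argument rather than a different one.
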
\vspace{-4ex}

\begin{proof}
	We prove that the theorem by a reduction from the 0-1 knapsack problem~\cite{2013approximation}: given a set, $I$, of $n$ items $i$ numbered from 1 up to $n$, each with a weight $w_i$ and a value $x_i$, along with a maximum weight capacity $C$, the 0-1 knapsack problem is to find a subset $I'$ of $I$ that maximizes $\sum_{i \in I'}x_i$ subjected to $\sum_{c_j \in r_k} w_i \leq C$.
	
	For any given 0-1 knapsack problem instance, we can transform it into a special CIA-MS-DS problem instance in polynomial time as follows: we generate a super RS set, $\mathbb{SRS}$, where there are $n$ super RS numbered from 1 up to $n$, each with a size $x_i$ and diversity $w_i$. The fresh coin set is empty. The $\epsilon$ is large enough that any combination of super RSs in $\mathbb{SRS}$ is a $\epsilon$-CIK-RS. Besides, for any two coins, the historical transactions which outputs them are different. Thus, $dive_{r_{\tau}} = \sum_{srs_i \in r_{\tau}} dive_{i}$.
	
	To find the new RS $r_{\tau}$ whose diversity is maximum is equal to find a maximum assignment of 0-1 knapsack problem. Given this mapping, we can reduce the 0-1 knapsack problem to the CIA-MS-DS problem. Since the 0-1 knapsack problem is NP-hard~\cite{2013approximation}, the CIA-MS-DS problem is at least NP-hard.
\end{proof}

	\section{CoinMagic Framework}
\label{sec:framework}

\begin{figure}[t]
	\centering
	\includegraphics[scale=0.5]{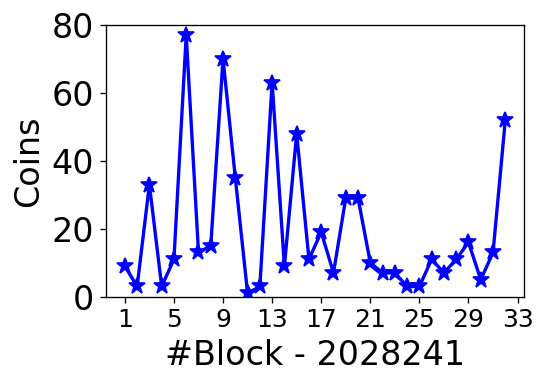}\vspace{-4ex}
	\caption{\small The Number of Coins in blocks in Monero blockchain}
	\label{fig:blocks}\vspace{-3ex}
\end{figure}

Since the new RS's effectiveness on privacy preserving is impacted by the related RSs, before selecting mixins, a user needs to retrieve the mixin universe $\mathbb{C}$ and the related RS set $\mathbb{R}$. The mixin universe can be retrieved according to the user's interest or the blockchain system's requirement. For instance, in Monero~\cite{Monero}, the system requires that in each RS, half of the mixins should be the coins which are less than 1.8 days old~\cite{hinteregger2019short}. However, this method has two defects. Firstly, since the number of transactions in each block is unstable (especially, some miners even generate empty blocks), the cardinality of $\mathbb{C}$ is also not constant. For instance, Figure~\ref{fig:blocks} shows the number of coins in blocks between 2028242 and 2028273 of Monero. Among these blocks, the block 2028252 only contains 1 coin but the block 2028247 contains 77 coins. Secondly, the size of the related RS set can unlimitedly increased over time, which is not efficient for constructing the spent coin permutation tree and generating new RSs.

In this paper, we propose a framework,  namely CoinMagic, to retrieve the mixin universe $\mathbb{C}$ and related RS set $\mathbb{R}$, then generate new RSs. Specifically, CoinMagic partitions the blocks in a blockchain into disjoint and sequential batches, then generates new RSs only upon the coins and RSs in the same batch. The number of coins in each batch is bounded in  range specified by the system. 
For each batch, its mixin universe is consisted of coins in the batch,  Thus, the mixin universes for different batches are disjoint. In addition, as each RS selects mixins from the mixin universe in its batch, the related RS sets in different batches are also disjoint. The size of each related RS set is bounded by the  cardinality of the mixin universe for its batch. 

As shown in Algorithm~\ref{al:framework}, mixin universes are retrieved by batches. We start at the last block of the last batch before $c_\tau$ (line 1). Then, we initialize the mixin universe $\mathbb{C}$ and the related RS set $\mathbb{R}$ as empty (line 1). Then we continually read blocks and add the outputs of transactions in blocks to the mixin universe $\mathbb{C}$ until we reach the last block in the blockchain or the cardinality of $\mathbb{C}$ is large enough (line 2-5). Then, we continually read blocks and add the RSs in blocks which are the subsets of $\mathbb{C}$ to $\mathbb{R}$ (line 6-10). Since the size of every block is limited, the number of coins in a block has an upper bound $\lambda'$. Thus, $|\mathbb{C}| \in [\lambda, \lambda + \lambda'-1]$ and $|\mathbb{R}| \leq |\mathbb{C}| \leq \lambda+\lambda'-1$. This procedure (line 1 -10) can be accomplished when the user updates her/his local blockchain status to the global status. In other words, the user does not need to pay much extra cost for implementing this framework. 
Finally, we run a RS generation algorithm to generate a new RS satisfying the constraints of DS, budget, and $\epsilon$-CIK.

Since the mixins of a RS can only be selected within the same batch,  there could be a situation that a user cannot find a RS with $\epsilon$-CI to spend a coin. For example, in Example~\ref{ex:CIA-MS}, if the user uses the RS $\tau_5 = \{c_1, c_2, c_3\}$, then when another user wants to spend $c_4$, she/he cannot find a RS $r$ with $\epsilon$-CI, since witnesses can easily find that $c_1, c_2,$ and $c_3$ has been spent in previous three RSs (i.e., $r_1$, $r_2$, $r_3$) and $c_4$ must be the spent coin in the new RS $r$. We denote the set of fresh coins in a batch as $\mathbb{F}$. To avoid this issue, we require that at any moment, the $\mathbb{F}$'s cardinality cannot be 1. Next, we formally prove that if $|\mathbb{F}| \neq 1$, for any $c_\tau$ and budget, there always exists at least one RS that satisfies the constraints of DS, budget and $\epsilon$-CIK.

\begin{algorithm}[t]
	\DontPrintSemicolon
	\KwIn{the spent coin $c_\tau$, blocks in the blockchain system, $\mathbb{B}$, the threshold, $\lambda$, of the number of coins in a batch, budget, and the system's required CI level}
	\KwOut{the new RS}
	set $b_i$ as the last block of the last batch before $c_\tau$, $\mathbb{C} = \emptyset$, $\mathbb{R} = \emptyset$;\;
	\While{$b_i$ is not the last block in $\mathbb{B}$ and $|\mathbb{C}| < \lambda$}{
		$b_i \gets$ next block;\;
		\If{$b$ is not an empty block}{
			add the output coins of transactions in block $b$ to $\mathbb{C}$;\;
		}
	}
	\While{$b_i$ is not the last block in $\mathbb{B}$}{
		$b_i \gets$ next block;\;
		\ForEach{RS $r$ in $b_i$}{
			\If{$r$ is a subset of $\mathbb{C}$}{
				add $r$ to $\mathbb{R}$;\;
			}
		}
	}
	run a RS generation algorithm to generate a new RS satisfying the constraints of DS, budget and $\epsilon$-CIK in Definition \ref{def:problem}.
	\caption{CoinMagic Framework}
	\label{al:framework}
\end{algorithm} 

\vspace{-1ex}
\begin{theorem}
	If $|\mathbb{F}| \neq 1$, for any $c_\tau$ and a budget $B$, there always exists at least one RS which satisfies the constraints of DS, $\epsilon$-CIK, and budget, where $m_\tau$ is the module which contains $c_\tau$ and $B \geq \max\{|m_\tau|, 2\}$.
\end{theorem}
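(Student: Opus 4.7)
The plan is to split into two cases based on where $c_\tau$ lies in the super-RS structure, exhibit an explicit $r_\tau$ in each case, and verify all three constraints (DS, budget, and $\epsilon$-CIK).

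In the first case, $c_\tau\in\mathbb{F}$. Since $c_\tau\in\mathbb{F}$ and $|\mathbb{F}|\neq 1$, we must have $|\mathbb{F}|\geq 2$, so some other fresh coin $fc\in\mathbb{F}\setminus\{c_\tau\}$ exists and I would take $r_\tau=\{c_\tau,fc\}$. The DS constraint is immediate since $r_\tau$ consists of two fresh coins and is therefore disjoint from every pre-existing RS; the budget constraint reads $|r_\tau|=2\leq B$ by the hypothesis $B\geq 2$. For $\epsilon$-CIK, Theorem~\ref{theorem:epsin} preserves the $\epsilon$-CI of every earlier RS (because $r_\tau$ is disjoint from them), and for $r_\tau$ itself both coins have prior spent probability $0$, so $pr_{max}=pr_{min}=0$ and the simplified test at the end of Subsection~\ref{subsec:properites} collapses to $e^\epsilon\geq 1$.

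In the second case, $c_\tau$ sits inside an existing super RS $m_\tau$ with $|m_\tau|\geq 2$, and I would try $r_\tau=m_\tau$. DS is clear (one super RS, no fresh coin attached), the budget constraint holds because $B\geq|m_\tau|=|r_\tau|$, and the $\epsilon$-CI of every earlier RS is again preserved by Theorem~\ref{theorem:epsin}. The real task is to verify $\epsilon$-CI for the newly added $r_\tau$ itself. Using Theorems~\ref{theorem:pr_m(r_m,a)01} and~\ref{theorem:pr(a)01}, the conditional probability $Pr_\tau(r_\tau\mid c)$ can be rewritten in closed form as $(1-Pr_{\tau-1}(c))/[(d_\tau-1)Pr_{\tau-1}(c)+1]$, which is monotone in $Pr_{\tau-1}(c)$, so the simplified test from the end of Subsection~\ref{subsec:properites} applies. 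I would then combine the fact that $m_\tau$ was itself generated as a $\epsilon$-CIK-RS with the monotonicity in Theorems~\ref{theorem:pr(a)dandiao} and~\ref{theorem:Pr(|)daoPr(c)}, which show that the ordering of spent probabilities among coins of $m_\tau$ is preserved and their spread only shrinks over time, to conclude that the simplified inequality still holds.

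The hard part is the $\epsilon$-CI verification in Case~2 when $m_\tau$ contains nested earlier sub-RSs: the coins of $m_\tau$ no longer share a common prior and the degenerate boundary $d_\tau=1$ has to be treated separately. The natural fallback when the plain $r_\tau=m_\tau$ is too lopsided is to augment it with two fresh coins, which is possible whenever $|\mathbb{F}|\geq 2$ and drives $pr_{min}$ to $0$, automatically validating the simplified test. Reconciling this with the tight-budget regime $B=|m_\tau|$, where no fresh coin can be added, is the delicate step where both the hypothesis $|\mathbb{F}|\neq 1$ and the fact that $m_\tau$ was itself produced by a legitimate $\epsilon$-CIK procedure are used in their strongest form.
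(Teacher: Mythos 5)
Your proposal is essentially the paper's own proof: the same case split on whether $c_\tau$ is a fresh coin or lies in a super RS $m_\tau$, with the same witnesses $\{c_\tau, fc\}$ and $r_\tau = m_\tau$ respectively, and the same budget/DS checks. The only difference is that the paper dispatches the Case-2 worry in your last paragraph by applying Theorem~\ref{theorem:epsin} directly to $m_\tau$ itself --- since $m_\tau$ already satisfied $\epsilon$-CI when it was created and that theorem shows the ratio $Pr_k(r_j|c)/Pr_k(r_j|c')$ only shrinks as later RSs are appended, re-issuing its coin set as the new RS keeps every ratio within $e^{\epsilon}$ --- so no fresh-coin fallback or separate treatment of the tight-budget regime $B=|m_\tau|$ is needed.
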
\vspace{-3ex}
\begin{proof}
	When $m_\tau$ is a super RS, according to Theorem~\ref{theorem:epsin}, $r_{m+1} = m_\tau$ is a $\epsilon$-CIK-RS and its size satisfies the budget constraint. When $m_\tau$ is a fresh coin, since $|\mathbb{F}| \neq 1$, there is at least one fresh coin, denoted by $fc$. By Definition~\ref{def:CI}, $r_{m+1} = m_\tau \cup fc$ is also a CIK-RS and its size satisfies the budget constraint.
\end{proof}

\vspace{-2ex}

	\section{Progressive Approach}
\label{sec:dp}

In this section, to tackle the CIA-MS-DS problem, we propose an approach, namely the Progressive Algorithm. The main idea of the approach is that we first use a 0-1 knapsack algorithm to find a selection which satisfies the $\epsilon$-CIK constraint and the DS constraint. Then we greedily change the selection to make it satisfy the budget constraint.

\subsection{Progressive Algorithm}

\begin{algorithm}[t]
	\DontPrintSemicolon
	{
	\KwIn{A spent coin $c_\tau$, a budget $B$, a super RS set $\mathbb{SRS}$, and a fresh coin set $\mathbb{F}$}
	\KwOut{An eligible RS $r_{\tau}$}
	$\mathbb{M} = \mathbb{SRS} \cup \mathbb{F}$, $r_{\tau} = m_\tau$;\;
	\For{$i = 1$ to $|M|$}{
		$M_{i} = M \backslash \bigcup_{m_k \in M \land Pr_{max}^{k} > Pr_{max}^{i}} m_k$;\;
		\For{$j = 1$ to $|M|$}{
			\If{{\scriptsize$m_j \in M_{i}$, $Pr_{max}^{i}\geq Pr_{max}^{\tau}$ and $Pr_{min}^{j} \leq Pr_{min}^{\tau}$}}{
				$\omega_{i,j} = m_i \cup m_j \cup m_\tau$;\;
				{\scriptsize$M_{i,j} = M_i  \backslash (\omega_{i,j} \cup\bigcup_{m_k \in M_i \land Pr_{min}^{k} < Pr_{min}^{j}}m_k)$};\;
				update the diversity of each $m$ in $M_{i,j}$;\;
				calculate $\bar{d}_{i,j}$ and $O_{max}^{i,j}$;\;
				$C_{i,j} =  \delta$-KP($M_{i,j}, \bar{d}_{i,j}, O_{max}^{i,j}$);\;
				\If{$|C_{i,j}| + |\omega_{i,j}| > B$}{
					$\psi_{i,j} = \omega_{i,j}$;\; 
					\While{$|\psi_{i,j}| \leq B$}{
						greedily add $\bar{m}$ in $M_{i,j}$ whose ratio of the increase of $\psi_{i,j}$'s diveristy to its size is the largest to $\psi_{i,j}$;\;
						$M_{i,j} = M_{i,j} \backslash \bar{m}$;\;
					}
				}\Else{$\psi_{i,j} = \omega_{i,j} \cup C_{i,j}$;\;}
			}
		}
	}
	\Return{the $\psi_{i,j}$ with maximal diversity as $r_{\tau}$}\;}
	\caption{Progressive Algorithm}
	\label{algo:kp}
\end{algorithm}

The algorithm is inspired by the following three lemmas. The first lemma shows that the degree of the new RS $r_\tau$ in $\mathbb{R}_f$, denoted by $d_\tau$, is the sum of the degree of each super RS and fresh coin which is contained in $r_\tau$, i.e., $d_\tau = \sum_{srs_i \subseteq r_\tau}d_i + \sum_{fc_i \subseteq r_\tau}d_i$. Similar with the definitions of a super RS's attributes in Subsection~\ref{subsec:pre}, we define the attributes of a fresh coin as following. For each fresh coin, $fc_i$, its diversity $dive_i = 1$, $ns_i = 0$ and its diversity $d_i =1$, since there is only one fresh coin in $fc_i$ which is not contained in any RS in $\mathbb{R}_f$. Besides, since $fc_i$ is not contained in any RS in $\mathbb{R}_f$, its $Pr_{max}^i = Pr_{min}^i =0$. \vspace{-1ex}

\begin{lemma}
	\label{lemma:sum_d}
	For the new RS $r_\tau$, its degree $d_\tau$ is the sum of the degree of each super RS and fresh coin which is contained in $r_\tau$, i.e., $d_\tau = \sum_{srs_i \subseteq r_\tau}d_i + \sum_{fc_i \subseteq r_\tau}d_i$.
\end{lemma}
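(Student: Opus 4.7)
The plan is to directly unpack the defining formula $d_\tau = |r_\tau| - |\{r_k \in \mathbb{R}_\tau : r_k \subseteq r_\tau\}|$ stated at the end of Subsection~4.3 and check that it decomposes additively along the pieces that make up $r_\tau$. By the DS constraint we may write $r_\tau = \bigl(\bigcup_{srs_i \in S} srs_i\bigr) \cup \bigl(\bigcup_{fc_j \in F}\{fc_j\}\bigr)$ for some $S \subseteq \mathbb{SRS}$ and $F \subseteq \mathbb{F}$.

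First I would argue that this union is disjoint. By Definition~\ref{def:DS-RRSS} any two prior RSs are either disjoint or one contains the other, and by Definition~\ref{def:SRS} a super RS cannot be a proper subset of any later RS. If two super RSs $srs_i, srs_j \in S$ shared a coin, they would have to be nested, and the earlier one would then sit properly inside a later one, violating its super-ness. Fresh coins lie in no prior RS (Definition~\ref{def:fresh coins}), hence are disjoint from every $srs_i$. This yields
\[ |r_\tau| = \sum_{srs_i \in S} |srs_i| + |F|. \]

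Next I would count $ns_\tau := |\{r_k \in \mathbb{R}_\tau : r_k \subseteq r_\tau\}|$. For any such $r_k$, the coins of $r_k$ are not fresh, so $r_k \subseteq \bigcup_{srs_i \in S} srs_i$. Using disjoint-superset together with the pairwise disjointness just established, if $r_k$ straddled two distinct $srs_i$ it would properly contain at least one of them, contradicting that super RS's super-ness. Hence $r_k$ sits inside a unique $srs_i \in S$, and the subsets of $r_\tau$ in $\mathbb{R}_\tau$ partition cleanly according to which $srs_i$ they live in, giving $ns_\tau = \sum_{srs_i \in S} ns_i$, where $ns_i$ is exactly the count used to define $d_i$ in Subsection~\ref{subsec:pre}.

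Finally, substituting into the defining formula and noting that $\{srs_i \in \mathbb{SRS} : srs_i \subseteq r_\tau\}$ coincides with $S$ (by the pairwise disjointness of super RSs, any super RS inside $r_\tau$ must equal one of the $srs_i \in S$), I would obtain
\[ d_\tau = |r_\tau| - ns_\tau = \sum_{srs_i \in S} (|srs_i| - ns_i) + |F| = \sum_{srs_i \subseteq r_\tau} d_i + \sum_{fc_j \subseteq r_\tau} d_j, \]
since each fresh coin contributes $d_j = 1$ by its definition. The main obstacle is the subset-partition step: one must carefully combine Definitions~\ref{def:DS-RRSS} and~\ref{def:SRS} to rule out any prior RS that spans two chosen super RSs or slips into the fresh-coin portion, so that both $|r_\tau|$ and $ns_\tau$ split over the chosen components in exactly the same way.
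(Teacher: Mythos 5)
Your proof is correct and follows essentially the same route as the paper's: decompose $|r_\tau|$ additively over the disjoint super RSs and fresh coins, show the prior subset-RSs of $r_\tau$ partition according to which super RS contains them so that $ns_\tau = \sum_{srs_i \subseteq r_\tau} ns_i$, and subtract. You are in fact slightly more careful than the paper, which only verifies that the $Sub_i$ are disjoint and contained in $Sub_\tau$ but leaves implicit the converse direction you spell out (that every prior RS contained in $r_\tau$ must lie inside some chosen $srs_i$).
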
\vspace{-3ex}

\begin{proof}
	Suppose $Sub_i$ is the set of RSs in $\mathbb{R}_f$ which are the subsets of $srs_i$ and $Sub_\tau$ is the set of RSs in $\mathbb{R}_f$ which are the subset of $r_\tau$. Thus, $d_\tau = |r_\tau| - |Sub_\tau| = (\sum_{srs_i \subseteq r_\tau}|srs_i| + \sum_{fc_i \subseteq r_\tau}1) - |Sub_\tau|$. By Definition~\ref{def:DS-RRSS}, $\forall srs_i, srs_j \in \mathbb{SRS}$, $Sub_i \cap Sub_j = \emptyset$. Besides, $\forall srs_i \subseteq r_\tau, \forall r \in Sub_i$, $r \in Sub_\tau$, since $r \subseteq srs_i \subseteq r_\tau$. Thus,$|Sub_\tau|$ = $\sum_{srs_i \subseteq r_\tau}ns_i$. Thus, $d_\tau = (\sum_{srs_i \subseteq r_\tau}|srs_i| + \sum_{fc_i \subseteq r_\tau}1) - \sum_{srs_i \subseteq r_\tau}ns_i = (\sum_{srs_i \subseteq r_\tau}|srs_i| -\sum_{srs_i \subseteq r_\tau}ns_i)+ \sum_{fc_i \subseteq r_\tau}1 = \sum_{srs_i \subseteq r_\tau}d_i + \sum_{fc_i \subseteq r_\tau}d_i$.
\end{proof}

As defined in Subsection~\ref{subsec:properites}, for a RS $r_\tau$, $Pr_{max}$ = $\max \{pr_m(c)|$ $c \in r_{\tau}\}$ and $Pr_{min} = \min \{pr_m(c)| c \in r_{\tau} \}$. We will show that when $e^\epsilon\cdot Pr_{min} \cdot (1 - Pr_{max}) - Pr_{max} \cdot (1-Pr_{min}) \geq 0$, the new RS must be a $\epsilon$-CIK-RS (which is defined in Definition~\ref{def:CI-keeping}). Besides, when $e^\epsilon\cdot Pr_{min} \cdot (1 - Pr_{max}) - Pr_{max} \cdot (1-Pr_{min}) \leq 0$, if the degree $d_\tau$ of the new RS $r_\tau$ is smaller than $\frac{(e^\epsilon-1) \cdot (1-Pr_{max}) \cdot (Pr_{min} -1)}{e^\epsilon \cdot \cdot Pr_{min} \cdot (1 - Pr_{max}) - Pr_{max} \cdot (1-Pr_{min})}$, the new RS $r_\tau$ must be a $\epsilon$-CIK-RS.

\begin{lemma}
	\label{lemma:upper}
	For a new RS $r_\tau$, it is a $\epsilon$-CIK-RS, if $e^\epsilon\cdot Pr_{min} \cdot (1 - Pr_{max}) - Pr_{max} \cdot (1-Pr_{min}) \geq 0$. Besides, it is a $\epsilon$-CIK-RS, if $e^\epsilon\cdot Pr_{min} \cdot (1 - Pr_{max}) - Pr_{max} \cdot (1-Pr_{min}) < 0$ and $d_\tau \leq \frac{(e^\epsilon-1) \cdot (1-Pr_{max}) \cdot (Pr_{min} -1)}{e^\epsilon \cdot \cdot Pr_{min} \cdot (1 - Pr_{max}) - Pr_{max} \cdot (1-Pr_{min})}$.
\end{lemma}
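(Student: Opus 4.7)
The plan is to leverage the characterization of $\epsilon$-CIK checking that was derived just before this lemma. By Theorem~\ref{theorem:epsin}, the fact that every RS in $\mathbb{SRS}$ already satisfies $\epsilon$-CI implies those RSs continue to satisfy $\epsilon$-CI after $r_\tau$ is appended, so it suffices to verify $\epsilon$-CI for $r_\tau$ itself. Combining Theorems~\ref{theorem:pr(r,a)01},~\ref{theorem:pr_m(r_m,a)01}, and~\ref{theorem:pr(a)01}, I would write, for any $c\in r_\tau$,
\begin{equation*}
Pr_\tau(r_\tau\mid c)=\frac{Pr_\tau(r_\tau,c)}{Pr_\tau(c)}=\frac{(1-Pr_m(c))/d_\tau}{Pr_m(c)+(1-Pr_m(c))/d_\tau}=\frac{1-Pr_m(c)}{(d_\tau-1)Pr_m(c)+1}.
\end{equation*}
Viewed as a function of $Pr_m(c)$, this ratio is strictly decreasing on $[0,1]$ (the numerator falls while the denominator rises), so the maximum of $Pr_\tau(r_\tau\mid\cdot)$ over $r_\tau$ is attained at $Pr_{min}$ and the minimum at $Pr_{max}$. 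Consequently $r_\tau$ satisfies $\epsilon$-CI iff the single worst-case inequality holds:
\begin{equation*}
e^{\epsilon}(1-Pr_{max})\bigl[(d_\tau-1)Pr_{min}+1\bigr]\;\geq\;(1-Pr_{min})\bigl[(d_\tau-1)Pr_{max}+1\bigr].
\end{equation*}

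Next I would expand and collect by $d_\tau-1$. Writing $A=e^{\epsilon}Pr_{min}(1-Pr_{max})-Pr_{max}(1-Pr_{min})$ and $B=e^{\epsilon}(1-Pr_{max})-(1-Pr_{min})$, the displayed inequality becomes $(d_\tau-1)\,A+B\geq 0$. A short algebraic identity I would verify is
\begin{equation*}
B-A=(1-Pr_{max})(1-Pr_{min})(e^{\epsilon}-1),
\end{equation*}
which is the key computation that makes the bound in the lemma come out cleanly.

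The two cases in the lemma then correspond to the sign of $A$. When $A\geq 0$, I would first observe that $A\geq 0$ rearranges to $e^{\epsilon}\geq\frac{Pr_{max}(1-Pr_{min})}{Pr_{min}(1-Pr_{max})}$, and since $Pr_{max}\geq Pr_{min}$ implies $\frac{Pr_{max}}{Pr_{min}}\geq 1$, this forces $e^{\epsilon}\geq\frac{1-Pr_{min}}{1-Pr_{max}}$, i.e.\ $B\geq 0$. Combined with $d_\tau\geq 1$, the inequality $(d_\tau-1)A+B\geq 0$ holds unconditionally, giving the first statement. When $A<0$, the inequality $(d_\tau-1)A+B\geq 0$ is equivalent to $d_\tau\leq 1+B/(-A)=(B-A)/(-A)$, and substituting the identity above yields exactly
\begin{equation*}
d_\tau\;\leq\;\frac{(e^{\epsilon}-1)(1-Pr_{max})(1-Pr_{min})}{-A}\;=\;\frac{(e^{\epsilon}-1)(1-Pr_{max})(Pr_{min}-1)}{A},
\end{equation*}
which is the bound asserted in the lemma.

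The main obstacle I expect is not conceptual but notational: keeping the signs straight while moving between $A$ and $-A$, and between $(1-Pr_{min})$ and $(Pr_{min}-1)$, so that the final bound matches the statement verbatim. The only genuinely non-mechanical step is the implication $A\geq 0\Rightarrow B\geq 0$ in Case~1, which relies on the structural fact $Pr_{max}\geq Pr_{min}$; everything else is a rearrangement anchored by the identity for $B-A$.
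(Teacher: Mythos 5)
Your proposal is correct and follows essentially the same route as the paper: both reduce the $\epsilon$-CIK check to the single worst-case inequality $e^{\epsilon}(1-Pr_{max})[(d_\tau-1)Pr_{min}+1]\geq(1-Pr_{min})[(d_\tau-1)Pr_{max}+1]$, rewrite it as a linear condition in $d_\tau$ with the same coefficient $A$ (the paper's $X$), and split on the sign of that coefficient; your identity $B-A=(1-Pr_{max})(1-Pr_{min})(e^{\epsilon}-1)$ is exactly the algebra the paper leaves implicit when it asserts the condition is equivalent to $X\cdot d_\tau\geq Y$. The only cosmetic difference is that in the first case the paper argues directly from $Y\leq 0$ rather than from $B\geq 0$, which also sidesteps the division by $Pr_{min}(1-Pr_{max})$ in your argument (a step that needs the trivial edge cases $Pr_{min}=0$ or $Pr_{max}=1$ handled separately, e.g.\ via $B\geq A$).
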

\begin{proof}
	As proved in Subsection~\ref{subsec:properites}, to verify if the new RS $r_\tau$ is a $\epsilon$-CIK-RS, we just need to verify if $\frac{(1-pr_{max})\cdot e^\epsilon}{(d_{\tau}-1)\cdot pr_{max}+1} \geq \frac{(1-pr_{min})}{(d_{\tau}-1)\cdot pr_{min}+1}$. Since $(d_\tau - 1) \cdot Pr_{max} + 1$ and $(d_\tau - 1) \cdot Pr_{min} + 1$ must be positive, to verify $\frac{(1-pr_{max})\cdot e^\epsilon}{(d_{\tau}-1)\cdot pr_{max}+1} \geq \frac{(1-pr_{min})}{(d_{\tau}-1)\cdot pr_{min}+1}$ is the same as to verify $e^\epsilon \cdot (1-Pr_{max}) \cdot [(d_\tau - 1) \cdot Pr_{min} + 1] \geq [(d_\tau -1) \cdot Pr_{max} +1] \cdot (1- Pr_{min})$.
	
	Suppose $X = e^\epsilon \cdot Pr_{min} \cdot (1 - Pr_{max}) - Pr_{max} \cdot (1 - Pr_{min})$ and $Y = (e^\epsilon -1) \cdot (1 - Pr_{max}) \cdot (Pr_{min} - 1)$. Thus, to verify if $\frac{(1-pr_{max})\cdot e^\epsilon}{(d_{\tau}-1)\cdot pr_{max}+1} \geq \frac{(1-pr_{min})}{(d_{\tau}-1)\cdot pr_{min}+1}$, we just need to verify if $X \cdot d_\tau \geq Y$. Since $e^\epsilon \geq 1$, $Pr_{max} \leq 1$ and $Pr_{min} \leq 1$, $Y \leq 0$. Thus, if $X \geq 0$, it always hold that $X \cdot d_\tau \geq Y$, which means, $r_\tau$ is a $\epsilon$-CIK-RS. Besides, if $X < 0$ and $ d_\tau \leq \frac{Y}{X}$, it holds that $X \cdot d_\tau \geq \frac{Y}{X}$, which means $r_\tau$ is a $\epsilon$-CIK-RS.
\end{proof}

Besides, suppose $O_{max}$ is the maximal number of coins which are outputted by the the same historical transaction. We show the relationship between the diversity of the new RS $r_\tau$ and the summation of the diversity of each super RS and fresh coin which is contained in $r_\tau$. 

\begin{lemma}
	\label{lemma:diversity}
	For a new RS $r_\tau$, we have $\sum_{srs_i \subseteq r_\tau}dive_i + \sum_{fc_i \subseteq r_\tau}1$ $\geq dive_\tau \geq$ $\sum_{srs_i \subseteq r_\tau}\frac{dive_i}{O_{max}} + \sum_{fc_i \subseteq r_\tau}\frac{1}{O_{max}}$, where $d_\tau$ is the diversity of the new RS $r_\tau$. 
\end{lemma}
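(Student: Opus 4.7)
The plan is to view both bounds through the lens of the bipartite incidence structure between the historical transactions and the ``atoms'' of $r_\tau$, where an \emph{atom} is either a super RS $srs_i \subseteq r_\tau$ or a fresh coin $fc_i \subseteq r_\tau$. By the DS constraint, these atoms are pairwise disjoint and together cover exactly the coins of $r_\tau$. Let $T_a$ denote the set of historical transactions that output at least one coin belonging to atom $a$, so $|T_a| = dive_i$ when $a = srs_i$ and $|T_a| = 1$ when $a = fc_i$. By definition of $dive_\tau$, the set $T = \bigcup_a T_a$ has cardinality $dive_\tau$.

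For the upper bound I would simply apply the union bound: $dive_\tau = |\bigcup_a T_a| \leq \sum_a |T_a| = \sum_{srs_i \subseteq r_\tau} dive_i + \sum_{fc_i \subseteq r_\tau} 1$. Equality fails exactly when the same historical transaction outputs coins that land in two distinct atoms of $r_\tau$, so this really is only an upper bound.

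For the lower bound I would use a double-counting argument on the set $S = \{(t, a) : t \in T_a\}$. Summing by atoms gives $|S| = \sum_{srs_i \subseteq r_\tau} dive_i + \sum_{fc_i \subseteq r_\tau} 1$. Summing by transactions gives $|S| = \sum_{t \in T} |\{a : t \in T_a\}|$. The key observation is that for a fixed transaction $t$, the number of atoms it can contribute to is bounded by the number of coins $t$ outputs into $r_\tau$, because the atoms are pairwise disjoint and each coin of $t$ lies in at most one atom. Since $t$ outputs at most $O_{max}$ coins in total, we get $|\{a : t \in T_a\}| \leq O_{max}$. Hence $|S| \leq O_{max} \cdot |T| = O_{max} \cdot dive_\tau$, which after rearranging yields the claimed lower bound.

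The main obstacle is really the lower bound, specifically justifying $|\{a : t \in T_a\}| \leq O_{max}$. This relies on the DS constraint (so that atoms are genuinely disjoint and each coin of $t$ belongs to a unique atom) combined with the definition of $O_{max}$; both pieces are already available, so the proof should go through cleanly once the incidence structure is set up.
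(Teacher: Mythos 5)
Your proposal is correct and takes essentially the same approach as the paper: the upper bound is the identical union-bound/subadditivity step on the transaction sets of the disjoint atoms, and the lower bound rests on the same key fact that a single historical transaction accounts for at most $O_{max}$ coins. The only cosmetic difference is that you double-count transaction--atom incidences directly, whereas the paper routes through $dive_\tau \geq \frac{|r_\tau|}{O_{max}}$ and then applies $dive_i \leq |srs_i|$; both chains land on the same bound.
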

\begin{proof}
	In a set of coins, some coins may be outputted by the same historical transaction. Since $O_{max}$ is the maximal number of coins which are outputted by the same historical transaction, it must hold that $dive_\tau \geq \frac{|r_\tau|}{O_{max}}$. Since for any super RS $srs_i$ its diversity $dive_i \leq |srs_i|$, $\frac{dive_i}{O_{max}} \leq \frac{|srs_i|}{O_{max}}$. Therefore, $dive_\tau \geq \frac{|r_\tau|}{O_{max}} = \frac{\sum_{srs_i \subseteq r_\tau}|srs_i|+\sum_{fc_i \subseteq r_\tau}1}{O_{max}} \geq \sum_{srs_i \subseteq r_\tau}\frac{dive_i}{O_{max}} + \sum_{fc_i \subseteq r_\tau}\frac{1}{O_{max}}$.
	
	In addition, denote the set of historical transactions which output coins in $srs_i$ as $HT_i$ and the set of historical transactions which output coins in $r_\tau$ as $HT_\tau$. Since a historical transaction set $HT_i$ may intersect with another historical transaction set $HT_j$ on some historical transactions, $|HT_i| + |HT_j| \geq |HT_i \cup HT_j|$. Thus, $d_\tau = |HT_\tau| = |\bigcup_{srs_i \subseteq r_\tau}HT_i \cup \bigcup_{fc_i}t_i| \leq \sum_{srs_i \subseteq r_\tau}|HT_i| + \sum_{fc_i \subseteq r_\tau}1 = \sum_{srs_i \subseteq r_\tau}dive_i + \sum_{fc_i \subseteq r_\tau}1$.
\end{proof}
	
Thus, by Lemma~\ref{lemma:sum_d} and Lemma~\ref{lemma:upper}, we can transform the $\epsilon$-CIK constraint to the constraint of degree. Specifically, we enumerate all $Pr_{max}$-$Pr_{min}$ pairs and for each pair, we calculate the upper bound of the degree of the new RS, denoted as $\bar{d}$. We require the degree of the new RS cannot exceed $\bar{d}$. Besides, by Lemma~\ref{lemma:diversity}, we can approximately estimate the diversity of the new RS.

Inspired by aforementioned lemmas, we design the Progressive Algorithm. Algorithm~\ref{algo:kp} shows the pseudo-code of our Progressive Algorithm. By the DS constraint, $r_{\tau}$ is composed of some super RS in $\mathbb{SRS}$ or some fresh coins in $\mathbb{F}$. In other words, each super RS and fresh coin is a candidate module of the new RS $r_\tau$. Thus, we get the module set $\mathbb{M}$ by combing $\mathbb{SRS}$ and $\mathbb{F}$ (line 1). By the DS constraint, $r_{\tau}$ has to contain $m_\tau$, where $m_\tau$ is the module in $\mathbb{M}$ that contains $c_\tau$ (line 1). Then we enumerate all $Pr_{max}$-$Pr_{min}$ pairs (line 2 -17), where $m_i$ is the module in the new RS whose $Pr_{max}^i$ is the maximum and $m_j$ is the module in the new RS whose $Pr_{min}^i$ is the minimum. Thus, for the given $Pr_{max}$-$Pr_{min}$ pair, $\omega_{i,j} = m_i \cup m_j \cup m_\tau$ must be contained in the new RS (line 6). Besides, the set of candidate modules, which can be selected in the new RS, is $M_{i,j} = M \backslash (\bigcup_{Pr_{max}^{k} > Pr_{max}^{i}} m_k \cup \omega_{i,j} \cup\bigcup_{Pr_{min}^{k} < Pr_{min}^{j}}m_k)$ (line 7). In addition, we update the diversity of each module in $M_{i,j}$ (line 8). Specifically, the diversity of each module is updated by the number of transactions outputting the coins in modules, excluding the transactions outputting the coins in $\omega_{i,j}$. Then, we calculate the upper bound, $\bar{d}_{i,j}$, of the degree of the new RS and the $O_{max}^{i,j}$, which is the maximal number of coins in $M_{i,j}$ which are outputted by the same transaction (line 9). Then, we run the $\delta$-KP Algorithm (line 10), where the item set is $M_{i,j}$, the weight of the item $m_i$ is $d_i$, the value of the item $m_i$ is $\frac{dive_i}{O_{max}^{i,j}}$, and the capacity of the knapsack is $\bar{d}_{i,j}$. The $\delta$-KP Algorithm is the dynamic programming algorithm~\cite{2013approximation} whose precision parameter is $\delta$. The selection $C_{i,j}$ from the $\delta$-KP Algorithm may violate the budget constraint. Then, we greedily select modules in $C_{i,j}$ to let $\psi_{i,j}$ satisfy the budget constraint (line 11-15). For each module $m_t$ in the selection, we calculate its increase ratio $\rho_t = \frac{dive_{\psi_{i,j}'} - dive_{\psi_{i,j}}}{|m_t|}$, where $\psi_{i,j}' = \psi_{i,j} \cup m_t$. For each iteration, we add the module $m_t$ with the largest $\rho_t$ (14). If the selection $C_{i,j}$ from the $\delta$-KP Algorithm satisfies the budget constraint, we set $\psi_{i,j} = \omega_{i,j} \cup C_{i,j}$ (line 16-17). Finally, we return the $\psi_{i,j}$ with the largest diversity as $r_{\tau}$ (line 18).

\subsection{Theoretic Analyses}

We first  prove the approximate ratio of the $\delta$-KP Algorithm.


\begin{theorem} 
	\label{theorem:kp approximation}
	Suppose $C_{i,j}^*$ is the selection of modules in $M_{i,j}$ whose degree is smaller than $\bar{d}_{i,j}$ and the number of historical transactions outputting coins in $C_{i,j}^*$ is the largest. Denote the diversity of $C_{i,j}^*$ as $dive^*$ and the diversity of $C_{i,j}$ as $dive_C$. It holds that $\frac{dive_C}{dive^*} \geq \frac{1-\delta}{O_{max}}$.
\end{theorem}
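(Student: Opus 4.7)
The plan is to chain the FPTAS guarantee of the $\delta$-KP routine with the two-sided sandwich bounds supplied by Lemma~\ref{lemma:diversity}, using Lemma~\ref{lemma:sum_d} as the bridge that makes the knapsack's degree constraint agree with the combined RS's degree. Let $V(C)=\sum_{m_i\in C} dive_i/O_{max}^{i,j}$ denote the knapsack value of a module subset $C\subseteq M_{i,j}$, and let $C^{KP}$ denote the exact optimum of the knapsack instance with item weights $d_i$, item values $dive_i/O_{max}^{i,j}$ and capacity $\bar d_{i,j}$. By the FPTAS property of $\delta$-KP I have $V(C_{i,j})\ge(1-\delta)\,V(C^{KP})$.

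The remaining argument is a four-step chain. Step one: apply the lower half of Lemma~\ref{lemma:diversity} to the RS obtained as the union of modules in $C_{i,j}$, which yields $dive_C\ge V(C_{i,j})$. Step two: observe that $C^*_{i,j}$ is feasible for the knapsack: by Lemma~\ref{lemma:sum_d} the degree of its union-RS equals $\sum_{m_i\in C^*_{i,j}} d_i$, and by the definition of $C^*_{i,j}$ this is at most $\bar d_{i,j}$; hence $V(C^{KP})\ge V(C^*_{i,j})$. Step three: apply the upper half of Lemma~\ref{lemma:diversity} to $C^*_{i,j}$, giving $dive^*\le\sum_{m_i\in C^*_{i,j}} dive_i$, which rearranges to $V(C^*_{i,j})\ge dive^*/O_{max}^{i,j}$. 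Step four: concatenate and use $O_{max}^{i,j}\le O_{max}$ to obtain
\[
dive_C \;\ge\; V(C_{i,j}) \;\ge\; (1-\delta)\,V(C^{KP}) \;\ge\; (1-\delta)\,V(C^*_{i,j}) \;\ge\; \frac{(1-\delta)\,dive^*}{O_{max}},
\]
which rearranges to the claim.

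The main obstacle is keeping the two lossy translations between real diversity and the knapsack surrogate value consistent in opposite directions: on the algorithm's side I must push the surrogate sum \emph{below} the true diversity of the union RS, whereas on the optimum's side I must push the true diversity \emph{below} the surrogate sum. Both inequalities are precisely the two halves of Lemma~\ref{lemma:diversity}, and the single $O_{max}$ factor in the final ratio is exactly the combined price of these two slacks. A minor bookkeeping point is that $V$ is defined with the batch-local $O_{max}^{i,j}$, so I finish by replacing it with the global $O_{max}$, which only weakens the bound and matches the theorem's stated form.
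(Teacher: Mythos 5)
Your proof is correct and follows essentially the same route as the paper's: both arguments sandwich the true diversity between the knapsack surrogate values via the two halves of Lemma~\ref{lemma:diversity}, invoke the $(1-\delta)$ FPTAS guarantee against the knapsack optimum, use the feasibility of $C^*_{i,j}$ to compare that optimum with $V(C^*_{i,j})$, and finish by relaxing $O_{max}^{i,j}$ to $O_{max}$. Your version is slightly more careful in that it makes the feasibility of $C^*_{i,j}$ explicit via Lemma~\ref{lemma:sum_d}, a step the paper leaves implicit.
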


\begin{proof}
	Denote the value of each module $m_t$ in $\delta$-KP as $dive_t^\#$. 
	Denote the optimal selection of $\delta$-KP when the input item set is $M_{i,j}$ as $OPT$.
	
	By Lemma~\ref{lemma:diversity}, $dive_C \geq \sum_{m_t \in C_{i,j}}\frac{dive_t}{O_{max}^{i,j}} = \sum_{m_t \in C_{i,j}}dive_t^\#$ and $\sum_{m_t \in C^*_{i,j}}dive_t^\# = \sum_{m_t \in C^*_{i,j}}\frac{dive_i}{O_{max}^{i,j}}\geq \frac{dive^*}{O_{max}^{i,j}}$. Therefore, by~\cite{2013approximation}, $dive_{C} \geq \sum_{m_t \in C_{i,j}}dive_t^\#$ $\geq (1-\delta) \cdot \sum_{m_t \in OPT}dive_t^\#$ $\geq (1-\delta) \cdot \sum_{m_t \in C^*_{i,j}}dive_t^\# \geq (1-\delta) \cdot \frac{dive^*}{O_{max}}$. Therefore, $\frac{dive_{C}}{dive^*} \geq \frac{1-\delta}{O_{max}}$.
\end{proof}

Then, based on Theorem~\ref{theorem:kp approximation}, we give the proof of the approximate ratio of the Progressive Algorithm. 

\begin{theorem}
	\label{theorem:twostep approximation}
	The approximate ratio of the Progressive Algorithm is $\min \{\frac{1-\delta}{O_{max}}, \frac{O_{min}}{O_{max}} \cdot \frac{B-S^+}{B}\}$, where $S^+$ is the maximal size of a module in $\mathbb{M}$.
\end{theorem}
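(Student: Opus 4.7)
The plan is to bound the approximation ratio separately for the two code paths of Algorithm~\ref{algo:kp} (the KP-fits-budget branch and the greedy fallback branch), and then take the worst of the two. I first localize the argument to the ``correct'' $(i^*,j^*)$ pair so that I can compare the algorithm's output to the optimum on a single iteration.

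\emph{Step 1 (Pick the right $(i,j)$).} Let $r_\tau^*$ be an optimal CIA-MS-DS solution, with diversity $dive^*$. Let $m_{i^*}\in r_\tau^*$ achieve $Pr_{max}^{r_\tau^*}$ and $m_{j^*}\in r_\tau^*$ achieve $Pr_{min}^{r_\tau^*}$. Since the algorithm returns the best $\psi_{i,j}$ over all pairs, it suffices to lower-bound $dive(\psi_{i^*,j^*})$. By construction $M_{i^*,j^*}$ only excludes modules violating the $Pr_{max}$/$Pr_{min}$ bounds of $r_\tau^*$, so $r_\tau^*\setminus \omega_{i^*,j^*}\subseteq M_{i^*,j^*}$; moreover, combining Lemma~\ref{lemma:sum_d} with Lemma~\ref{lemma:upper} implies $r_\tau^*\setminus \omega_{i^*,j^*}$ respects the degree cap $\bar d_{i^*,j^*}$ and is therefore a feasible instance of the $\delta$-KP subproblem.

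\emph{Step 2 (Case A: the KP solution fits).} If $|C_{i^*,j^*}|+|\omega_{i^*,j^*}|\le B$, then $\psi_{i^*,j^*}=\omega_{i^*,j^*}\cup C_{i^*,j^*}$. The optimum KP objective $dive(C^*_{i^*,j^*})$ is at least $dive(r_\tau^*\setminus \omega_{i^*,j^*})$ by Step 1, and Theorem~\ref{theorem:kp approximation} then gives $dive(C_{i^*,j^*})\ge \tfrac{1-\delta}{O_{max}}\cdot dive(C^*_{i^*,j^*})$. Using $dive(\psi_{i^*,j^*})\ge dive(\omega_{i^*,j^*})+dive(C_{i^*,j^*})-\text{overlap}$ and the fact that $dive^*\le dive(\omega_{i^*,j^*})+dive(r_\tau^*\setminus\omega_{i^*,j^*})$, a short manipulation shows $dive(\psi_{i^*,j^*})/dive^*\ge (1-\delta)/O_{max}$.

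\emph{Step 3 (Case B: greedy fallback).} If the budget is violated, the algorithm starts from $\omega_{i^*,j^*}$ and greedily adds modules of best diversity-to-size ratio. When the loop terminates, no remaining module of size $\le S^+$ can still be added without overflowing $B$, so $|\psi_{i^*,j^*}|\ge B-S^+$. Applying the lower bound of Lemma~\ref{lemma:diversity} to $\psi_{i^*,j^*}$ yields $dive(\psi_{i^*,j^*})\ge |\psi_{i^*,j^*}|/O_{max}\ge (B-S^+)/O_{max}$. On the optimum side, $|r_\tau^*|\le B$ and each historical transaction contributing to $r_\tau^*$ contributes at least $O_{min}$ coins, so $dive^*\le |r_\tau^*|/O_{min}\le B/O_{min}$. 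Dividing the two bounds gives
\[
\frac{dive(\psi_{i^*,j^*})}{dive^*}\ \ge\ \frac{O_{min}}{O_{max}}\cdot\frac{B-S^+}{B}.
\]

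\emph{Step 4 (Combine).} Since in every execution one of the two branches is taken for the pair $(i^*,j^*)$, the algorithm's output achieves the smaller of the two ratios, which is exactly the claim.

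The main obstacle is Case A: Theorem~\ref{theorem:kp approximation} bounds the algorithm's KP output against $C^*_{i^*,j^*}$ rather than against $r_\tau^*$ itself, and the $\omega_{i^*,j^*}$ add-on has to be absorbed cleanly so that the slack does not degrade the ratio beyond $(1-\delta)/O_{max}$. A secondary subtlety in Case B is ensuring that the assumption underlying the upper bound $dive^*\le |r_\tau^*|/O_{min}$ (namely that $O_{min}$ really is a valid per-transaction lower bound on the number of outputs contributed to any modular selection) is consistent with the definition used by Lemma~\ref{lemma:diversity}; if not, the bound must be re-proved using the granularity at which modules are constructed.
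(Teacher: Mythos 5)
Your proposal is correct and follows essentially the same route as the paper's proof: localize to the pair $(i^*,j^*)$ determined by the optimal solution's extremal modules, split on whether the $\delta$-KP selection fits the budget, apply Theorem~\ref{theorem:kp approximation} in the first case and the size bounds $|\psi_{i^*,j^*}|\ge B-S^+$, $dive_{opt}\le B/O_{min}$ in the second. The ``short manipulation'' you defer in Case~A is handled in the paper by defining the diversities so that $dive(\psi)=dive(\omega)+dive(C)$ with no overlap (line~8 of the algorithm updates module diversities to exclude transactions already counted in $\omega$), after which the mediant inequality $\frac{a+\rho b}{a+b}\ge\rho$ for $\rho\le 1$ gives the ratio directly.
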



\begin{proof}
	Suppose the RS with maximal diversity is $r_{opt}$ and the RS generated by the Progressive Algorithm is $r_{p}$. Suppose $m_h$ is the module whose $pr_{max}^i$ is the highest in the $r_{opt}$, and $m_s$ is the module whose $pr_{min}^i$ is the smallest in the $r_{opt}$. Denote the number of historical transactions which outputs the coins in $C_{h,s}$ and does not outputs the coins in $\omega_{h,s}$ as $dive_{h,s}$. Besides, denote the diversity of $r_{opt}$ as $dive_{opt}$, the diversity of $r_{p}$ as $dive_{p}$, the diversity of $\omega_{h,s}$ as $dive_{h,s}^\omega$, and the diversity of $\psi_{h,s}$ as $dive_{h,s}^\psi$. Thus, $dive_{p} \geq dive_{h,s}^\psi$.
	
	When $|C_{h,s}| + |\omega_{i,j}| \leq B$, $dive_{h,s}^\psi = dive_{h,s}^\omega + dive_{h,s}^c$. Suppose $OPT'$ is the selection of modules in $M_{h,s}$ whose degree is smaller than $\bar{d}_{h,s}$ and the number of historical  transactions outputting coins in $OPT'$ is the largest. We denote the diversity of $OPT'$ as $dive_{opt}'$. Therefore, $dive_{opt} \leq dive_{opt}'+dive_{h,s}^\omega$. As proved in Theorem~\ref{theorem:kp approximation}, $dive_p \geq dive_{h,s}^c \geq dive_{opt}' \cdot \frac{1-\delta}{O_{max}^{i,j}}$. Therefore, $\frac{dive_{p}}{dive_{opt}} \geq \frac{dive_{h,s}^\omega + dive_{h,s}^c}{dive_{h,s}^\omega + dive_{opt}'} \geq \frac{dive_{h,s}^\omega + dive_{opt}' \cdot \frac{1-\delta}{O_{max}^{h,s}}}{dive_{h,s}^\omega + dive_{opt}'} \geq \frac{1-\delta}{O_{max}^{h,s}} \geq \frac{1-\delta}{O_{max}}.$
	
	When $|C_{h,s}| + |\omega_{i,j}| > B$, $B \geq |\psi_{h,s}| \geq B - S^+$. Since $dive_{h,a}^\psi \geq \frac{|\psi_{h,s}|}{O_{max}}$, $dive_{p} \geq dive_{h,s}^{\psi} \geq \frac{B - S^+}{O_{max}}$. Since $dive_{opt} \leq \frac{|r_{opt}|}{O_{min}}$ and $|r_{opt}| \leq B$, $dive_{opt} \leq \frac{B}{O_{min}}$. Therefore, $\frac{dive_{p}}{dive_{opt}} \geq \frac{O_{min}}{O_{max}} \cdot \frac{B-S^+}{B}$.
\end{proof}

Next, we analyze the time complexity of  Progressive Algorithm.

\begin{theorem}
	\label{theorem:twostep complexity}
	The time complexity of Progressive Algorithm is $\mathcal{O}(\frac{n^5}{\delta})$, where $n = |\mathbb{M}| = |\mathbb{SRS}| + |\mathbb{F}|$ and $\delta$ is the precision parameter of the $\delta$-KP Algorithm.
\end{theorem}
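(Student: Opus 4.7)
The plan is to bound the cost of Algorithm~\ref{algo:kp} by multiplying the number of outer iterations by the dominant work done per iteration, and to identify the $\delta$-KP call as the bottleneck.

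First I would fix notation: let $n = |\mathbb{M}|$, and observe that the algorithm's skeleton is a double loop in $i$ and $j$ (lines 2--17), so there are exactly $n^2$ outer iterations. Everything else happens either once per outer iteration or once at the very end (line 18), so the overall complexity is $n^2$ times the maximum per-iteration cost, plus the $\mathcal{O}(n^2)$ cost of scanning all $\psi_{i,j}$ at the end, which is absorbed.

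Next I would account for the per-iteration work line by line. The set $M_i$ (line 3), the set $M_{i,j}$ (line 7), the cluster $\omega_{i,j}$ (line 6), and the quantities $\bar{d}_{i,j}$ and $O_{max}^{i,j}$ (line 9) each require a single scan over the $n$ modules, so they cost $\mathcal{O}(n)$. Updating the diversity of every $m \in M_{i,j}$ (line 8) can be done in $\mathcal{O}(n)$ per module in the worst case, i.e. $\mathcal{O}(n^2)$ total for that line. The greedy fixup loop (lines 12--15) runs at most $B \le n$ times, and each iteration picks the best remaining module by scanning $M_{i,j}$, yielding $\mathcal{O}(n^2)$. None of these dominates the $\delta$-KP call on line 10.

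The main step, and the main obstacle, is bounding the cost of the $\delta$-KP call. Using the standard FPTAS for $0{-}1$ knapsack cited as~\cite{2013approximation} with precision parameter $\delta$, the running time on an instance with $|M_{i,j}| \le n$ items is $\mathcal{O}(n^3/\delta)$ (scaling values by a factor determined by $\delta$ and running the pseudo-polynomial DP on the scaled instance). This dominates the $\mathcal{O}(n^2)$ contributions above. Multiplying by the $n^2$ outer iterations gives $\mathcal{O}(n^2) \cdot \mathcal{O}(n^3/\delta) = \mathcal{O}(n^5/\delta)$, and the final selection over $\psi_{i,j}$ only adds $\mathcal{O}(n^2)$, which is absorbed. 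The hard part is simply justifying the $\mathcal{O}(n^3/\delta)$ bound for the $\delta$-KP subroutine cleanly enough to cite, since everything else is a routine loop count; I would handle this by a one-sentence appeal to the FPTAS analysis in~\cite{2013approximation} rather than re-deriving the scaling argument.
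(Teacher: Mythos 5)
Your proposal is correct and follows essentially the same argument as the paper: $\mathcal{O}(n^2)$ outer $(i,j)$ iterations, each dominated by the $\delta$-KP call at $\mathcal{O}(n^3/\delta)$, giving $\mathcal{O}(n^5/\delta)$. The paper's proof is just a two-sentence version of this; your additional per-line accounting of the non-dominant steps is a harmless elaboration.
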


\begin{proof}
	There are $\mathcal{O}(n^2)$ $(i,j)$ pairs and for each pair, the $\delta$-KP Algorithm cost $\mathcal{O}(\frac{n^3}{\delta})$. Thus, the time complexity is $\mathcal{O}(\frac{n^5}{\delta})$. 
\end{proof}
	\section{Game Theoretic Approach}
\label{sec:game}





Although the Progressive Algorithm can generate a new RS with a theoretic guaranteed approximate ratio, its time complexity is high. To solve the CIA-MS-DS problem more efficiently, in this section, we develop an approach based on the game theory. Specifically, we model the CIA-MS-DS problem as a strategic game, where each super RS and each fresh coin corresponds to a player: its goal is to find a strategy that maximizes its own utility. However, to develop such an approach, there are two challenges need to be solved: 1) design utility functions of players to let the sum of their objective is the same as the objective of the CIA-MS-DS problem; and 2) prove the algorithm can achieve a Nash equilibrium with guaranteed quality within polynomial time. We solved these two challenges in the following subsections.

\subsection{Game Theoretic Algorithm}
In this subsection, we solve the first challenge and design utility functions of players.

In strategic games~\cite{armenatzoglou2015real}, players compete with each other to optimize their individual objective functions. Under this framework, each player always tries to choose a strategy that maximizes her/his own utility without taking the effect of her/his choice on the objectives of other players into consideration. The input of the framework is a strategic game, which can be formally represented by a tuple $\langle P, \{S_{p}\}_{p \in P}, \{U_{p}: \times_{{p}\in P}S_{p}\}_{p\in P} 
\rangle$ where $P$ is a set of players and $S_{p}$ represents all the possible strategies that a player $p$ can take during the game to optimize her/his function $U_{p}$. The optimization of $U_{p}$ depends on the own strategy of $p$, as well as the strategies of other players. In \cite{nash1950equilibrium}, Nash points out that a strategic game has a pure Nash equilibrium, if there exists a specific choice of strategies $s_{p} \in S_{p}$ such that the following condition is true for all $p_i \in P$:
$U_{i}(s_1, \cdots, s_{i}, \cdots, s_{|P|}) \leq U_{i}(s_1, \cdots, s'_{i}, \cdots, s_{P}), \forall s'_{i} \in S_{p_i}$.
Thus, no player has the incentive to deviate from her/his current strategy. To express the objective functions of all players, \cite{monderer1996potential} proposes a single function $\Phi: \times_{p\in P}S_{p} 
$, called the potential function in potential games, which constitutes a special class of strategic games. Let $\overline{s_{i}}$ denote the set of strategies followed by all players except $p_i$ (i.e., $\overline{s_{i}} = \{s_1, \cdots, s_{i-1},$ $s_{i+1},\cdots,s_{|P|}\}$). A potential game is exact if there exists a potential function $\Phi$, such that for all $s_i$ and all possible combinations of $\overline{s_i}
$
$U_i(s_i,\overline{s_i}) - U_i(s'_i,\overline{s_i}) = \Phi(s_i,\overline{s_i}) - \Phi(s'_i,\overline{s_i})$. In \cite{monderer1996potential}, it is proved that for potential games, the framework always converges to a pure Nash equilibrium.

We model our problem as a game. 
Each player has two strategies, $s^1$ and $s^0$, which are being selected and not being selected in the new RS, respectively. Given $s_{i}$ and $\overline{s_{i}}$, if $r_{\tau}$ is eligible, the utility of $p_i$ is $U_i(s_i,\overline{s_i})$ $= \frac{|\{t_i| c_i \in r_{\tau}\}|}{|P|}$, otherwise $U_i(s_i,\overline{s_i}) = 0$.
Thus, the objective function of the CIA-MS-DS problem is equal to the summation of the utility of each individual player. The goal of each player is to find the strategy that maximizes its own utility. This decomposition of the objective of the CIA-MS-DS problem into the summation of individual utility functions provides a natural motivation for modeling the CIA-MS-DS problem as a game. Further, we define the potential function as $${\small \Phi(S) =
	\begin{cases}
	\frac{|\{t_i| c_i \in r_{\tau}\}|}{|P|}, & \mbox{$r_{\tau}$ is eligible}\\
	0 & \mbox{otherwise}
	\end{cases}}$$

\begin{algorithm}[t]
	\DontPrintSemicolon
	{
		\KwIn{A spent coin $c_\tau$, a budget $B$, a super RS set $\mathbb{SRS}$, and a fresh coin set $\mathbb{F}$}
		\KwOut{An eligible RS $r_{\tau}$}
		$\mathbb{M} = \mathbb{SRS} \cup \mathbb{F}$, $r_{\tau} = m_\tau$;\;
		\For{$i = 1$ to $|M|$}{
			$P_{i} = M \backslash \bigcup_{m_k \in M \land Pr_{max}^{k} > Pr_{max}^{i}} m_k$;\;
			\For{$j = 1$ to $|M|$}{
				\If{{\scriptsize$m_j \in P_{i}$, $Pr_{max}^{i}\geq Pr_{max}^{\tau}$ and $Pr_{min}^{j} \leq Pr_{min}^{\tau}$}}{
					$\omega_{i,j} = m_i \cup m_j \cup m_\tau$;\;
					{\scriptsize$P_{i,j} = P_i  \backslash (\omega_{i,j} \cup\bigcup_{m_k \in P_i \land Pr_{min}^{k} < Pr_{min}^{j}}m_k)$};\;
					initialize the strategy of each player;\;
					\Repeat{reaching an Nash equilibrium}{
						\ForEach{play $p \in P_{i,j}$}{
							$s_i = s^0$;\;
							\If{the utility of $s^1$ is higher}{
								$s_i = s^1$;\;
							}
						}
					}
					get $\psi_{i,j}$ by strategies of players, combining $\omega_{i,j}$;\;
				}
			}
		}
		\Return{the $\psi_{i,j}$ with maximal diversity as $r_{\tau}$}\;}
		\caption{Game Theoretic Algorithm}
		\label{algo:game}
\end{algorithm}

Algorithm \ref{algo:game} shows the pseudo-code of our Game Theoretic Algorithm. By the DS constraint, $r_{\tau}$ is composed of some super RS in $\mathbb{SRS}$ or some fresh coins in $\mathbb{F}$. In other words, each super RS and fresh coin is a candidate module of the new RS $r_\tau$. Thus, we get the module set $\mathbb{M}$ by combing $\mathbb{SRS}$ and $\mathbb{F}$ (line 1). Besides, by the DS constraint, $r_{\tau}$ has to contain the module $m_\tau$ which contains $c_\tau$ (line 1). Then we enumerate all $Pr_{max}$-$Pr_{min}$ pairs (line 2 -15), where $m_i$ is the module in the new RS whose $Pr_{max}^i$ is the maximum and $m_j$ is the module in the new RS whose $Pr_{min}^i$ is the minimum. Thus, for the given $Pr_{max}$-$Pr_{min}$ pair, $\omega_{i,j} = m_i \cup m_j \cup m_\tau$ must be contained in the new RS (line 6). Besides, for each $Pr_{max}$-$Pr_{min}$ pair, the player set is $P_{i,j} = M \backslash (\bigcup_{Pr_{max}^{k} > Pr_{max}^{i}} m_k \cup \omega_{i,j} \cup\bigcup_{Pr_{min}^{k} < Pr_{min}^{j}}m_k)$ (line 7). For each player, we randomly assign a strategy (line 8). Next, the algorithm starts the best-response procedure (line 9-14). In each iteration, for each player $p$, it selects the strategy with the highest utility. When the two utilities are the same, it selects $s^0$ (line 10-13). When reaching the Nash equilibrium, by strategies of players, we get a candidate RS $\psi_{i,j}$ (line 15). Since $r_{\tau} = m_\tau$ satisfies the required constraints and when $r_{\tau}$ is eligible, the utility of each player is higher, there is at least one eligible RS $\psi_{i,j}$.

\subsection{Theoretic Analyses}

In this subsection, we solve the second challenge and prove the algorithm can achieve a Nash equilibrium with guaranteed quality within polynomial time.

We first prove that for each $Pr_{max}$-$Pr_{min}$ pair, our game is an exact potential game. \vspace{-1ex}

\begin{theorem}
	For each $Pr_{max}$-$Pr_{min}$ pair, the game in the best-response procedure is an exact potential game.
\end{theorem}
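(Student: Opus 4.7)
The plan is to verify the definition of an exact potential game directly, exploiting the fact that the utility function $U_i$ and the potential function $\Phi$ introduced just before the theorem statement coincide as functions of the joint strategy profile. Recall that we must exhibit a $\Phi$ such that for every player $p_i \in P_{i,j}$, every pair of strategies $s_i, s'_i \in \{s^0, s^1\}$, and every profile $\overline{s_i}$ of the other players' choices,
$$U_i(s_i, \overline{s_i}) - U_i(s'_i, \overline{s_i}) = \Phi(s_i, \overline{s_i}) - \Phi(s'_i, \overline{s_i}).$$

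First I would observe that by construction, $U_i(S)$ and $\Phi(S)$ are defined by exactly the same rule: both evaluate to $\frac{|\{t_i \mid c_i \in r_\tau\}|}{|P|}$ whenever the induced $r_\tau = \omega_{i,j} \cup \bigcup_{p_k:\,s_k = s^1} m_k$ is eligible (i.e.\ satisfies the DS, budget, and $\epsilon$-CIK constraints of Definition~\ref{def:problem}), and to $0$ otherwise. In particular $U_i$ carries no explicit dependence on the index $i$ — every player receives the same payoff, which is determined solely by the overall outcome $r_\tau$. Hence $U_i \equiv \Phi$ as set functions on $\times_{p \in P_{i,j}} S_p$, and the required identity follows by simply evaluating the same function on both profiles and subtracting.

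The argument then closes with a trivial case split on whether each of $(s_i, \overline{s_i})$ and $(s'_i, \overline{s_i})$ is eligible: in all three situations (both eligible, neither eligible, or exactly one eligible) the two sides of the defining identity reduce to the difference of the same two numbers, so equality is immediate. I do not expect any real obstacle here; the construction was engineered precisely so that the common per-player reward $dive_{r_\tau}/|P|$ summed over $|P|$ players recovers the CIA-MS-DS objective while simultaneously serving as the global potential. The only subtlety worth flagging in the writeup is that eligibility is a global property of $r_\tau$, not a local function of a single deviation, but since both $U_i$ and $\Phi$ inherit the eligibility indicator from the same $r_\tau$, this symmetry absorbs the subtlety without complicating the argument.
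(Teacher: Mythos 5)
Your proposal is correct and follows essentially the same route as the paper: both arguments rest on the observation that $U_k$ and $\Phi$ are defined by the identical rule on every joint strategy profile, so the exact-potential identity $U_k(s_k,\overline{s_k}) - U_k(s'_k,\overline{s_k}) = \Phi(s_k,\overline{s_k}) - \Phi(s'_k,\overline{s_k})$ holds by direct evaluation, confirmed by a case split on the eligibility of the two induced ring signatures. The paper simply writes out the eligibility cases explicitly (four of them) rather than first stating $U_k \equiv \Phi$ as you do, but the content is the same.
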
\vspace{-3ex}

\begin{proof}
	We denote a $Pr_{max}$-$Pr_{min}$ pair by $(i,j)$, where $Pr_{max} = Pr_{max}^i$ and $Pr_{min} = Pr_{min}^j$. For any $(i,j)$ pair, we first proving $U_k(s_k,\overline{s_k}) - U_k(s'_k,\overline{s_k}) = \Phi(s_k,\overline{s_k}) - \Phi(s'_k,\overline{s_k})$.
	
	Suppose $r_{\tau,1}^{i,j}$ is the RS which is generated by the strategies $\overline{s_k}$ and $s_k$, combining $\omega_{i,j}$. Suppose $r_{\tau,2}^{i,j}$ is the RS which is generated by the strategies $\overline{s_k}$ and $s_k'$, combining $\omega_{i,j}$.
	
	When $r_{\tau,1}^{i,j}$ is eligible and $r_{\tau,2}^{i,j}$ is not eligible, $U_k(s_k,\overline{s_k}) = \Phi(s_k,\overline{s_k})$ $= \frac{|\{t_h| c_h \in r_{\tau,1}^{i,j}\}|}{|P_{i,j}|}$ and $U_k(s'_k,\overline{s_k}) = \Phi(s'_k,\overline{s_k}) = 0$. Thus, $U_k(s_k,\overline{s_k})$ $- U_k(s'_k,\overline{s_k}) = \frac{|\{t_h| c_h \in r_{\tau,1}^{i,j}\}|}{|P_{i,j}|} - 0 = \Phi(s_k,\overline{s_k}) - \Phi(s'_k,\overline{s_k})$.
	
	When $r_{\tau,1}^{i,j}$ is eligible and $r_{\tau,2}^{i,j}$ is eligible, $U_k(s_k,\overline{s_k}) = \Phi(s_k,\overline{s_k})$ $= \frac{|\{t_h| c_h \in r_{\tau,1}^{i,j}\}|}{|P_{i,j}|}$ and $U_k(s'_k,\overline{s_k}) = \Phi(s'_k,\overline{s_k}) = \frac{|\{t_h| c_h \in r_{\tau,2}^{i,j}\}|}{|P_{i,j}|}$. Thus, $U_k(s_k,\overline{s_k})$ $- U_k(s'_k,\overline{s_k}) = \frac{|\{t_h| c_h \in r_{\tau,1}^{i,j}\}|}{|P_{i,j}|} - \frac{|\{t_h| c_h \in r_{\tau,2}^{i,j}\}|}{|P_{i,j}|} = \Phi(s_k,\overline{s_k}) - \Phi(s'_k,\overline{s_k})$.
	
	When $r_{\tau,1}^{i,j}$ and $r_{\tau,1}^{i,j}$ are not eligible, $U_k(s_k,\overline{s_k}) = \Phi(s_k,\overline{s_k})$ $= 0$ and $U_k(s'_k,\overline{s_k}) = \Phi(s'_k,\overline{s_k}) = 0$. Thus, $U_k(s_k,\overline{s_k})$ $- U_k(s'_k,\overline{s_k}) = 0 - 0 = \Phi(s_k,\overline{s_k}) - \Phi(s'_k,\overline{s_k})$.
	
	When $r_{\tau,1}^{i,j}$ is not eligible and $r_{\tau,1}^{i,j}$ is eligible, $U_k(s_k,\overline{s_k}) = \Phi(s_k,\overline{s_k})$ $= 0$ and $U_k(s'_k,\overline{s_k}) = \Phi(s'_k,\overline{s_k}) = \frac{|\{t_h| c_h \in r_{\tau,1}^{i,j}\}|}{|P_{i,j}|}$. Thus, $U_k(s_k,\overline{s_k})$ $- U_k(s'_k,\overline{s_k}) = 0 - \frac{|\{t_h| c_h \in r_{\tau,1}^{i,j}\}|}{|P_{i,j}|} = \Phi(s_k,\overline{s_k}) - \Phi(s'_k,\overline{s_k})$.
	
	Then, for any $(i,j)$ pair, by \cite{monderer1996potential}, since $U_k(s_k,\overline{s_k}) - U_k(s'_k,\overline{s_k}) = \Phi(s_k,\overline{s_k}) - \Phi(s'_k,\overline{s_k})$, the game in the best-response procedure is an exact potential game. 
\end{proof}

Since the game in each best-response procedure is an exact potential game, and the set of strategic configurations $S$ is finite, by \cite{monderer1996potential}, a Nash equilibrium can be reached after players changing their strategies a finite number of times. For simplicity, we prove the upper bound for the number of rounds required to reach the convergence of Game Theoretic Algorithm by a scaled version of the problem where the objective function takes integer values. We assume an equivalent game with potential function $\Phi_{\mathbb{Z}} (S) = d \cdot \Phi(S)$ such that $\Phi_{\mathbb{Z}} (S) \in \mathbb{Z}, \forall S$, whcih does not scale with the size of the problem. Then, we can prove the following lemma. 

\begin{lemma}
	The number of rounds required by each best-response procedure to converge to an equilibrium is $\mathcal{O}(d \cdot n)$, where $n = |\mathbb{M}|=|\mathbb{SRS}|+|\mathbb{F}|$.
\end{lemma}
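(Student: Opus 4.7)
The plan is a standard potential-function argument. By the preceding theorem the game is an exact potential game, so whenever a player $p$ switches from $s_p$ to a strictly better response $s'_p$, we have
\begin{equation*}
  \Phi(s'_p, \overline{s_p}) - \Phi(s_p, \overline{s_p}) \;=\; U_p(s'_p, \overline{s_p}) - U_p(s_p, \overline{s_p}) \;>\; 0.
\end{equation*}
Passing to the equivalent integer game with $\Phi_{\mathbb{Z}} = d \cdot \Phi$, integrality forces every strict improvement to increase $\Phi_{\mathbb{Z}}$ by at least one. This monotonicity is the cornerstone of the argument.

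Next I would bound $\Phi_{\mathbb{Z}}$ from above. Since $\Phi(S) = |\{t_i : c_i \in r_\tau\}|/|P|$ whenever $r_\tau$ is eligible and is $0$ otherwise, $\Phi$ is non-negative, and the normalization by $|P| = n$ is precisely what makes the maximum of $\Phi_{\mathbb{Z}} = d \cdot \Phi$ a quantity of order $d$ that does not scale with $n$. Combined with the integer-step monotonicity, the total number of strict improving deviations across an entire run of the best-response procedure is $\mathcal{O}(d)$.

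The final step translates this improvement count into rounds. The repeat--until loop of Algorithm~\ref{algo:game} sweeps through all $n$ players on each pass. If a complete sweep produces no strategy change, then every player already best-responds to the fixed strategies of the others, so the current profile is a Nash equilibrium and the procedure terminates. Otherwise at least one strict improvement occurs during that sweep, which, by the monotonicity above, can happen at most $\mathcal{O}(d)$ times in total. Counting each individual per-player best-response check as one round, each sweep contributes $n$ rounds and there are at most $\mathcal{O}(d)+1$ sweeps, giving the claimed $\mathcal{O}(d \cdot n)$ bound.

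The main obstacle will be making the bound $\max_S \Phi_{\mathbb{Z}}(S) = \mathcal{O}(d)$ fully rigorous: one must verify that the diversity numerator in $\Phi$ cannot outgrow its denominator by more than a factor independent of the problem size, which relies on the structural constraints on $r_\tau$ (in particular the budget and the partition of the mixin universe into super RSs and fresh coins). Once that bound is in hand, the rest of the proof is routine potential-function bookkeeping.
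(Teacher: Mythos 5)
Your overall strategy --- exact potential game, scale to the integer-valued potential $\Phi_{\mathbb{Z}} = d\cdot\Phi$, note that every strict improvement raises $\Phi_{\mathbb{Z}}$ by at least $1$, and divide the range of $\Phi_{\mathbb{Z}}$ by the minimum step --- is the same as the paper's. The divergence is in the two remaining ingredients, and one of them is a genuine gap. You bound the potential by $\max_S \Phi_{\mathbb{Z}}(S) = \mathcal{O}(d)$, i.e.\ $\Phi(S) = \mathcal{O}(1)$, arguing that the division by $|P|$ keeps the ratio constant, and you then recover the factor $n$ by counting $n$ per-player checks in each sweep of the repeat--until loop. But $\Phi(S) = |\{t_i : c_i \in r_\tau\}|/|P|$ has a \emph{diversity} in the numerator and a \emph{number of players (modules)} in the denominator, and these are not comparable: a single player is an entire super RS, which can contain coins output by many distinct historical transactions (in the paper's real data a module holds $11$ coins, each possibly from a different transaction). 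So the numerator can exceed $|P|$ by a factor as large as the maximum module diversity, and $\Phi = \mathcal{O}(1)$ fails in general. You correctly identify this as ``the main obstacle,'' but it is not a technicality to be patched --- it is false as stated, so the count of $\mathcal{O}(d)$ total improvements, and with it your $\mathcal{O}(d)+1$ sweeps, does not follow.

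The paper avoids this entirely: it uses only the coarse bound $0 \leq \Phi_{\mathbb{Z}}(S) \leq d\cdot n$ (it writes the upper bound somewhat inconsistently, but the computation used is $(d\cdot n - 0)/1$), identifies each round directly with one strict-improvement step, and concludes there are at most $d\cdot n$ of them. In other words, where you try to prove a strong $\mathcal{O}(d)$ potential bound and then pay a factor of $n$ in the round-counting, the paper accepts the weak $d\cdot n$ potential bound and pays nothing further. If you want to salvage your version, you would need either to redefine the players' common utility so that it is genuinely normalized (e.g.\ divide by an upper bound on the achievable diversity rather than by $|P|$), or simply fall back to the paper's weaker bound on $\Phi_{\mathbb{Z}}$ and drop the sweep bookkeeping.
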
\vspace{-3ex}


\begin{proof}
	The scaled version of the Game Algorithm with the potential function $\Phi_{\mathbb{Z}}(S) = d \cdot \Phi(S)$
	will converge to a Nash equilibrium in the same number of rounds as the Game algorithm. Since the change of $\Phi_{\mathbb{Z}} (S)$ is at least 1, and $0 \leq \Phi_{\mathbb{Z}} (S) \leq n$, the number of rounds is at most $\frac{d\cdot n-0}{1} = d \cdot n$.
\end{proof}

Then, we give the proof of the time complexity of the Game Theoretic Algorithm as follows.

\begin{lemma}
	The time complexity of the Game Theoretic Algorithm is $\mathcal{O}(d \cdot n^4)$, where $n = |\mathbb{M}|=|\mathbb{SRS}|+|\mathbb{F}|$.
\end{lemma}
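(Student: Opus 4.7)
The plan is to compose the nested loop structure of Algorithm~\ref{algo:game} with the round bound established in the preceding lemma. The outer double loop over $(i,j)$ pairs (lines~2 and~4) contributes a factor of $\mathcal{O}(n^2)$, since both indices range over $|\mathbb{M}| = n$ candidate modules. Inside each iteration, the preprocessing in lines~3 and~5--8 (building $\omega_{i,j}$, filtering $P_{i,j}$, initializing strategies) can be carried out in $\mathcal{O}(n)$ time via linear scans over $\mathbb{M}$.

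The dominant cost per $(i,j)$ pair is the best-response procedure in lines~9--14. By the preceding lemma it converges in $\mathcal{O}(d \cdot n)$ rounds, and one round performs a sweep through all players in $P_{i,j}$, so each round costs $\mathcal{O}(n)$ strategy evaluations. The key ingredient I would use is that a single utility evaluation can be done in $\mathcal{O}(1)$ amortized time by maintaining, as incremental state, the degree $d_\tau$, the extremal probabilities $pr_{max}$ and $pr_{min}$, and the running transaction-diversity count of the candidate RS implied by the current strategy profile. Flipping one player's strategy alters these quantities by contributions that depend only on that player, and the $\epsilon$-CIK test then reduces to the single closed-form comparison derived at the end of Subsection~\ref{subsec:properites}. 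This yields $\mathcal{O}(d\cdot n^2)$ work per $(i,j)$ pair.

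Multiplying over all $(i,j)$ pairs gives $\mathcal{O}(n^2)\cdot\mathcal{O}(d \cdot n^2) = \mathcal{O}(d\cdot n^4)$, and the final scan at line~16 selecting the $\psi_{i,j}$ of maximal diversity is absorbed into this bound. The main obstacle is the bookkeeping justification: one must verify carefully that every statistic consulted by the utility function admits an $\mathcal{O}(1)$ incremental update when a single player flips, and that the diversity count (which involves a set of historical transaction identifiers) can likewise be maintained incrementally. Without this, a direct recomputation of $\epsilon$-CIK and diversity on the current candidate RS would cost $\mathcal{O}(n)$ per evaluation, inflating the total to $\mathcal{O}(d\cdot n^5)$ and missing the claimed bound by a factor of~$n$.
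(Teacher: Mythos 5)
Your decomposition is exactly the paper's: $\mathcal{O}(n^2)$ $Pr_{max}$--$Pr_{min}$ pairs, times $\mathcal{O}(d\cdot n)$ best-response rounds per pair, times $\mathcal{O}(n)$ players per round. The only difference is that you make explicit the $\mathcal{O}(1)$-per-utility-evaluation assumption (via incremental maintenance of $d_\tau$, $pr_{max}$, $pr_{min}$, and the diversity count) that the paper's proof silently takes for granted when it simply multiplies the three factors; your observation that naive recomputation would give $\mathcal{O}(d\cdot n^5)$ is a fair and correct caveat that the paper does not address.
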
\vspace{-3ex}


\begin{proof}
	There are $\mathcal{O}(n^2)$ $Pr_{max}$-$Pr_{min}$ pairs and for each pair, the best-response procedure requires $\mathcal{O}(d \cdot n)$ rounds iteration. Furthermore, in each iteration of the best-response procedure (line 9 - 14), there are $\mathcal{O}(n)$ players. Thus, the time complexity of the Game Theoretic Algorithm is $\mathcal{O}(d \cdot n^4)$.
\end{proof}

After proving Game Theoretic Algorithm can converge within polynomial time, we discuss how good the resulting solution is. Usually, researchers use \textbf{social optimum (OPT)}, \textbf{price of stability(PoS)}, and \textbf{price of anarchy(PoA)} to evaluate the quality of an equilibrium. The $OPT$ is the solution that yields the optimal values to all the objective functions, so that their total utility is maximum. The $PoS$ of a game is the ratio between the best value among its equilibriums and the OPT. The $PoA$ of a game is the ratio between the worst value among its equilibriums and the OPT.

\begin{theorem}
	The $PoS$ is bounded by $\frac{1}{n}$ and the $PoA$ is bounded by $\frac{O_{min} \cdot dive_\tau}{B}$, where $n = |\mathbb{M}|=|\mathbb{SRS}|+|\mathbb{F}|$, $O_{min}$ is the minimal number of coins which are outputted by the same transaction, and $dive_\tau$ is the diversity of the module which contains $c_\tau$.
\end{theorem}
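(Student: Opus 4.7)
The plan is to analyze the two bounds separately, leveraging the exact potential game structure established in the previous theorem and the eligibility guarantees from Section~\ref{sec:framework}.

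For the PoS bound of $\frac{1}{n}$, I would invoke the standard fact that in an exact potential game, any configuration globally maximizing the potential function $\Phi$ is a pure Nash equilibrium. Since $\Phi(S) = \frac{|\{t_i \mid c_i \in r_\tau\}|}{|P|}$ on eligible configurations and $0$ otherwise, its global maximum over eligible configurations is attained by the configuration realizing the optimal diversity $OPT$, giving $\Phi^* = OPT/n$. Hence the best Nash equilibrium achieves per-player utility equal to $OPT/n$, and dividing this by $OPT$ (the social optimum defined in the preceding text) yields $PoS = \frac{1}{n}$.

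For the PoA bound of $\frac{O_{min}\cdot dive_\tau}{B}$, I would argue in three steps. First, show that every pure Nash equilibrium produces an eligible $r_\tau$: if $r_\tau$ were ineligible every player would receive utility $0$, but the player corresponding to the module $m_\tau$ could unilaterally switch to $s^1$, forcing inclusion of $m_\tau$. Because the CoinMagic framework (Theorem in Section~\ref{sec:framework}) guarantees that some eligible RS containing $m_\tau$ always exists when $|\mathbb{F}|\neq 1$, this deviation strictly increases that player's utility, contradicting equilibrium. Second, once eligibility is established, the DS constraint combined with $m_\tau \subseteq \omega_{i,j}$ forces $m_\tau \subseteq r_\tau$, so the diversity at the worst equilibrium is at least $dive_\tau$. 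Third, upper-bound the optimal diversity by observing that any feasible $r_\tau^\ast$ contains at most $B$ coins by the budget constraint and each transaction outputs at least $O_{min}$ coins, so $OPT \leq B/O_{min}$. Combining the lower bound $dive_\tau$ with the upper bound $B/O_{min}$ gives $PoA \geq \frac{dive_\tau}{B/O_{min}} = \frac{O_{min}\cdot dive_\tau}{B}$.

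The main obstacle will be the first step of the PoA proof — rigorously establishing that every equilibrium is eligible. The subtlety is that the best-response rule in Algorithm~\ref{algo:game} breaks ties by choosing $s^0$, so one must ensure the deviation that restores eligibility yields a \emph{strictly} higher utility rather than an equal one; this follows because the eligible configuration has positive diversity-based utility while the ineligible one yields exactly zero. A secondary bookkeeping issue is reconciling the PoS ratio (which implicitly compares per-player utility to social welfare) with the PoA ratio (which compares diversities directly); I would make the convention explicit at the start of the proof to keep the two ratios internally consistent.
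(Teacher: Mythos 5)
Your proposal follows essentially the same route as the paper: the PoS bound comes from the potential-function sandwich $\frac{U(S)}{|P_{i,j}|}\leq \Phi(S)\leq U(S)$ together with the fact that the maximizer of $\Phi$ is a Nash equilibrium, and the PoA bound comes from pairing the lower bound $dive_\tau$ (since $m_\tau$ is always in the output) with the upper bound $OPT\leq B/O_{min}$ from the budget constraint. The one place your argument misfires is the eligibility step for PoA: $m_\tau$ is not a player that can deviate to $s^1$ --- it is part of $\omega_{i,j}=m_i\cup m_j\cup m_\tau$, which is removed from the player set $P_{i,j}$ and unconditionally unioned into every candidate $\psi_{i,j}$ --- so the containment $m_\tau\subseteq r_\tau$ (and the existence of at least one eligible candidate, via $r_\tau=m_\tau$) is guaranteed by the algorithm's construction rather than by a unilateral-deviation argument; the paper simply invokes $m_\tau\subseteq r_\tau$ directly. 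With that correction your bounds go through exactly as in the paper (noting also that both bounds are lower bounds, $PoS\geq\frac{1}{n}$ rather than equality).
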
\vspace{-3ex}


\begin{proof}
	Let $S_{i,j}$ be the set of strategies of players in $P_{i,j}$ and $U(S_{i,j})$ = $\sum_{k=1}^{|P_{i,j}|}U_k(s_k, \overline{s_k})$. Thus, $\frac{U(S_{i,j})}{|P_{i,j}|} \leq \Phi(S_{i,j}) \leq U(S_{i.j})$. Let $S_{i,j}^*$ be the globally optimal set of strategies of players in $P_{i,j}$ that maximizes $U(S_{i,j})$ and let $OPT_{i,j}$ = $U(S_{i,j}^*)$. Let $S_{i,j}^{\#}$ be the set of strategies of players in $P_{i,j}$ that yields the maximum of $\Phi(S_{i,j})$, i.e., the best Nash equilibrium of a game. 
	Thus, $U(S_{i,j}^{\#}) \geq \Phi(S_{i,j}^{\#})$ $\geq \Phi(S_{i,j}^*) \geq \frac{U(S_{i,j}^*)}{|P_{i,j}|} = \frac{OPT_{i,j}}{|P_{i,j}|}$. Therefore, $PoS \geq \frac{U(S_{i,j}^{\#}) + dive_{i,j}^\omega}{OPT_{i,j}+dive_{i,j}^\omega}$ $\geq \frac{ \frac{1}{|P_{i,j}|} \cdot OPT_{i,j}+ dive_{i,j}^\omega}{OPT_{i,j}+dive_{i,j}^\omega} \geq \frac{1}{|P_{i,j}|} \geq \frac{1}{n}$. Since $m_\tau \subseteq r_{\tau}$ 
	and $OPT \leq \frac{B}{O_{min}}$, $PoA \geq \frac{dive_{\tau}}{OPT} \geq \frac{O_{min} \cdot dive_{\tau}}{B}$.
\end{proof}
	\section{Experimental Study}
\label{sec:experimental}

\begin{table}[t]\vspace{-4ex}
	\begin{center}
		{
			\caption{Experimental Settings (Real).} \label{tab:real} \vspace{1ex}
			\begin{tabular}{l|l}
				{\bf \qquad \qquad \quad Parameters} & {\bf \qquad \qquad \qquad Values} \\ \hline \hline
				the budget $B$ & 40, 60, \textbf{80}, 100, 120\\
				the CI level $\epsilon$ & 1.3, 1.4, \textbf{1.5}, 1.6, 1.7 \\ 
				the range of the degree of & \multirow{2}{*}{[1,9], [1,8], \textbf{[1,7]}, [1,6], [1,5]}\\
				each module $[d^-, d^+]$ & \multirow{2}{*}{~} \\
				the range of $Pr_{max}$ of each & [0.1, 0.5], [0.1, 0.55], \textbf{[0.1, 0.6]},\\
				module, $[PM^-, PM^+]$ & [0.1, 0.65], [0.1, 0.7] \\
				\hline
			\end{tabular}
		}
	\end{center}\vspace{-6ex}
\end{table}

\begin{table}[t]
	\begin{center}
		{
			\caption{Experimental Settings (Synthetic).} \label{tab:synthetic} \vspace{-1.5ex}
			\begin{tabular}{l|l}
				{\bf \qquad \qquad \quad Parameters} & {\bf \qquad \qquad \qquad Values} \\ \hline \hline
				the number, $n$, of modules  & \textbf{50},  60, 70, 80, 90\\
				the number, $o$, of transactions  & 50, 60, \textbf{70}, 80, 90\\
				the budget $B$ & 110, 130, \textbf{150}, 170, 190\\
				the CI level $\epsilon$ & 1.6, 1.7, \textbf{1.8}, 1.9, 2 \\ 
				the range of the degree of & \multirow{2}{*}{\textbf{[1,9]}, [1,8], [1,7], [1,6], [1,5]}\\
				each module $[d^-, d^+]$ & \multirow{2}{*}{~} \\
				the range of the size of  & [11,15], \textbf{[14,18]}, [17,21]\\
				each module $[s^-, s^+]$ & [20,24], [23,27] \\
				the range of the $Pr_{max}$ of & [0.1, 0.2], [0.1, 0.35], \textbf{[0.1, 0.5]},\\
				each module, $[PM^-, PM^+]$ & [0.1, 0.65], [0.1, 0.8] \\
				\hline
			\end{tabular}
		}
	\end{center}\vspace{-3ex}
\end{table}

\begin{figure}[t]\vspace{-2ex}
	\centering
	\includegraphics[scale=0.5]{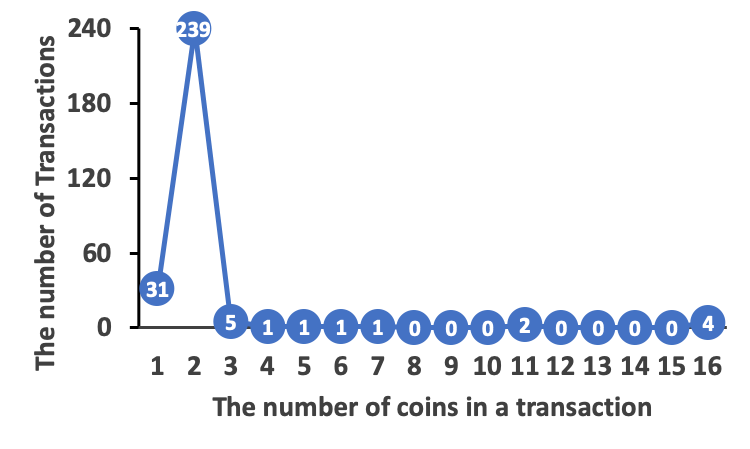}\vspace{-4ex}
	\caption{\small The Distribution of the Number of Coins in a Transaction}\vspace{-1ex}
	\label{fig:source}
\end{figure}

\begin{figure}[t!]\centering
	\subfigure{
		\scalebox{0.3}[0.3]{\includegraphics{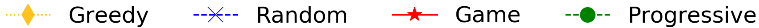}}}\hfill
	\addtocounter{subfigure}{-1}\vspace{-2ex}
	\subfigure[][{\scriptsize Diversity}]{
		\scalebox{0.3}[0.3]{\includegraphics{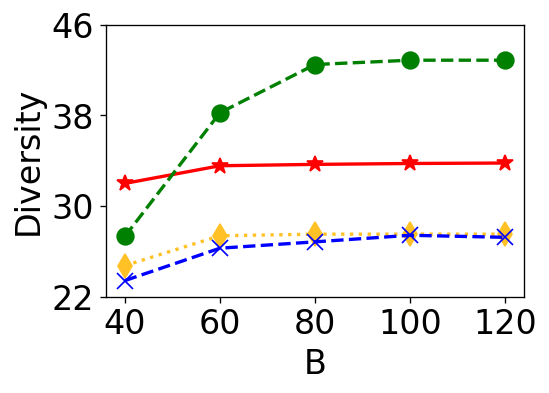}}
		\label{subfig:real_budget_score}}
	\subfigure[][{\scriptsize Running Time}]{
		\scalebox{0.3}[0.3]{\includegraphics{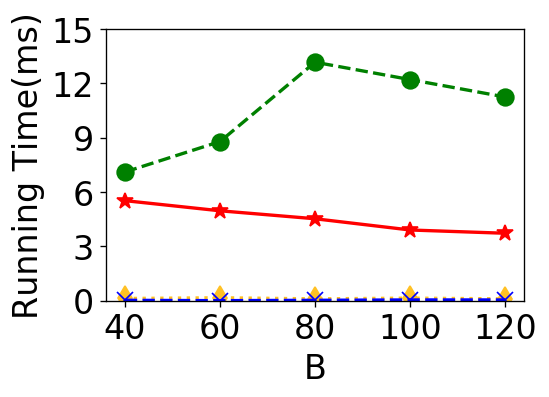}}
		\label{subfig:real_budget_time}}\figureCaptionMargin
	\vspace{-1ex}
	\caption{\small Effect of the Budget (Real)}\figureBelowMargin
	\label{fig:real_budget}
\end{figure}

\vspace{-1ex}
\subsection{Experiment Configuration}

We use both real and synthetic data sets to test our proposed approaches. Specifically, for real data sets, we retrieve the coins in the blocks between 2028242 and 2028273 from the Monero System. The time gap between the block 2028242 and the block 2028273 is one hour. There are 285 transactions and 633 coins. Figure~\ref{fig:source} shows the distribution of the number of coins in a transaction. Most transactions only output two coins and there are 4 transactions which output 16 coins. Since in Monero System, most RS's size is 11, we retrieve 627 coins among them and generate 57 super RSs. For each super RSs, it randomly selects 11 coins as its coin set and its transaction set contains the transactions outputting the selected 11 coins. We uniformly generate the degree of each super SR within the range $[d^-, d^+]$. We generate the $Pr_{max}$ of each super RS within the range $[PM^-, PM^+]$ following the uniform distribution. Since each super RS satisfies the $\epsilon$-CI, we generate the $Pr_{min}$ of each super RS by $\frac{e^\epsilon \cdot (1 - Pr_{max})} { d_i \cdot Pr_{max} +1} = \frac{1- Pr_{min}}{d_i \cdot Pr_{min} + 1}$. We vary the budget from 40 to 120 and the CI level $\epsilon$ from 1.3 to 1.7.

To examine the effects of the number of modules, the number of historical transactions, and each module's size, we generate the synthetic dataset and run the experiments on it. For synthetic data sets, we generate $n$ modules. For each module, we randomly generate its degree, size, mixin set, $Pr_{max}$ and $Pr_{min}$. We uniformly generate the degree of each module within the range $[d^-, d^+]$. We uniformly generate the size of each within the range of $[s^-, s^+]$. We generate the $Pr_{max}$ of each each module within the range $[PM^-, PM^+]$ following the uniform distribution. Since each module satisfies the $\epsilon$-CI, we generate the $Pr_{min}$ of each module by $\frac{e^\epsilon \cdot (1 - Pr_{max})} { d_i \cdot Pr_{max} +1} = \frac{1- Pr_{min}}{d_i \cdot Pr_{min} + 1}$. For each coin, among $o$ historical transactions, we randomly select a historical transaction outputting it. We vary the budget from 110 to 190 and the CI level $\epsilon$ from 1.6 to 2. We vary $n$ from 50 to 90. Since the average size of each module is $16$, 50 modules cover more than 800 coins, which is large enough for real-world applications. In Monero~\cite{Monero}, as shown in the real data sets, the number of coins in an hour is less than 800.

We conduct experiments on both the real data sets and the synthetic data sets to evaluate the effectiveness and efficiency of our two approaches, the Game Theoretic Algorithm and the Progressive Algorithm, in terms of the new RS's diversity and the running time. As proved in Theorem~\ref{theorem:np}, the CIA-MS-DS problem is NP-hard, thus, it is infeasible to calculate the optimal result as the ground truth in large scale datasets. Alternatively, we compare our approaches with two baseline methods, the Greedy Algorithm and the Random Algorithm.  The Greedy Algorithm initializes $r_{\tau}$ as $m_\tau$ and then greedily add the candidate module which can bring the largest increase of the diversity of the temporary $r_{\tau}$ among the modules which would not make the temporary $r_\tau$ ineligible to the temporary $r_\tau$. The Random Algorithm initializes $r_{\tau}$ as $m_\tau$ and then randomly add a candidate module among the modules which would not make the temporary $r_\tau$ ineligible to $r_\tau$.

Table~\ref{tab:real} and Table~\ref{tab:synthetic} introduce our experiment settings on real data sets and synthetic data sets respectively, where the default values of parameters are in bold font. In each set of experiments, we vary one parameter, while setting other parameters to their default values. For each experiment, we sample 50 problem instances and run the algorithms. We report the average value of the running time and the RS's diversity. All our experiments were run on an Intel CPU @2.2 GHz with 16GM RAM in Java.

\subsection{Results on Real Data Sets}

\textbf{Effect of the Budget, $B$.} Figure~\ref{fig:real_budget} illustrates the experimental result on different budgets, $B$, from 40 to 120. In Figure~\ref{subfig:real_budget_score}, when the budget gets larger from 40 to 80, the diversities of the new RSs that generated by four approaches increase; then, they almost keep stable. The reason is that at the beginning, with the increase of $B$, the new RS can contain more mixins. Nevertheless, the new RS is also constrained by the $\epsilon$-CIK constraint. When the budget is large enough, the diversity of the new RS is limited by the $\epsilon$-CIK constraint. And particularly, when the budget is very low, the diversity of the new RS which is generated by the Progressive Algorithm is lower than that of the new RS which is generated by the Game Theoretic Algorithm. Because when the budget is low, the Progressive Algorithm needs to adjust the selection from the $\delta$-KP algorithm, $C_{i,j}$, for the budget constraint (line 11-15). When the budget is very low, the diversity of the new RS which is adjusted by the greedy procedure in the Progressive Algorithm is not as good as that of the Game Algorithm. The reason is that, the greedy procedure is easier to fall into the local optimal trap while the Game Theoretic Algorithm can release this by the games between players.

As shown in Figure~\ref{subfig:real_budget_time}, when the budget increases, the running time of two baseline approaches also increases. The increase of $B$ allows the new RS to contain more mixins, which thus leads to the higher complexity of the CIA-MS-DS problem and the increase of the running time. However, the running time of the Game Theoretic Algorithm decrease. The reason is that, when the budget constraint is relaxed, the game in the Game Theoretic Algorithm is easier to reach the Nash equilibrium. The running time of the Progressive Algorithm increases at the beginning. Because when the budget increases but still is a strict constraint, the Progressive Algorithm spends more time on the greedy procedure. But later, when the budget is large enough, the running time of the Progressive Algorithm decreases. The reason is that, when the budget is large enough, the budget constraint is relaxed and there are more $C_{i,j}$ whose size are smaller than the budget and the Progressive Algorithm do not need to run the greedy procedure for these candidate RSs.

\begin{figure}[t!]\centering \vspace{-3ex}
	\subfigure{
		\scalebox{0.3}[0.3]{\includegraphics{figures/bar.png}}}\hfill
	\addtocounter{subfigure}{-1}\vspace{-2ex}
	\subfigure[][{\scriptsize Diversity}]{
		\scalebox{0.3}[0.3]{\includegraphics{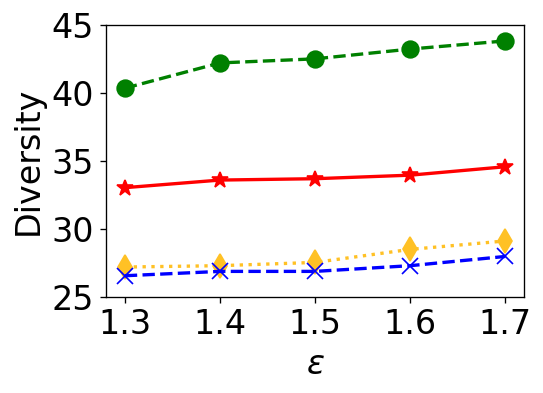}}
		\label{subfig:real_level_score}}
	\subfigure[][{\scriptsize Running Time}]{
		\scalebox{0.3}[0.3]{\includegraphics{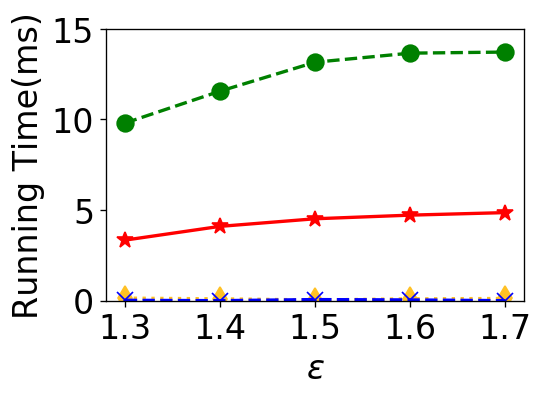}}
		\label{subfig:real_level_time}}\figureCaptionMargin
	\vspace{-1ex}
	\caption{\small Effect of the CI Level (Real)}\figureBelowMargin
	\label{fig:real_level}
\end{figure}

\begin{figure}[t!]\centering
	\subfigure{
		\scalebox{0.3}[0.3]{\includegraphics{figures/bar.png}}}\hfill
	\addtocounter{subfigure}{-1}\vspace{-2ex}
	\subfigure[][{\scriptsize Diversity}]{
		\scalebox{0.3}[0.3]{\includegraphics{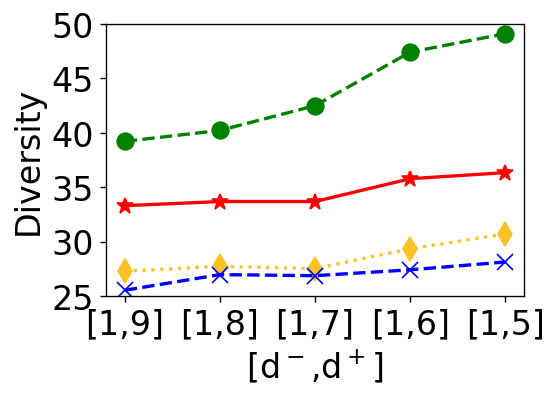}}
		\label{subfig:real_degree_score}}
	\subfigure[][{\scriptsize Running Time}]{
		\scalebox{0.3}[0.3]{\includegraphics{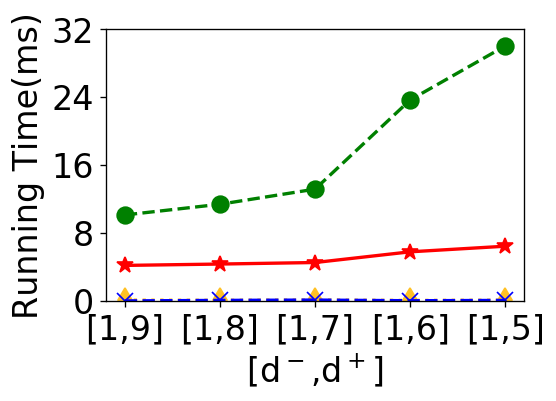}}
		\label{subfig:real_degree_time}}\figureCaptionMargin
	\vspace{-1ex}
	\caption{\small Effect of the Range of the Degree of each Super RS (Real)}\figureBelowMargin
	\label{fig:real_degree}
\end{figure}

\textbf{Effect of the CI Level, $\epsilon$.} Figure~\ref{fig:real_level} illustrates the experimental result on different CI levels, $\epsilon$, from 1.3 to 1.7. In Figure~\ref{subfig:real_level_score}, when the $\epsilon$ increases, the diversities of the new RSs that generated by four approaches also increase. The reason is that the increase of $\epsilon$ makes the $\epsilon$-CIK constraint more relaxed and the new RS has more valid modules. As shown in Figure~\ref{subfig:real_level_time}, when the $\epsilon$ increases, the running time of four approaches also increases. Because, the increase of $\epsilon$ let the new RS has more valid modules, which thus leads to the higher complexity of the CIA-MS-DS problem and the increase of the running time.

\textbf{Effect of the Range of the Degree of each Super RS, $[d^-, d^+]$.} Figure~\ref{fig:real_degree} illustrates the experimental results on different ranges, $[d^-, d^+]$, of the degree of each super RS, from [1,9] to [1,5]. In Figure~\ref{subfig:real_degree_score}, the diversities of the new RSs that generated by our four approaches increase, when the average value of degrees of modules decreases. The reason is that, when the average value of degrees of modules is higher, it is more difficult to satisfy the $\epsilon$-CIK constraint. In other words, the new RS has fewer valid modules. In Figure~\ref{subfig:real_degree_time}, the running time of our four approaches increases when the average value of degrees of modules decreases. Because when the average value of degrees of modules increases, the new RS has fewer valid modules, which thus leads to lower complexity of the CIA-MS-DS problem and the decrease of the running time.

\begin{figure}[t!]\centering\vspace{-3ex}
	\subfigure{
		\scalebox{0.3}[0.3]{\includegraphics{figures/bar.png}}}\hfill
	\addtocounter{subfigure}{-1}\vspace{-2ex}
	\subfigure[][{\scriptsize Diversity}]{
		\scalebox{0.3}[0.3]{\includegraphics{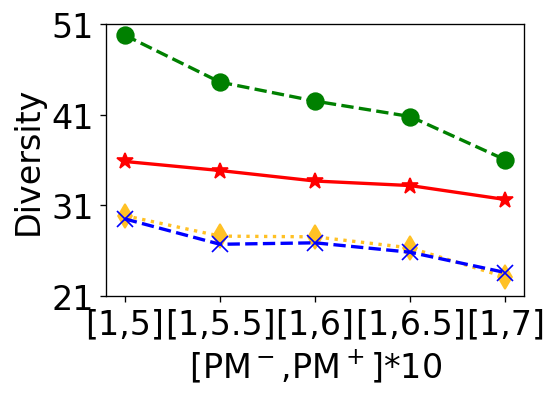}}
		\label{subfig:real_pr_score}}
	\subfigure[][{\scriptsize Running Time}]{
		\scalebox{0.3}[0.3]{\includegraphics{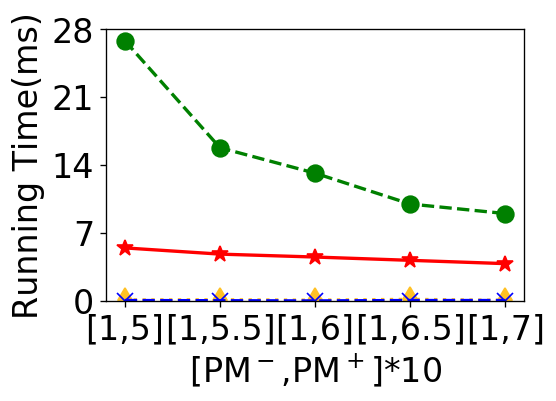}}
		\label{subfig:real_pr_time}}\figureCaptionMargin
	\vspace{-1ex}
	\caption{\small Effect of the Range of the $Pr_{max}$ of each Super RS (Real)}\figureBelowMargin
	\label{fig:real_pr}
\end{figure}

\begin{figure}[t!]\centering
	\subfigure{
		\scalebox{0.3}[0.3]{\includegraphics{figures/bar.png}}}\hfill
	\addtocounter{subfigure}{-1}\vspace{-2ex}
	\subfigure[][{\scriptsize Diversity}]{
		\scalebox{0.3}[0.311]{\includegraphics{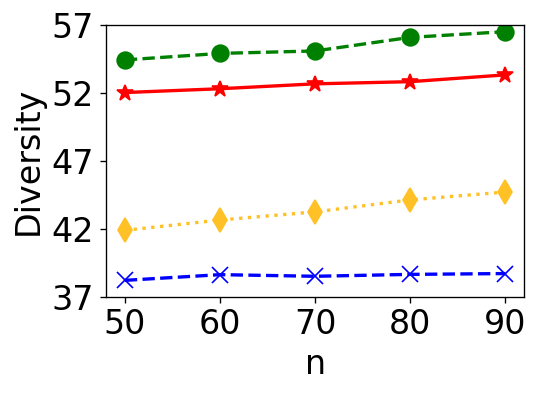}}
		\label{subfig:module_score}}
	\subfigure[][{\scriptsize Running Time}]{
		\scalebox{0.3}[0.311]{\includegraphics{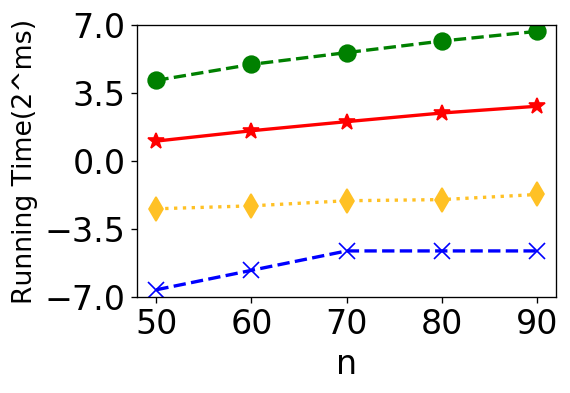}}
		\label{subfig:module_time}}\figureCaptionMargin
	\vspace{-1ex}
	\caption{\small Effect of the Number of Modules (Synthetic)}\figureBelowMargin
	\label{fig:module}
\end{figure}

\textbf{Effect of the Range of the $Pr_{max}$ of each Super RS, $[PM^-$, $PM^+]$.} Figure~\ref{fig:real_pr} illustrates the experimental results on different ranges, $[PM^-, PM^+]$, of the $Pr_{max}$ of each module, from [0.1,0.5] to [0.1,0.7]. In Figure~\ref{subfig:real_pr_score}, the diversities of the new RSs that generated by our four approaches decrease, when the range is wider. The reason is that, when the range of the $Pr_{max}$ of each module is wider, it is more difficult to satisfy the $\epsilon$-CIK constraint. In other words, the new RS has fewer valid modules. In Figure~\ref{subfig:real_pr_time}, the running time of our four approaches decreases when the difference of the $Pr_{max}$ of modules increases. The reason is that, when the range of the $Pr_{max}$ of modules is narrower, the new RS has more valid modules, which thus leads to higher complexity of the CIA-MS-DS problem and the increase of the running time. Specifically, compared with the Game Theoretic Algorithm, the Progressive Algorithm is more sensitive about the change of the range. The reason is that, when the range is wider, the cardinality of $M_{i,j}$ is smaller and the running time of the $\delta$-KP Algorithm is smaller. 

\subsection{Results on Synthetic Data Sets}

To examine the effects of the number of modules, the number of historical transactions, and each module's size, we generate the synthetic dataset and run the experiments on it. We also test the effects of the budget, the CI level, the range of the degree of each module, and the range of the $Pr_{max}$ of each module on the synthetic data sets. Due to the space limitation, please refer to Appendix F of our technical report~\cite{report} for details.

\textbf{Effect of the Number, $n$ of Modules.} Figure~\ref{fig:module} illustrates the experimental result on a different number, $n$, of modules from 50 to 90. In Figure~\ref{subfig:module_score}, when the number of modules increases, the diversities of the new RSs that generated by four approaches also increase. The reason is that the increase of $n$ let the new RS has more valid modules. Specifically, our two approximate algorithms achieve better results than the two baseline algorithms. The RS which is generated by the Progressive Algorithm has the largest diversity. As shown in Figure~\ref{subfig:module_time}, when the number of modules increases, the running time of four approaches increases. When the number of modules increases, the new RS has more valid modules, which thus lead to the increase of the running time. Specifically, the Progressive Algorithm's running time is the highest while the running time of the Game Theoretic Algorithm is much lower.

\textbf{Effect of the Number, $o$ of Historical Transactions.} Figure~\ref{fig:transaction} illustrates the experimental result on a different number, $o$, of historical transactions from 60 to 100. In Figure~\ref{subfig:transaction_score}, when the number of historical transactions increases, the diversities of new RSs that generated by four approaches also increase. The reason is that the increase of $o$ decreases the overlap between historical transaction set of modules, where the historical transaction set of a module is the set of historical transaction outputting the coins in the module. As shown in Figure~\ref{subfig:transaction_time}, when the number of historical transactions increases, the running time of four approaches almost keeps stable. Because the complexity of the CIA-MS-DS problem and the time complexities of four approaches are all not related to the number of historical transactions.

\begin{figure}[t!]\centering\vspace{-3ex}
	\subfigure{
		\scalebox{0.3}[0.3]{\includegraphics{figures/bar.png}}}\hfill
	\addtocounter{subfigure}{-1}\vspace{-2ex}
	\subfigure[][{\scriptsize Diversity}]{
		\scalebox{0.3}[0.3]{\includegraphics{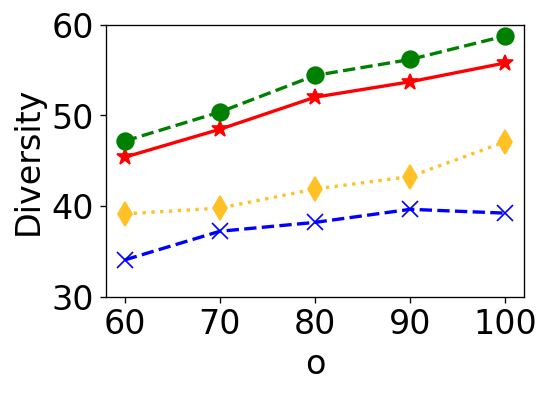}}
		\label{subfig:transaction_score}}
	\subfigure[][{\scriptsize Running Time}]{
		\scalebox{0.3}[0.3]{\includegraphics{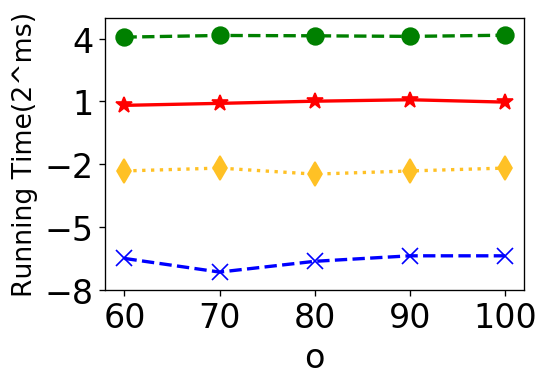}}
		\label{subfig:transaction_time}}\figureCaptionMargin
	\vspace{-1ex}
	\caption{\scriptsize Effect of the Number of Historical Transactions (Synthetic)}\figureBelowMargin
	\label{fig:transaction}
\end{figure}

\textbf{Effect of the Range of the Size of each Module, $[s^-, s^+]$.} Figure~\ref{fig:size} illustrates the experimental results on different ranges, $[s^-, s^+]$, of the size of each module, from [11,15] to [23,27]. In Figure~\ref{subfig:size_score}, the diversities of the new RSs that generated by our four approaches decrease, when the size of each module increases. The reason is that, when the average value of sizes of modules is higher, it is more difficult to satisfy the budget constraint. In other words, the new RS has fewer valid modules. In Figure~\ref{subfig:size_time}, the running time of our Progressive Algorithm and two baseline approaches decreases when the average value of sizes of modules increases. The reason is that, when the average value of sizes of modules is higher, the new RS has fewer valid modules, which thus leads to lower complexity of the CIA-MS-DS problem and the decrease of the running time. However, the running time of our Game Theoretic Algorithm increases when the average value of sizes of modules increases. The reason is that, when the average value of sizes of modules is higher, the Game Theoretic Algorithm needs to do more games to reach the Nash equilibrium.

We finally summarize our findings as following: \vspace{-1ex}

\begin{itemize}[leftmargin=*]
	\item Our two approximate algorithms can achieve results with higher diversity compared with that of two baselines. \vspace{-1ex}
	\item The Progressive Algorithm outputs the RS with the highest diversity while it costs much time. But its speed is still acceptable for the current public blockchain systems, like Monero~\cite{Monero}. \vspace{-1ex}
	\item The Game Theoretic Algorithm outputs the RS with high diversity quickly. Compared with the Progressive Algorithm, it is suitable for consortium blockchain systems with high TPSs. \vspace{-1ex}
\end{itemize}

\vspace{-3ex}
	\section{Related Work}
\label{sec:related}

Blockchain technologies are gaining massive momentum in recent years, largely due to its immutability and transparency. Many applications for security trading and settlement~\cite{Ripple}, asset and finance management~\cite{Melonport}~\cite{morgan2016unlocking}, banking and insurance~\cite{gs2016blockchain} are evaluated. However, the transparency character also brings the privacy problem. In many practice applications, users do not want to share all their information with other participants. To solve the privacy problem, some researchers have proposed some privacy-preserved blockchain systems. 

These works can be classified into two categories. The works in the first category focus on developing mixing protocols~\cite{maxwell2013coinjoin, ruffing2014coinshuffle,ruffing2017p2p,moreno2017pathshuffle}. In these protocols, anonymous service providers use mixing protocols to confuse the trails of transactions. The client's funds are divided into smaller parts which are mixed randomly with similar random parts of other clients. This helps to break links between the users and the transactions they purchased while these methods rely on mixers' trustiness and the mixers know the transactions' privacy. These methods weaken the blockchain system's decentration.

The works in the second category focus on developing advanced encryption methods. 
In the Monero blockchain system~\cite{yuen2019ringct}, the researchers build the privacy-preserved blockchain system based on the Ring Confidential Transaction (RingCT)~\cite{van2013cryptonote}~\cite{sun2017ringct}~\cite{yuen2019ringct}. In the first version of RingCT~\cite{van2013cryptonote} (RingCT 1.0), the researchers adapted a RS scheme~\cite{cramer1994proofs} to protect the transaction's sender's identity. In the second version of RingCT~\cite{sun2017ringct}, the researchers put forward a new efficient RingCT protocol (RingCT 2.0), which is built upon the well-known Pedersen commitment~\cite{pedersen1991non} and saves almost half RS's size compared with former version RingCT. In ~\cite{yuen2019ringct}, the researchers put forward the newest RingCT protocol based on Bulletproof~\cite{bunz2018bulletproofs}, which decreases the RS's size from $\mathcal{O}(n)$ to $\mathcal{O}(\log n)$. 

While these cryptographic techniques are used to some extent achieve confidentiality, the considerable overhead of such techniques makes them impractical~\cite{androulaki2018hyperledger}. Besides, these techniques assume the adversary has no extra information. However, since all data in the blockchain system is accessed, adversaries can attack a user's privacy by analyzing the traffic flow on the blockchain system. Our work considers the traffic flow's impact and proposes methods to strengthen RSs'  privacy-preserving effect by selecting a set of desirable mixins.

	\section{Conclusion}
\label{sec:conclusion}

\begin{figure}[t!]\centering\vspace{-3ex}
	\subfigure{
		\scalebox{0.3}[0.3]{\includegraphics{figures/bar.png}}}\hfill
	\addtocounter{subfigure}{-1}\vspace{-2ex}
	\subfigure[][{\scriptsize Diversity}]{
		\scalebox{0.3}[0.3]{\includegraphics{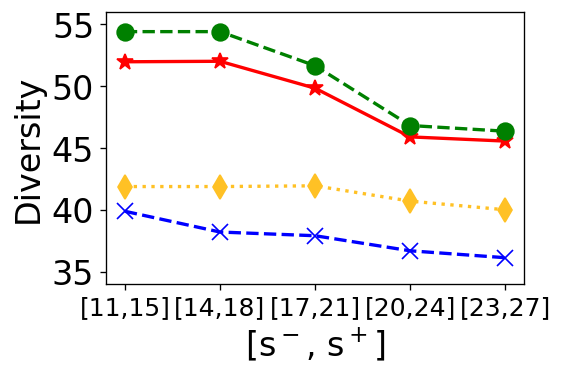}}
		\label{subfig:size_score}}
	\subfigure[][{\scriptsize Running Time}]{
		\scalebox{0.3}[0.3]{\includegraphics{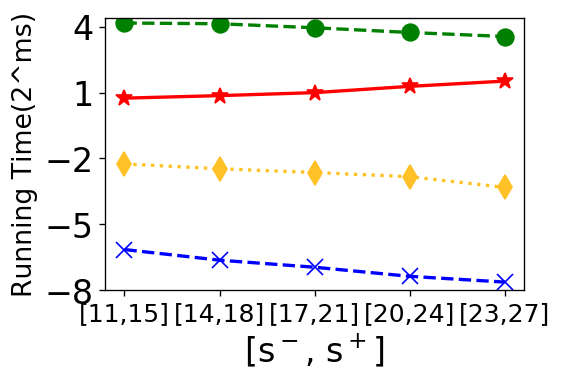}}
		\label{subfig:size_time}}\figureCaptionMargin
	\vspace{-1ex}
	\caption{\small Effect of the Range of the Size of each Module (Synthetic)}\figureBelowMargin
	\label{fig:size}
\end{figure}

In this paper, we formulate a differential privacy definition, namely $\epsilon$-coin-indistinguishability, in blockchain scenarios. We show that, if a RS satisfies the $\epsilon$-coin-indistinguishability, it is resistant to the ``Chain-Reaction" analysis. Besides, in this paper, we formulate the coin-indistinguishability-aware mixin selection problem with the disjoint-superset constraint (CIA-MS-DS), which aims to find a set of mixins which satisfies the $\epsilon$-coin-indistinguishability constraint, as well as the budget constraint, and has the maximal diversity. We formally prove that the CIA-MS-DS problem has some significant properties which can help to simplify the problem, while the CIA-MS-DS problem still is an NP-hard problem. To efficiently and effectively solve the CIA-MS-DS problem, we propose a novel framework, CoinMagic, and propose two approximate algorithms, namely the Progressive Algorithm and the Game Algorithm, with theoretical guarantees. When evaluated on the real and synthetic data sets, our approaches achieved clearly better performance than two baseline algorithms.

	\bgroup\small
	\bibliographystyle{ieeetr}
	\let\xxx=\bibitem\def\bibitem{\par\vspace{1mm}\xxx} 
	\bibliography{references/add}
	\egroup
	
	\appendix

\section{Results on Synthetic Data Sets}

\noindent\textbf{Effect of the Budget, $B$.} Figure~\ref{fig:budget} illustrates the experimental result on different budgets, $B$, from 110 to 190. In Figure~\ref{subfig:budget_score}, when the budget gets larger from 110 to 170, the diversities of the new ring signatures that generated by the four approaches increase; then, they almost keep stable. The reason is that at the beginning, with the increase of $B$, the new ring signature can contain more mixins. Nevertheless, the new ring signature is also constrained by the $\epsilon$-CIK constraint. When the budget is large enough, the new ring signature's diversity is limited by the $\epsilon$-CIK constraint. As shown in Figure~\ref{subfig:budget_time}, when the budget increases, the running time of four approaches also increases. Because the increase of $B$ allows the new ring signature to contain more mixins, which thus leads to the higher complexity of the CIA-MS-DS problem and the increase of the running time. However, when the budget gets larger from 170 to 190, the running time of our Game Algorithm and Progressive Algorithm both decrease. The reason is that, when the budget constraint is relaxed, the game in the Game Algorithm is easier to reach the Nash equilibrium and the Progressive Algorithm costs less time to alter the ring signature after the $\delta$-KP for the budget constraint.

\begin{figure}[t!]\centering
	\subfigure{
		\scalebox{0.3}[0.3]{\includegraphics{figures/bar.png}}}\hfill
	\addtocounter{subfigure}{-1}
	\subfigure[][{\scriptsize Diversity}]{
		\scalebox{0.3}[0.3]{\includegraphics{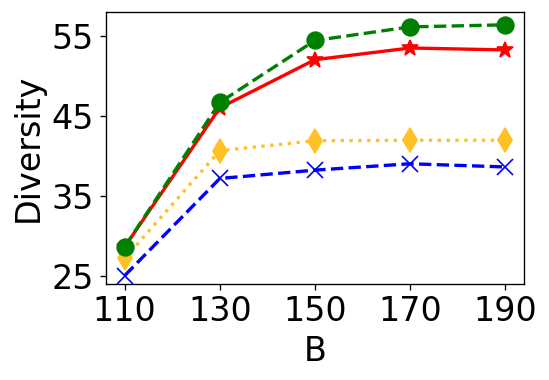}}
		\label{subfig:budget_score}}
	\subfigure[][{\scriptsize Running Time}]{
		\scalebox{0.3}[0.3]{\includegraphics{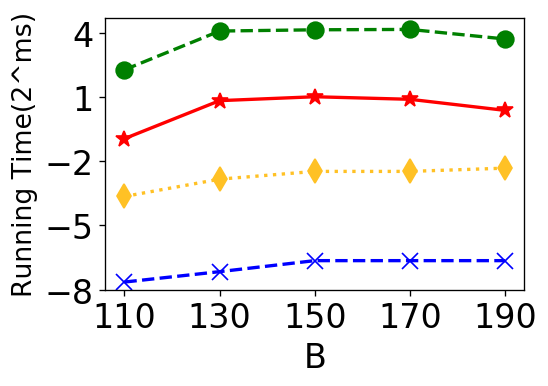}}
		\label{subfig:budget_time}}\figureCaptionMargin
	\caption{\small Effect of the Budget (Synthetic)}\figureBelowMargin
	\label{fig:budget}
\end{figure}

\begin{figure}[t!]\centering
	\subfigure{
		\scalebox{0.3}[0.3]{\includegraphics{figures/bar.png}}}\hfill
	\addtocounter{subfigure}{-1}
	\subfigure[][{\scriptsize Diversity}]{
		\scalebox{0.3}[0.3]{\includegraphics{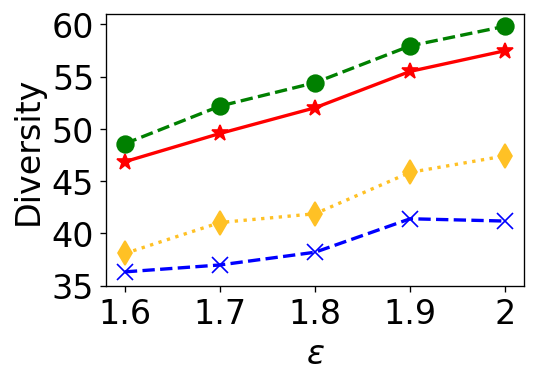}}
		\label{subfig:level_score}}
	\subfigure[][{\scriptsize Running Time}]{
		\scalebox{0.3}[0.3]{\includegraphics{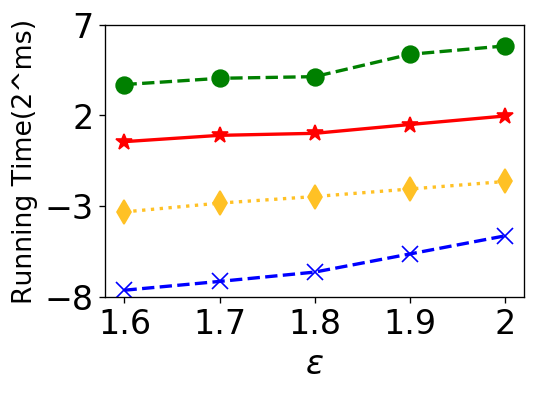}}
		\label{subfig:level_time}}\figureCaptionMargin
	\caption{\small Effect of the CI Level (Synthetic)}\figureBelowMargin
	\label{fig:level}
\end{figure}

\noindent\textbf{Effect of the CI Level, $\epsilon$.} Figure~\ref{fig:level} illustrates the experimental result on different CI levels, $\epsilon$, from 1.6 to 2. In Figure~\ref{subfig:level_score}, when the $\epsilon$ increases, the diversities of the new ring signatures that generated by the four approaches also increase. The reason is that the increase of $\epsilon$ makes the $\epsilon$-CIK constraint more relaxed and the new ring signature has more valid modules. As shown in Figure~\ref{subfig:level_time}, when the $\epsilon$ increases, the running time of four approaches also increases. Because the increase of $\epsilon$ let the new ring signature has more valid modules, which thus leads to the higher complexity of the CIA-MS-DS problem and the increase of the running time.

\noindent\textbf{Effect of the Range of each Module's Degree, $[d^-, d^+]$.} Figure~\ref{fig:degree} illustrates the experimental results on different ranges, $[d^-, d^+]$, of each module's degree, from [1,9] to [1,5]. In Figure~\ref{subfig:degree_score}, the diversities of the new ring signatures that generated by our four approaches increase, when the average value of modules' degrees decreases. The reason is that, when the average value of modules' degree is higher, it is more difficult to satisfy the $\epsilon$-CIK constraint. In other words, the new ring signature has fewer valid modules. In Figure~\ref{subfig:degree_time}, the running time of our four approaches increases when the average value of modules' degree decreases. The reason is that, when the average value of modules' degree is higher, the new ring signature has fewer valid modules, which thus leads to lower complexity of the CIA-MS-DS problem and the decrease of the running time.

\noindent\textbf{Effect of the Range of each Module's $Pr_{max}$, $[PM^-, PM^+]$.} Figure~\ref{fig:pr} illustrates the experimental results on different ranges, $[PM^-, PM^+]$, of each module's $pr_{max}$, from [0.1,0.2] to [0.1,0.8]. In Figure~\ref{subfig:pr_score}, the diversities of the new ring signatures that generated by our four approaches decreases, when the range is wider. The reason is that, when the range of modules' size is wider, it is more difficult to satisfy the $\epsilon$-CIK constraint. In other words, the new ring signature has fewer valid modules. In Figure~\ref{subfig:pr_time}, the running time of our four approaches decreases when the difference of modules' $Pr_{max}$ increases. The reason is that, when the range of modules' $Pr_{max}$ is wider, the new ring signature has less valid modules, which thus leads to the lower complexity of the CIA-MS-DS problem and the decrease of the running time. Specifically, compared with the Game Algorithm, the Progressive Algorithm is more sensitive about the change of the range. The reason is that, when the range is wider, the cardinality of $M_{i,j}$ is smaller and the running time of the $\delta$-KP Algorithm is smaller. 

\begin{figure}[t!]\centering
	\subfigure{
		\scalebox{0.3}[0.3]{\includegraphics{figures/bar.png}}}\hfill
	\addtocounter{subfigure}{-1}
	\subfigure[][{\scriptsize Diversity}]{
		\scalebox{0.3}[0.3]{\includegraphics{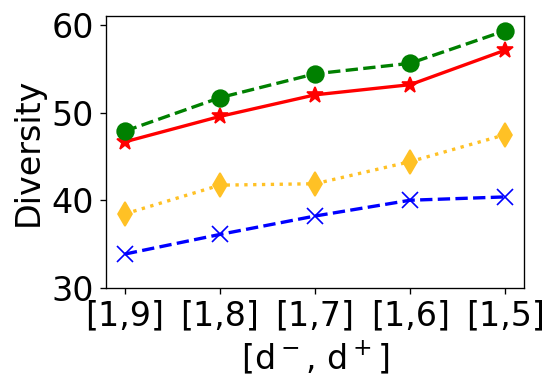}}
		\label{subfig:degree_score}}
	\subfigure[][{\scriptsize Running Time}]{
		\scalebox{0.3}[0.3]{\includegraphics{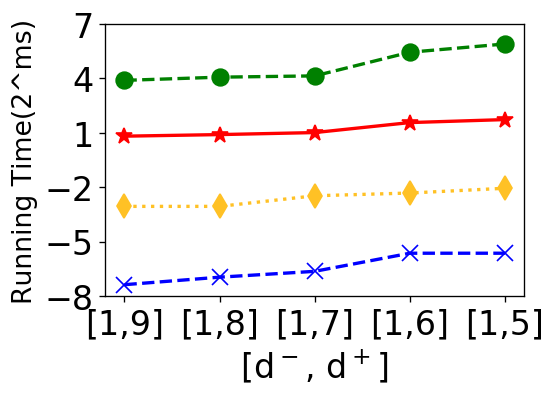}}
		\label{subfig:degree_time}}\figureCaptionMargin
	\caption{\small Effect of the Range of each Module's Degree (Synthetic)}\figureBelowMargin
	\label{fig:degree}
\end{figure}

\begin{figure}[t!]\centering
	\subfigure{
		\scalebox{0.3}[0.3]{\includegraphics{figures/bar.png}}}\hfill
	\addtocounter{subfigure}{-1}
	\subfigure[][{\scriptsize Diversity}]{
		\scalebox{0.33}[0.33]{\includegraphics{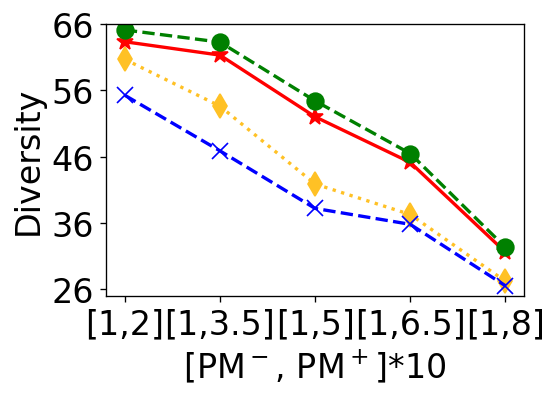}}
		\label{subfig:pr_score}}
	\subfigure[][{\scriptsize Running Time}]{
		\scalebox{0.33}[0.33]{\includegraphics{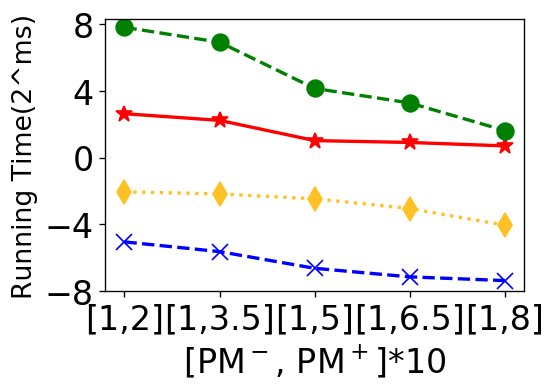}}
		\label{subfig:pr_time}}\figureCaptionMargin
	\caption{\small Effect of the Range of each Module's $Pr_{max}$ (Synthetic)}\figureBelowMargin
	\label{fig:pr}
\end{figure}
	
\end{document}